\documentclass{article}
\usepackage{macros}
\usepackage[margin=1.25in]{geometry}
\usepackage{physics}
\usepackage{quantikz}
\usepackage{authblk}

\usepackage{amsthm,amsmath,amsfonts,amssymb}

\newcommand{\CNOT}{\mathrm{CNOT}}
\newcommand{\SWAP}{\mathrm{SWAP}}

\usepackage[backend=biber,
            style=trad-alpha,
            sorting=nty,
            doi=true]{biblatex}
\usepackage{csquotes}
\bibliography{refs}

\title{Fast Cliffords When Your Quantum Memory Is Full}

\author[1]{Marten Folkertsma\thanks{\texttt{m.j.folkertsma@uva.nl}}}
\author[2]{Ian Mertz\thanks{\texttt{iwmertz@iuuk.mff.cuni.cz}}}
\author[3]{Sergii Strelchuk\thanks{\texttt{Sergii.Strelchuk@cs.ox.ac.uk}}}
\author[3]{Sathyawageeswar Subramanian\thanks{\texttt{Sathya.Subramanian@cs.ox.ac.uk}}}
\affil[1]{\small \textit{University of Amsterdam, QuSoft}}
\affil[2]{\small \textit{Charles University, Prague, Czech Republic}}
\affil[3]{\small \textit{Department of Computer Science, University of Oxford, Parks Rd, Oxford OX1 3QG, United Kingdom}}

\date{}

\begin{document}

\maketitle

\begin{abstract}
Additional qubits can reduce the depth of a quantum circuit by providing workspace for parallel computation, but standard constructions assume that this workspace is initialized in a known state. 
In this work we study catalytic implementations, i.e. asking whether dirty qubits can instead be used provided that their joint state including any entanglement with other registers is restored exactly at the end of the computation.

We show that every $n$-qubit Clifford circuit has a catalytic implementation
of depth $O(\log n)$ using $O(n^2/\log^2 n)$ catalytic qubits and no clean
qubits, matching the asymptotic depth achievable when clean workspace is available.
We extend this
approach to diagonal elements of any fixed level $C_k$ of the Clifford
hierarchy, which admit catalytic implementations of depth $O(\log(n+1))$
with $O(n^k/\log(n))$ gates and $O(n^k/\log^2(n))$ catalytic qubits, as well as to semi-Clifford Gates.

\end{abstract}

\setcounter{tocdepth}{2}
\tableofcontents

\section{Introduction}

Auxiliary qubits are a basic resource in quantum computing. By storing intermediate information, they allow different parts of a computation to proceed in parallel, and many circuit-synthesis results achieve depth reduction by trading in additional workspace for fewer sequential layers. This workspace, however, is typically assumed to be initialized in a known state. This assumption introduces an important distinction between available qubits from available temporary memory. In particular, a device may contain many idle qubits that cannot be treated as \emph{clean} ancillas for use as workspace because they already store quantum information that must be preserved.

This motivates a different notion of workspace, namely qubits that may already contain arbitrary quantum information, but can be used during the computation provided that their state is restored exactly at the end. We call such workspace catalytic. 

Computing with non-initialized memory has been investigated in several forms. 
In space-bounded computation, catalytic memory has developed into a broader model in which a computation may use a full auxiliary memory provided that it is restored at the end \cite{BuhrmanCleveKouckyLoffSpeelman14}, with subsequent work showing that such memory can substitute for or augment other computational resources and can yield nontrivial space and time savings \cite{AgarwalaMertz25,CookLiMertzPyne25,KouckyMertzPyneSami25, CookPyne26,ChakrabortyDattaKusreMukhopadhyaySinhababu26,ChmelDudejaKouckyMertzRajgopal26}. A quantum analogue of catalytic space was introduced in \cite{BuhrmanFolkertsmaMertzSpeelmanStrelchukSubramanianTupker25}, and related in-place techniques have also been used in quantum circuit constructions~\cite{Remaud}. 

In a related line of work, quantum algorithms that do not require initialization of their auxiliary
registers have been studied for problems including Deutsch–Jozsa, Simon’s problem, and period finding~\cite{Chi01,Chi05}, and Takahashi and Tani have studied the power of uninitialized qubits in shallow quantum circuits~\cite{TakahashiTani2021}. These models allow auxiliary qubits to start in unknown states, but do not require them to be restored for every initial state, in contrast to the quantum catalytic space model of \cite{BuhrmanFolkertsmaMertzSpeelmanStrelchukSubramanianTupker25}. Thus it remains less understood whether such information-bearing auxiliary memory can replace the clean ancillas used to reduce the depth of a prescribed quantum circuit, while preserving every
state of the auxiliary register. 

For $\CNOT$ and Clifford circuits, Jiang et
al.~\cite{doi:10.1137/1.9781611975994.13} established an optimal trade-off
between depth and the number of initialized auxiliary qubits, showing in particular that sufficiently many clean ancillas can be used to reduce arbitrary $n$-qubit Clifford circuits to logarithmic depth. Their construction therefore gives a precise benchmark for the power of clean workspace. We ask whether the same depth reduction remains possible when the initialized ancillas are replaced entirely by catalytic qubits whose initial state is unknown, arbitrary, and must be preserved.

Recent work already gives partial affirmative answers to this question. Li, Tian, He, and Sun show that any $n$-qubit CNOT circuit can be implemented in depth $O(\log n)$ using $O(n^2)$ dirty ancillas, via a decomposition into bipartite CNOT circuits~\cite{LiTianHeSun2025hammingweight}. Related recent work of Du, Cheng, and Ma develops a general catalytic compiler for bounded-fan-in, bounded-fan-out XOR computations, preserving their asymptotic depth while replacing clean intermediate storage by dirty workspace~\cite{DuChengMa2026lowdepthrandomunitaries}. Our results complement these recent developments by recovering the full depth--workspace tradeoff available with initialized ancillas: every $n$-qubit Clifford circuit admits a catalytic implementation of depth $O\left(\frac{n}{s\log n}\right)$ using $O(sn)$ catalytic qubits, for $1\leq s\leq n/\log^2 n$. In particular, achieving depth $O(\log n)$ requires only $O\left(n^2/\log^2 n\right)$ catalytic qubits and no clean qubits. 

Our approach is structurally different from the preceding constructions. Starting from the parallel CNOT compiler of Jiang et al, we cycle the catalytic register through a fixed family of linear transformations whose contributions cancel the dependence on its unknown initial state, and then use a factorization due to Urschel~\cite{urschel2023} along with the Clifford normal form of Aaronson
and Gottesman~\cite{AaronsonGottesman2004} to eliminate the clean output register altogether. This approach allows us to extend to diagonal gates in the Clifford hierarchy using
the classification of Cui, Gottesman, and
Krishna~\cite{CuiGottesmanKrishna2017} and exact phase synthesis based on
Barenco et al.~\cite{BarencoEtAl1995}.

These results show that clean workspace can be replaced by catalytic memory for broad classes of quantum circuits, while retaining the depth advantages provided by initialized workspace. They also suggest a broader question: how far can this replacement principle be pushed for general quantum computation?

\subsection{Our results}
Our main theorem shows that the depth--workspace tradeoff for Clifford circuits can be achieved using catalytic workspace in place of clean auxiliary qubits.

\begin{theorem}
\label{thm:catalytic-clifford}
Let $n\geq 2$ and let $s$ be an integer with
$1\leq s\leq n/\log^2 n$.
For every $n$-qubit Clifford circuit $\mathcal C$, there exists
a catalytic Clifford circuit $\mathcal C_c$ of depth
$O\!\left(\frac{n}{s\log n}\right)$
using $O(sn)$ catalytic qubits and no clean qubits.
\end{theorem}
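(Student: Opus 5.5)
We follow the route sketched in the introduction: reduce to CNOT (linear reversible) circuits, make the Jiang et al.\ parallel CNOT compiler catalytic, and then remove its clean output register.

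\textbf{Reduction to CNOT circuits.} By the Aaronson--Gottesman normal form, every $n$-qubit Clifford $\mathcal C$ can be written, up to global phase, as a constant number of layers. Each layer is a single-qubit layer (H, S, Pauli), a CZ-block, or a CNOT-block. A CZ-block $\prod_{i<j} CZ_{ij}^{b_{ij}}$ is conjugate by a linear map to phase layers, or can be handled directly: writing $B=LL^{T}+D$ over $\mathbb F_2$, the block becomes a CNOT circuit, a layer of $S$/$Z$ gates, and the inverse CNOT circuit. Single-qubit layers have depth $1$. So it suffices to implement an arbitrary $A\in GL_n(\mathbb F_2)$, acting as $\ket{x}\mapsto\ket{Ax}$, catalytically in depth $O(n/(s\log n))$ with $O(sn)$ catalytic qubits.

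\textbf{Catalytic linear maps with an output register.} Jiang et al.\ compute $\ket{x}\ket{0}\mapsto\ket{x}\ket{Ax}$ by splitting $x$ into blocks of size about $\log n$. For each block they copy $x$ into $s$ copies with fanout trees. They then write all $2^{\log n}$ subset-parities of each block into workspace (Gray-code style), and XOR the needed parities into the output rows. The parities are computed in parallel across the $s$ copies, which gives depth $n/(s\log n)$ per target group. All of these steps are XOR operations $\ket{u}\mapsto\ket{u\oplus Mx}$ on workspace.

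To make them catalytic, we replace each ``compute into clean, use, uncompute'' with the standard catalytic trick. Let the workspace hold an unknown $c$. We apply $w\mapsto w\oplus Mx$, then the use step $y\mapsto y\oplus Nw$, then uncompute $w\mapsto w\oplus Mx$, then the use step again. The target receives $N(c\oplus Mx)\oplus Nc=NMx$, and $w$ returns to $c$. This holds for every basis state of the catalyst, so by linearity it holds for arbitrary, entangled catalyst states. The cost is a constant-factor increase in depth. The fanout copies are handled in the same way: XOR $x$ into a dirty copy and XOR it out at the end. Since every step is a permutation of computational basis states that is affine in $c$ and cancels, the catalyst is restored exactly. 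This gives a catalytic implementation of $\ket{x}\ket{y}\mapsto\ket{x}\ket{y\oplus Ax}$, where $y$ is itself arbitrary.

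\textbf{Eliminating the output register (main obstacle).} We still need the in-place map $x\mapsto Ax$, and the target register $y$ above is external. Our plan is to use Urschel's factorization, which writes $A$ as a constant number of factors of block form $\begin{pmatrix} I & B\\ 0& I\end{pmatrix}$ or $\begin{pmatrix} I & 0\\ C& I\end{pmatrix}$ after a split of the $n$ qubits into two halves, possibly with a permutation, which has depth $O(1)$ via swaps. Each such factor is exactly $\ket{x_1}\ket{x_2}\mapsto\ket{x_1\oplus Bx_2}\ket{x_2}$, an XOR of a linear function of one half into the other half. This is the form handled in the previous step, with the first half playing the role of $y$ and no clean output. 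Composing a constant number of such factors yields the in-place map with the same asymptotic depth and $O(sn)$ catalytic qubits.

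The delicate points are, first, checking that Urschel's factorization really uses $O(1)$ triangular block factors over $\mathbb F_2$ for every invertible $A$. Second, we must verify that the depth accounting of Jiang et al.\ survives when the parity table lives in dirty memory and each use must be doubled. The constraint $s\le n/\log^2 n$ keeps the fanout-tree depth $O(\log n)$ dominated by $n/(s\log n)$.
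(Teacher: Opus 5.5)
Your overall architecture is the same as the paper's. You reduce through the Aaronson--Gottesman normal form to a constant number of $\CNOT$ circuits, build a catalytic addition $\ket{x}\ket{y}\mapsto\ket{x}\ket{y+Mx}$ with an arbitrary target, and then apply Urschel's six-factor decomposition on two halves of the input to work in place. Two small points here. No permutation is needed, since every matrix in $\operatorname{GL}(n,2)$ has determinant one, and odd $n$ is handled by padding with one catalytic qubit. Your CZ-block detour is also unnecessary, because the cited normal form has only $H$, $S$ and $\CNOT$ layers, though your decomposition $B=LL^{\mathsf T}+D$ with $L$ unitriangular is valid.

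The genuine difference is in how the addition is made catalytic. The paper treats Jiang et al.'s circuit as a black box. It uses only that the circuit acts as $\ket{x}\ket{y}\ket{z}\mapsto\ket{x}\ket{y+Mx+Bz}\ket{z}$ for some unknown $B$. It then cancels $Bz$ by composing three copies conjugated by $\mathbf P^j$, $j=0,1,2$, so that $M+M+M=M$ and $B(I+\mathbf P+\mathbf P^2)=0$. This costs a factor three plus $O(1)$ layers and never reopens Jiang's depth accounting.

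You instead open the circuit and apply the classical compute--use--uncompute--use trick stage by stage. This is the technique the paper itself uses in the Toffoli application and for the input copies $u_T$ in the diagonal section. It is more elementary, but it carries exactly the burden you flag. Each compute stage must act as a pure XOR $w\mapsto w\oplus Fx$ on dirty memory, in the original depth. The number of nested stages must also be $O(1)$, since each level doubles everything below it.

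In particular, the fanout trees and any chained parity computations act on dirty memory as $w\mapsto Gw\oplus Fx$ rather than $w\mapsto w\oplus Fx$. They cannot be fixed by applying the doubling trick level by level through a tree of logarithmic depth, since that costs a polynomial factor. Instead they must be conjugated by the inverse in-place circuit, as in Lemma~\ref{lem:dirty-target-fanout}, where $L(L^{-1}z+xe_1)=z+x\mathbf 1$. With that done your argument goes through, but as written this step is asserted rather than verified.
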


As an application of these catalytic circuits, we show in
Section~\ref{sec:Toffoli_construction} that the Toffoli construction
of~\cite{GKZ25}, which uses exponentially fewer $\mathsf{T}$ gates at the
cost of being approximate, can be made optimal in depth in the presence of catalytic qubits.

\begin{theorem}[Informal; see Theorem~\ref{thm:catalytic-or}]
  An $\epsilon$-approximate $n$-qubit Toffoli gate can be implemented in
  depth $O(\log n)$ using $O(\log(1/\epsilon))$ $\mathsf{T}$ gates,
  $O(n\log(1/\epsilon))$ catalytic qubits and $O(\log(1/\epsilon))$ clean
  qubits.
\end{theorem}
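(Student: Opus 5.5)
The plan is to take the approximate Toffoli of \cite{GKZ25} as a black box at the level of its logical structure, and to show that every step other than the small non-Clifford core is a linear (CNOT) computation that can be made catalytic in logarithmic depth. Recall the structure. Write $x\in\{0,1\}^n$ for the controls, and set $k=O(\log(1/\epsilon))$. Sample $r_1,\dots,r_k\in\{0,1\}^n$ at random. If $x\neq 0$, then all parities $r_i\cdot x$ vanish with probability $2^{-k}\le\epsilon$, whereas for $x=0$ they always vanish. Hence $\mathrm{OR}(x)$ agrees, except with probability $\epsilon$, with the OR of the $k$ parities, and the $n$-controlled Toffoli is the conjugate of the latter by $X$ gates and negation. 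I would use exactly the notion of $\epsilon$-approximation of \cite{GKZ25}: averaging over the random choice of the $r_i$ gives a channel that is $\epsilon$-close to the Toffoli. No fixed $k\times n$ matrix can work for every $x$ when $k<n$, so some averaging of this kind is unavoidable.

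The circuit has three stages.
\begin{itemize}
\item[(i)] Compute $y_i=r_i\cdot x$ into $k$ clean qubits.
\item[(ii)] Apply a $k$-controlled Toffoli (equivalently, an OR) on $y$ into the target. This uses $O(k)$ $\mathsf T$ gates and $O(k)$ clean ancillas, and has depth $O(\log k)$ via the standard log-depth AND tree, in which each pairwise AND is a Toffoli with $O(1)$ $\mathsf T$ gates.
\item[(iii)] Uncompute (i).
\end{itemize}
Stage (ii) accounts for all $\mathsf T$ gates and all clean qubits claimed in the statement. The remaining task is to implement the linear map $y\mapsto y\oplus Rx$ in depth $O(\log n)$ using only $O(nk)$ catalytic qubits.

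For stage (i) I would use the toggling trick that underlies our Clifford construction. Arrange a catalytic register $c=(c_{i,j})_{i\le k,\,j\le n}$ in an unknown state. The sequence is as follows.
\begin{itemize}
\item[(a)] Apply $F$: $c_{i,j}\mathrel{\oplus}= r_{i,j}x_j$. This is a fan-out of each $x_j$ to at most $k$ targets, done by a doubling tree in depth $O(\log k)$. The tree XORs into the dirty targets, and since it is itself linear and reversible it can be run in place.
\item[(b)] Apply $P$: in each row $i$, do an in-place XOR-reduction tree, so that $c_{i,1}$ ends up holding $\bigoplus_j c_{i,j}$. This has depth $O(\log n)$. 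Then copy $c_{i,1}$ into $y_i$ with a CNOT, and undo the reduction tree.
\item[(c)] Apply $F^{-1}=F$.
\item[(d)] Repeat (b).
\end{itemize}
Now track the contributions. Step (b) adds $\bigoplus_j c_{i,j}\oplus r_i\cdot x$ to $y_i$, and step (d) adds $\bigoplus_j c_{i,j}$. The unknown catalyst contributions cancel, leaving $y_i\mathrel{\oplus}= r_i\cdot x$, and $c$ is returned exactly because every operation on it is undone. The register $x$ is only used as a control, so the map is a permutation of basis states that is the identity on $c$ and the desired map on $(x,y)$. It therefore preserves any entanglement of $c$ with the outside world. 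The total depth is $O(\log n+\log k)=O(\log n)$ (when $\epsilon\ge 2^{-\mathrm{poly}(n)}$), and the number of catalytic qubits is $nk=O(n\log(1/\epsilon))$. Alternatively, this step could be obtained by invoking Theorem~\ref{thm:catalytic-clifford} on the CNOT circuit computing $Rx$. However, the direct construction gives the sharper $O(nk)$ catalyst count, which is why I would use it.

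The step I expect to be the main obstacle is not the catalysis but making the approximation statement precise and uniform. Stage (ii) must act coherently on superpositions of $x$. For a fixed choice of $R$, the circuit is exact on every $x\notin\ker R\setminus\{0\}$ and wrong on the rest of the kernel. The guarantee must therefore be phrased, as in \cite{GKZ25}, for the channel averaged over $R$, or with $R$ supplied by classical randomness, and then one bounds its distance from the Toffoli by $2^{-k}$. I would reproduce this bound from \cite{GKZ25} unchanged. The only new check is that replacing its clean parity computation with the catalytic version above yields literally the same unitary on the system registers for each fixed $R$, and this follows from the cancellation identity just described.
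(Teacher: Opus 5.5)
Your proposal is correct and is essentially the paper's proof of Theorem~\ref{thm:catalytic-or}: toggle $x$ into rows of the catalyst with a dirty-target fan-out (Lemma~\ref{lem:dirty-target-fanout} is the precise form of your ``run in place'' remark), compute each parity with an in-place CNOT tree into a clean qubit, toggle $x$ back out and recompute so that the catalyst's parity cancels, and then reuse the \cite{GKZ25} averaging argument unchanged, since each fixed-$R$ circuit is the same unitary on the system registers and the identity on the catalyst. The differences are cosmetic: you mask the fan-out by $r_{i,j}$ and take full-row parities, whereas the paper copies all of $x$ and takes the parity over $S_l$; you run all $k$ parities in one batch, whereas the paper allows $s<k$ rows for a depth--space tradeoff; and the \cite{GKZ25} diamond bound is $4\cdot 2^{-k}$ rather than $2^{-k}$, which is why the paper takes $k=\lceil\log(1/\epsilon)\rceil+2$.
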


The same catalytic depth reduction extends beyond Clifford circuits to diagonal gates in the Clifford hierarchy. For a fixed level $k$, the phase of such a gate is described by a polynomial of degree at most $k$, and we exploit this bounded structure to cancel the dependence on the initial catalyst state. This gives logarithmic depth with polynomial catalytic workspace.

\begin{theorem}[{See Theorem~\ref{thm:catalytic-diagonal}}]
\label{thminf:catalytic-diagonal}
Fix $k\geq 2$, and let $s\geq 1$ be an integer. For every $n$-qubit diagonal unitary $D\in\mathcal{C}_k$ in the $k$th level of the Clifford hierarchy, there is a catalytic Clifford circuit $\mathcal C_D$ over $G$ using $O(sn)$ catalytic qubits and no clean qubits of depth $O\left(\log(n+1)+\frac{n^{k-1}}{s\log(n+1)}\right)$. In particular, there is an implementation of depth $O(\log(n+1))$ using $O(n^k/\log^2(n+1))$ catalytic qubits, with size $ O\left(\frac{n^k}{\log(n+1)}\right)$.
\end{theorem}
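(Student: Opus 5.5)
The plan is to put the phase polynomial of $D$ into a form where each phase term can be computed catalytically and in parallel. By the classification of Cui, Gottesman and Krishna, a diagonal $D\in\mathcal C_k$ is, up to global phase, a product of phase gates $\exp\bigl(2\pi i\,\theta_S\,\bigoplus_{j\in S}x_j\bigr)$. Each $\theta_S$ is a dyadic rational with denominator dividing $2^{k}$, and each parity set $S$ has $|S|\le k$. Equivalently, we write the phase as $\sum_S c_S\,\chi_S(x)/2^{|S|}$ over monomial or parity terms. There are at most $N=O(n^k)$ such terms, and each one is a single-qubit gate $Z^{1/2^{m}}$ (in $G$, i.e.\ some $\mathsf T$-type rotation) applied to the parity $\ell_S(x)=\bigoplus_{j\in S}x_j$. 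So the task reduces to applying a prescribed single-qubit diagonal rotation $R_S$ to each of $N$ parities $\ell_S(x)$ of the input. This parallels the exact phase synthesis of Barenco et al.

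\emph{Catalytic parity trick.} For a catalyst qubit $c$ in an unknown state, use $\CNOT$ fan-ins to map $c\mapsto c\oplus \ell_S(x)$. Applying $R_S$ to $c$ yields phase $\theta_S(c\oplus\ell)$. We then uncompute, apply $R_S^{-1}$ to $c$ alone, which yields phase $-\theta_S c$, and we do not recompute. The identity $c\oplus\ell = c+\ell-2c\ell$ shows that the phase $\theta_S[(c\oplus\ell)-c]$ is not $\theta_S\ell$. Instead, we use the standard fix of applying $X$-conjugated rotations: the phase is $\theta_S\ell$ when $c=0$ and $-\theta_S\ell$ when $c=1$. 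Following the paper's cycling idea, we remove this dependence by the identity
\[
2\bigl(\ell\oplus c\bigr)-\dots,
\]
or more simply via $\ell = \tfrac12\bigl[\ell+c-(\ell\oplus c)\bigr]+\ell c$. Since $\ell c$ is of higher degree, it is cleaner to use the parity-phase identity
\[
\theta\,\ell=\tfrac{\theta}{2}\bigl[(\ell\oplus c)+\ell-c-\ldots\bigr].
\]
I would therefore instead apply $R_S$ directly on a parity computed in place on the data register. Concretely, $\ell_S(x)$ is written into one of the \emph{data} qubits $x_j$, $j\in S$, by $\CNOT$s among data qubits, which is reversible and catalyst-free. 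The role of the catalysts is then only to supply fan-out, so that many distinct parities can be formed simultaneously.

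\emph{Parallel structure.} The $N$ parities are split into batches of size $\Theta(sn)$. For one batch we need a linear map $x\mapsto(\ell_S(x))_S$ placed onto catalyst qubits in a way that is XOR-into-catalyst and undone afterward. The batch step is: (i) XOR the fan-out of $x$ into the catalysts, copying each $x_j$ $O(s)$ times via the catalytic $\CNOT$ compiler of Theorem~\ref{thm:catalytic-clifford} in depth $O(\log n)$, using the toggling trick $U\,X_{\ell}\,U^{-1}$; (ii) XOR parities of size $\le k$ in constant depth; (iii) apply the rotations $R_S$ on registers $c\oplus\ell$. To cancel the unknown $c$, we perform the rotation twice with opposite sign, once on $c\oplus\ell$ and once after toggling with $c$ alone. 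Higher-degree cross terms are absorbed by the inclusion–exclusion expansion of $\theta\,(c\oplus\ell)$ as a degree-$\le k$ polynomial, whose $c$-dependent monomials are again diagonal in $\mathcal C_k$ of lower weight in $x$. These are cancelled by induction on $k$, which is where the bounded degree is used. The fan-out in (i) is done once at cost $O(\log n)$. Each batch then costs $O(1)$ depth, or $O(\log n)$ amortised. Reusing the $O(\log n)$-depth compiler across batches of $\Theta(s n\log n)$ terms gives total depth $O\bigl(\log(n+1)+n^k/(sn\log(n+1))\bigr)=O\bigl(\log(n+1)+n^{k-1}/(s\log(n+1))\bigr)$, with $O(sn)$ catalysts. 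Choosing $s=n^{k-1}/\log^2(n+1)$ gives the stated corollary. The size is $O(N/\log n)$ if parities are shared via the Jiang et al.\ $\log n$-grouping of inputs.

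The main obstacle is making the cancellation of the catalyst's unknown value exact for rotations finer than $Z$. For $Z$ itself the phase is linear in $c\oplus\ell$ mod 2, but $\mathsf T$-type phases are not. I would first attempt the induction on $k$: the leftover $c$-dependent terms lie in $\mathcal C_{k-1}$ and are corrected recursively.
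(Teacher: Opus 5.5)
Your proposal has a genuine gap at the step you flag yourself, and a second gap in the resource count.

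\textbf{The cancellation.} Write $\theta=a/2^r$ in units of $2\pi$. Rotating $c\oplus\ell$ by $\theta$ and then $c$ by $-\theta$ gives $\theta[(c\oplus\ell)-c]=\theta\ell-2\theta c\ell$, so the residual is $\exp(-2\pi i\,a\,c\,\ell/2^{r-1})$. This is not in $\mathcal C_{k-1}$. It is a degree-two phase with denominator $2^{r-1}$, so it sits at the same level $r$. For $\ell=x_1$, $a=1$, $r=k=3$ it is a controlled-$S^\dagger$ between catalyst and input, which is not Clifford; for $r=k=2$ it is a $\mathrm{CZ}$, which is not Pauli. It also involves the whole parity $\ell$, so its $x$-weight is not lower. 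Your induction on $k$ therefore has nothing to act on. Up to sign, your gadget is the $t=1$ case of Lemma~\ref{lem:catalytic-phase-gadget} with $\ell$ as label. That lemma needs $2^r\mid a2^t$, so $t=1$ only handles denominator $2$, which is why your version works for $Z$ and fails beyond it. The paper uses a block of $t\geq r$ catalysts and cycles it through all nonzero strings with a matrix $A$ of order $2^t-1$. The accumulated sum $\sum_j\bigl[f(A^jz)-f(A^jz+\bar x)\bigr]$ is then $f(\bar x)$ or $(1-2^t)f(\bar x)$, and after multiplying by $a$ these agree modulo $2^r$. Your idea can also be repaired by recursing on the precision rather than the level. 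At step $j$, take a fresh catalyst $c_j$, XOR $\ell$ into it, apply the monomial phase $a\,c_1\cdots c_j/2^{r-j+1}$, XOR $\ell$ out, and apply the inverse phase. The contributions telescope to $a\ell/2^r-a\,c_1\cdots c_r\ell\equiv a\ell/2^r\pmod 1$. The proposal contains no such argument, though. Your fallback of writing $\ell$ in place on data qubits is correct, but it holds at most $n$ parities at once and so gives no parallelism.

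\textbf{The resource bounds.} Even with a working gadget, rotating parities of size at most $k$ cannot give the stated bounds. For worst-case $D$, each size-$k$ parity $\chi_S$ with $a_S=1$ has coefficient $\pm 1/2^k$, so there are $\Theta(n^k)$ distinct non-Clifford rotations. In your scheme each is a separate single-qubit gate, and on $O(sn)$ qubits at most $O(sn)$ fit in a layer. This forces depth $\Omega(n^{k-1}/s)$ and size $\Omega(n^k)$, and depth $O(\log n)$ would need $\Omega(n^k/\log n)$ catalysts instead of $O(n^k/\log^2 n)$. In particular, a batch of $\Theta(sn\log n)$ terms cannot cost $O(1)$ depth on $O(sn)$ qubits, and grouping inputs as in Jiang et al.\ saves $\CNOT$s, not rotations. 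The missing idea is to cut the number of non-Clifford terms to $O(n^{k-1})$ and push the remaining cost into linear maps that the catalytic $\CNOT$ compiler compresses. The paper groups the degree-$k$ monomials under $(k-2)$-subsets $T$ and writes $p=\sum_T x_Tq_T$ with each $q_T$ quadratic. Dickson's theorem (Lemma~\ref{lem:dicksons}) then leaves only $O(n)$ products per group, in affine variables $y_T=M_Tx+c_T$. The $O(n^{k-2})$ maps $L_T$ are added into the phase blocks by Theorem~\ref{thm:catalytic-cnot}. They run in parallel through copy registers $u_T$, using $L_T(u_T+x)+L_Tu_T=L_Tx$. The $O(n^{k-1})$ lower-degree monomials are handled directly by Lemma~\ref{lem:catalytic-diagonal-monomials}.
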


We supplement this in Lemma~\ref{lem:diagonal-lower-bounds} with a lower bound showing that our construction achieves the optimal depth. The construction also yields the same asymptotic bounds for semi-Clifford unitaries given a Clifford--diagonal decomposition (see Corollary~\ref{cor:catalytic-semi-clifford}). 

When catalytic qubits are abundant, the remaining depth overhead of $O(\log(n))$
comes from implementing the fan-out gate. With native access to the fan-out gate,
the depth of these results reduces to $O(1)$, at the cost of an additional factor
$O(\log(n))$ in the number of catalytic qubits. This holds only in the regime of
$O(n^2/\log(n))$ catalytic qubits for Clifford unitaries, and $O(n^k/\log(n))$
catalytic qubits for diagonal gates in the $k$-th level of the Clifford
hierarchy (see Remark~\ref{rem:fanout}).

\subsection{Related work}
\paragraph{Classical catalysis.} Buhrman et al. introduced catalytic space as a model in which a computation may use an auxiliary memory containing arbitrary information, provided that the memory is restored at the end~\cite{BuhrmanCleveKouckyLoffSpeelman14}.
Subsequent work has shown that catalytic memory can increase computational power beyond ordinary low-space computation~\cite{BuhrmanCleveKouckyLoffSpeelman14,AgarwalaMertz25,ChakrabortyDattaKusreMukhopadhyaySinhababu26},
can replace other resources such as randomness and non-determinism 
\cite{CookLiMertzPyne25,KouckyMertzPyneSami25},
admits robust characterizations \cite{FolkertsmaMertzSpeelmanTupker25,KouckyMertzSami26},
and can be used to speed up low-space computation~\cite{CookPyne26,ChmelDudejaKouckyMertzRajgopal26}.
These techniques have been used in novel derandomization approaches to space complexity
\cite{LiPyneTell24,DoronPyneTellWilliams25} as well as breakthrough relationships between
time and space \cite{CookMertz25,Williams25,Shalunov26}
(for a survey on problems and techniques see \cite{Koucky16,Mertz23}).

\paragraph{Quantum catalysis.} Quantum computation has several related notions of non-clean workspace, but they impose different requirements. initialization-free quantum algorithms have been studied~\cite{Chi01,Chi05}, and Takahashi and Tani investigated uninitialized qubits in shallow circuits~\cite{TakahashiTani2021}.These works allow auxiliary qubits to begin in an unknown state, but do not formulate the requirement we need here that the entire auxiliary register be restored for every initial state. The quantum catalytic space model introduced by Buhrman et al. 
\cite{BuhrmanFolkertsmaMertzSpeelmanStrelchukSubramanianTupker25} adopts this stronger restoration requirement and develops its complexity-theoretic consequences; in particular, it gives an equivalent circuit formulation of quantum catalytic space. The usual information-theoretic notion of quantum catalysis is different again, as there the catalyst is typically a fixed state specifically chosen for the transformation, whereas our catalyst may be arbitrary and must be preserved exactly. 

Dirty ancillas have long been used as temporary workspace in quantum circuit synthesis. Numerous constructions for multi-controlled gates, arithmetic circuits, and related primitives exploit auxiliary qubits whose initial state is unknown but restored before the computation finishes. Representative techniques include toggle detection and, more recently, conditionally clean ancillas, which reduce or eliminate the overhead associated with borrowing dirty qubits \cite{Gid15,Gid17,KhattarGidney2025}. These constructions demonstrate that dirty workspace can successfully replace clean ancillas in specific circuit primitives, but do not study the global depth--workspace tradeoffs that we consider.

Catalysis in the quantum circuit model notably allows for in-place computations. This in-place catalysis is utilized in \cite{Remaud} to construct efficient algorithms for multi-controlled Tofolli gates and addition circuits.
By contrast, classical in-place computation is a provably impossible challenge for
even relatively simple functions; for example, assuming any reasonable cryptography,
neither integer multiplication nor even constant-depth constant-locality circuits
can be computed by an in-place machine~\cite{CookGhentiyalaMertzPyneSheffield25}.

Li et al. obtain $O(\log n)$-depth implementations of arbitrary CNOT circuits using $O(n^2)$ dirty ancillas, while Du, Cheng, and Ma give a catalytic compiler for bounded-fan-in, bounded-fan-out XOR circuits ~\cite{LiTianHeSun2025hammingweight,DuChengMa2026lowdepthrandomunitaries}. Their constructions demonstrate that dirty workspace can support substantial depth reductions, but do not give the depth--workspace tradeoff of Theorem~\ref{thm:catalytic-clifford}. Our construction uses a different technique: the dependence on the initial work state is cancelled algebraically by cycling the work register through a fixed family of binary linear transformations, and a separate binary matrix factorization step to remove the clean output register. We then extend the same cancellation principle to diagonal gates in the Clifford hierarchy.

\subsection{A conjecture}
The results above suggest a broader question: can catalytic workspace replace clean workspace for general quantum circuits, while preserving efficient circuit size and incurring only a small increase in depth? We formulate this as the following conjecture.

\begin{conjecture}[Space collapse]
\label{conj:collapse}
Every $n$-qubit unitary $U$ with a size $s$, depth $d$, space $a$ implementation
over a universal gate set $G$ has a corresponding catalytic implementation of size
$\poly(s)$, depth $d\cdot\mathrm{polylog}(s)$, and catalytic space $\poly(n,s,a)$.

\end{conjecture}
There are several natural weakenings of this conjecture, for example by restricting the gate set or dropping the depth bound. The results of this paper establish such a statement for Clifford and semi-Clifford circuits. Determining whether a comparable collapse holds for general quantum circuits is a novel avenue for further work that appears to require new ways of using and restoring arbitrary quantum memory.

\section{Preliminaries}

\subsection{Quantum circuits}
Our base model is the usual notion of quantum circuits.
\begin{definition}[Quantum circuit]
\label{def:circuits}
A \emph{quantum circuit} is a sequence of gates from a fixed finite gate
set $G$, with each gate acting on a bounded number of qubits. Its
\emph{size} is the number of gates, its \emph{depth} is the number of
layers, and its \emph{width} is the total number of qubits. Gates in the
same layer act on disjoint sets of qubits.
\end{definition}

We impose no restriction on which pairs of qubits may interact. All
logarithms are to base two. We write $\mathbb F_2$ for the field with two
elements and use addition over $\mathbb F_2$ for bit strings.

Our results will pertain to the Clifford circuit class, which we will next define.

\begin{definition}[Pauli group]
The single-qubit Pauli group $P_1$ consists of the operators $I,X,Y,Z$ multiplied by
phases from $\{1,-1,i,-i\}$. The $n$-qubit Pauli group $P_n$ consists of
$n$-fold tensor products of $I,X,Y,Z$ with the same four possible overall
phases, and has $4^{n+1}$ elements.
\end{definition}

\begin{definition}[Clifford circuits]
The Clifford group $C_2$ consists of the $n$-qubit unitaries $U$ satisfying
$UPU^\dagger\in P_n$ for every $P\in P_n$. Up to a global phase, it is
generated by the gate set
\[
    G_c:=\{H,S,\CNOT\},
\]
where $H$ is the Hadamard gate, $S=\operatorname{diag}(1,i)$, and $\CNOT$
is the controlled-NOT gate. A circuit over $G_c$ is called a Clifford
circuit.
\end{definition}

The Clifford hierarchy extends this condition recursively by allowing a
Pauli operator to be mapped into the preceding level under conjugation.

\begin{definition}[Clifford hierarchy]
For a fixed number $n$ of qubits, the Clifford hierarchy is the sequence
$\{C_k\}_{k\geq 1}$ defined by
\[
    C_1=P_n,\qquad
    C_k=\{U\in\mathrm{U}(2^n):UPU^\dagger\in C_{k-1}
          \text{ for every }P\in P_n\},\quad k\geq 2.
\]
\end{definition}

The first two levels are groups, while the higher levels need not be.
We use the Clifford normal form in Section~\ref{sec:clifford-background}
and the diagonal normal form in Section~\ref{sec:diagonal-hierarchy}.

\subsection{Quantum catalytic circuits}
\label{sec:catalytic-models}

We study circuits whose work qubits may start in any quantum state
and must be restored after the computation. These are also called
dirty qubits. A clean work qubit starts and ends in $\ket{0}$.
\begin{definition}[Quantum catalytic circuits]
\label{def:quantum-cat-circ}
Let $C$ be a circuit whose purpose is to perform a unitary $U$ on a set of input registers.
A \emph{catalytic qubit} may start in any state $\ket{\tau}$ and
    satisfies
    \[
        C\bigl(\ket{\psi}\ket{\tau}\bigr)
        =(U\ket{\psi})\ket{\tau}
    \]
for every $\ket{\psi}$ and $\ket{\tau}$.

The same definitions apply to registers of several qubits. For a dirty
register, the circuit must also preserve its correlations with any
external register.

A \emph{quantum catalytic circuit} $C$ for a unitary $U$ acts on an input
register $I$ and a work register $W$ of dirty qubits. Writing $I_W$ for
the identity on $W$, it satisfies
\[
    C\bigl(\ket{\psi}_I\ket{\tau}_W\bigr)
    =(U\otimes I_W)\bigl(\ket{\psi}_I\ket{\tau}_W\bigr)
\]
for every input state and every work state. Thus $C=U\otimes I_W$, and
the identity also holds for entangled states of the two registers.
\end{definition}

Note that the extension of the quantum Turing Machine definition to the
catalytic regime is equivalent to \autoref{def:quantum-cat-circ} 
\cite{BuhrmanFolkertsmaMertzSpeelmanStrelchukSubramanianTupker25}.

We sometimes identify circuits up to one global phase. In that case,
the catalytic condition is
\[
    C=e^{i\theta}(U\otimes I_W),
\]
where $\theta$ is independent of both the input and the work state.
The following observation allows us to verify this condition on
computational basis states.

\begin{lemma}[Basis state criterion]
\label{lem:globalphase_basis_state}
Suppose that a circuit $C$ satisfies
\[
    C\ket{x}_I\ket{z}_W
    =e^{i\theta}(U\ket{x}_I)\ket{z}_W
\]
for every computational basis state $\ket{x}_I\ket{z}_W$, with the
same phase $e^{i\theta}$ for all $x,z$. Then
\[
    C=e^{i\theta}(U\otimes I_W).
\]
In particular, the circuit preserves every state of the work register,
including its entanglement with other registers.
\end{lemma}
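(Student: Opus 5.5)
By linearity, it suffices to compare the two operators $C$ and $e^{i\theta}(U\otimes I_W)$ on a basis. The computational basis states $\ket{x}_I\ket{z}_W$, with $x$ ranging over the input strings and $z$ over the work strings, form an orthonormal basis of the joint Hilbert space $\mathcal H_I\otimes\mathcal H_W$. The hypothesis says exactly that
\[
    C\ket{x}_I\ket{z}_W=e^{i\theta}(U\otimes I_W)\ket{x}_I\ket{z}_W
\]
for every such basis vector. The right-hand side uses $(U\otimes I_W)(\ket{x}\ket{z})=(U\ket{x})\ket{z}$. Two linear maps that agree on a basis are equal, so $C=e^{i\theta}(U\otimes I_W)$ as operators on $\mathcal H_I\otimes\mathcal H_W$.

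The one point that needs care is that the phase $e^{i\theta}$ is the same for all $(x,z)$. Suppose instead the phase could depend on the basis state, say $e^{i\theta_{x,z}}$. Then on a superposition $\sum_{x,z}\alpha_{x,z}\ket{x}\ket{z}$ the circuit would apply relative phases, giving $\sum_{x,z}\alpha_{x,z}e^{i\theta_{x,z}}(U\ket{x})\ket{z}$. This is in general not $(U\otimes I_W)$ applied to the state, and it could, for example, change the work register's coherences. With a uniform phase, linearity gives
\[
    C\sum_{x,z}\alpha_{x,z}\ket{x}\ket{z}=e^{i\theta}\sum_{x,z}\alpha_{x,z}(U\ket{x})\ket{z},
\]
so the identity holds for every state of $I\otimes W$, including entangled ones.

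For the final claim about correlations with an external register $E$ that the circuit does not touch, the operator acting on the whole system is $C\otimes I_E=e^{i\theta}(U\otimes I_W\otimes I_E)$. For mixed states we use purification, or equivalently apply the identity to density matrices: $\rho\mapsto C\rho C^\dagger=(U\otimes I_W)\rho(U\otimes I_W)^\dagger$, since the global phase cancels. Therefore the reduced state on $W\otimes E$ is unchanged whenever $U$ acts only on $I$, and in particular all entanglement between $W$ and other registers is preserved. This matches the definition of a quantum catalytic circuit, up to a global phase.
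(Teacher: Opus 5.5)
Your proof is correct and takes the same approach as the paper: agreement on the computational basis gives the operator identity by linearity, and tensoring with the identity on an external register gives preservation of entanglement. Your added remarks on why the phase must be uniform, and on the density-matrix picture, agree with the paper's own discussion after the lemma and go slightly beyond it.
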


\begin{proof}
The computational basis states span the joint input and workspace, so the operator identity follows by linearity. Tensoring this identity
with the identity on any external register gives the final assertion.
\end{proof}

Returning each work basis state up to a phase is not sufficient.
For example, $Z$ preserves each computational basis state up to a
phase but changes $\ket{+}$ to $\ket{-}$. The phase must be the same
for every work basis state.

When a circuit is applied conditionally, its overall phase becomes
a relative phase. Throughout the constructions that follow, we specify whether the circuit equals $U\otimes I_W$ exactly or only up to one
global phase.

\section{Technical overview}

A $\CNOT$ circuit acts as an invertible binary linear map on computational
basis states. For an $n\times n$ binary matrix $M$ and $n$-bit strings
$x,y$, the associated addition
\[
    \ket{x}\ket{y}\longmapsto\ket{x}\ket{y+Mx}
\]
is reversible even when $M$ is singular. This addition is the main step
in our construction. The parallel circuit of Jiang et al. restores its
work register, but the output can depend on the initial work string.
We cancel this dependence by combining three copies of their circuit
with changes of basis on the work register, as shown in
Lemma~\ref{lem:roots-of-unity-trick}.

To implement an invertible map on the input itself,
we decompose its
matrix into six additions of this form. Each addition uses one half of
the input as its control register and the other half as its target
register, so no separate clean output register is needed. The resulting
$\CNOT$ construction extends to Clifford circuits through the normal
form of Aaronson and Gottesman~\cite{AaronsonGottesman2004}. In particular,
Lemma~\ref{cor:catalytic-clifford} gives depth $O(\log n)$ with
$O(n^2/\log^2 n)$ catalytic qubits and no clean qubits.

For diagonal elements of the Clifford hierarchy, we must further cancel the dependence
of a phase on the work state. The normal form writes the required phase
as a sum of monomials of bounded degree and precision. For each monomial,
we compare the phase before and after adding the input to a work block,
then repeat this comparison over a cycle of binary linear maps. The
accumulated phase depends only on the input. Separate work blocks allow
the monomials to be implemented in parallel, with an additional logarithmic
depth to distribute shared input controls. Section~\ref{sec:diagonal-hierarchy}
gives the construction and its extension to semi-Clifford gates.

\section{Clifford circuits}
\label{sec:clifford-circuits}

We now give the Clifford construction outlined above, allowing every
catalytic qubit to start in an arbitrary state. We first recall the
Clifford normal form and the role of clean auxiliary qubits in the Jiang
construction, then show how to remove the initialisation requirements
from the work and output registers. Clifford unitaries are identified
up to one global phase throughout this section.

\subsection{Background}
\label{sec:clifford-background}

The normal form of Aaronson and Gottesman reduces Clifford synthesis
to a constant number of $\CNOT$ circuits separated by single-qubit gates.

\begin{theorem}[Clifford normal form~{\cite[Theorem~8]{AaronsonGottesman2004}}]
\label{thm:aaronson-gottesman}
Every Clifford circuit over $G_c$ can be converted
into an equivalent circuit with eleven rounds:
\[
    H-C-S-C-S-C-H-S-C-S-C,
\]
where each $H$ is a layer of Hadamard gates, each $S$ is a layer of $S$ gates,
and each $C$ is a $\CNOT$ circuit.
\end{theorem}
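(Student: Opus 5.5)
The plan is to follow the tableau reduction of Aaronson and Gottesman. We represent a Clifford unitary $U$ on $n$ qubits, up to global phase, by its $2n\times 2n$ binary symplectic tableau together with a phase vector in $\mathbb F_2^{2n}$. The rows record the images $UX_jU^\dagger$ (destabilizers) and $UZ_jU^\dagger$ (stabilizers). Write the tableau in block form $\begin{pmatrix} A & B\\ C & D\end{pmatrix}$, where the left blocks are $X$-parts and the right blocks are $Z$-parts.

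The three gate layers act on the tableau as follows.
\begin{itemize}
\item A $\CNOT$ circuit implementing an invertible $M\in GL_n(\mathbb F_2)$ acts by column operations $(A,B,C,D)\mapsto(AM,\,BM^{-T},\,CM,\,DM^{-T})$.
\item A Hadamard layer on a subset $T$ swaps the $X$- and $Z$-columns indexed by $T$.
\item An $S$ layer on $T$ adds the $X$-column of each qubit in $T$ to its $Z$-column.
\end{itemize}
Since commutation of the rows gives the symplectic conditions, for instance $CD^T$ symmetric, the strategy is to find gates that reduce the tableau to the identity. The inverse of the reducing sequence, suitably reordered and with $S^\dagger=S^3$ absorbed into $S$ layers, then gives the eleven-round form.

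The reduction proceeds in the following order.
\begin{enumerate}
\item Apply a Hadamard layer so that the stabilizer $X$-block $C$ becomes invertible. This is possible because the stabilizer rows span a Lagrangian subspace, and a standard linear-algebra lemma says one can choose, for each qubit, either its $X$ or its $Z$ column so that the resulting $n\times n$ submatrix of the stabilizer rows is full rank.
\item Apply a $\CNOT$ circuit to make $C=I$. The symplectic condition then forces $D$ to be symmetric.
\item Write the symmetric $D$ as $\Lambda+NN^T$ with $\Lambda$ diagonal and $N$ invertible. Use an $S$ layer to remove the diagonal, and a $C$--$S$--$C$ pattern to turn $D$ into $0$.
\item Apply a Hadamard layer on all qubits, which makes the stabilizers pure $Z$-type and the stabilizer block $(0\;I)$.
\item Clear the destabilizers in the same way: the symplectic condition makes the remaining destabilizer block symmetric relative to the stabilizers, and a further $C$--$S$--$C$ elimination reduces it to $(I\;0)$.
\item Correct the phase vector with a Pauli layer, using $Z=S^2$ and $X=HS^2H$, and absorb it into adjacent $S$ and $H$ rounds.
\end{enumerate}
Counting the rounds, and merging consecutive layers of the same type, should yield the pattern $H$--$C$--$S$--$C$--$S$--$C$--$H$--$S$--$C$--$S$--$C$.

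The main obstacle is step 3 and its destabilizer analogue: zeroing a symmetric binary matrix using only one $S$ layer flanked by $\CNOT$ circuits. Over $\mathbb F_2$ not every symmetric matrix is of the form $NN^T$; alternating matrices are the exception. The first attempt will be to add a suitable diagonal $\Lambda$ via the $S$ layer so that $D+\Lambda$ becomes non-alternating and invertible, and hence factorizes as $NN^T$. This is the step where the precise round count must be checked carefully, so that the eleven-round bound is not exceeded.
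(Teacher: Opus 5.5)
Your outline is the Aaronson--Gottesman tableau reduction, which is all the paper relies on, since it cites the theorem without proof. The round sequence you arrive at is exactly theirs. The gap is the one non-routine ingredient, which you leave as an open ``first attempt'': for every symmetric $D\in\mathbb F_2^{n\times n}$ there are a diagonal $\Lambda$ and an invertible $N$ with $D+\Lambda=NN^{\mathsf T}$ (their Lemma~7).

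This lemma has a short direct proof. Take $N$ lower triangular with unit diagonal, so that for $i>j$ one has $(NN^{\mathsf T})_{ij}=N_{ij}+\sum_{k<j}N_{ik}N_{jk}$. Defining $N_{ij}:=D_{ij}+\sum_{k<j}N_{ik}N_{jk}$ row by row makes $NN^{\mathsf T}$ agree with $D$ off the diagonal, and setting $\Lambda:=\operatorname{diag}(D+NN^{\mathsf T})$ finishes.

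This also removes your worry about the round count. One $S$ layer on the support of $\Lambda$, followed by $C$--$S$--$C$, handles the stabilizers (rounds 3--6). The same four rounds handle the destabilizers (rounds 8--11); your step 5 needs that initial $S$ layer as well. So the count is exactly eleven and nothing has to be merged.

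Your own route can be completed, but it needs more. You would need the characteristic-two classification of symmetric forms: an invertible non-alternating symmetric matrix over $\mathbb F_2$ is congruent to $I$. You would also need a proof that some $\Lambda$ makes $D+\Lambda$ invertible and non-alternating. For instance, $\det(D+\operatorname{diag}\lambda)$ is multilinear in $\lambda$ with leading monomial $\lambda_1\cdots\lambda_n$, so it is not identically zero on $\mathbb F_2^n$. If the resulting matrix is alternating, flipping one diagonal entry keeps it invertible, because the complementary principal minor is alternating of odd order.

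Two further steps are wrong or incomplete as written. First, the inversion step. Inverting a sequence that reduces the tableau of $U$ gives the mirror pattern $C$--$S$--$C$--$S$--$H$--$C$--$S$--$C$--$S$--$C$--$H$. Layers of different types do not commute, so no reordering turns this into the stated pattern. Run the reduction on the tableau of $U^{-1}$ instead. If $G_{11}\cdots G_1U^{-1}$ equals a Pauli $Q$ up to phase, then $U\propto Q^{\dagger}G_{11}\cdots G_1$, with the $H$ round $G_1$ applied first and no $S^\dagger$ appearing.

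Second, the Pauli correction. The leftover Pauli follows the last $C$ round, so it is not adjacent to any $S$ or $H$ round, and substituting $X=HS^2H$ would add rounds. You can push it backwards, since each layer maps Paulis to Paulis. $Z$-components merge into $S$ rounds as extra factors $S^2=Z$. $X$-components pass through the $C$ and $S$ rounds, and any $Z$-part created at an $S$ round merges into it. They continue until the all-qubit Hadamard round 7, where they become $Z$-components, pass through round 6, and merge into round 5. Alternatively, as Aaronson and Gottesman do, fix the signs inside rounds 5 and 10 by applying $S^2$ to a subset $R$ with $M\mathbf 1_R$ equal to the sign vector. This works because at those points the stabilizers, respectively the destabilizers, are $X$-type with an invertible $X$-block $M$.
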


Since the single-qubit layers have constant depth, it suffices to
parallelise the $\CNOT$ circuits. With clean auxiliary qubits, Jiang et
al.~\cite{doi:10.1137/1.9781611975994.13} obtained the following optimal
relation between depth and workspace.

\begin{theorem}[{\cite[Theorem~4.1]{doi:10.1137/1.9781611975994.13}}]
\label{thm:cnot-shrink}
    Let $1\leq s\leq O(n/\log^2 n)$. Every $n$-qubit CNOT circuit can be transformed
    into an equivalent CNOT circuit of depth
    \[
        O\!\left(\frac{n}{s\log n}\right)
    \]
    that uses $(3s+1)n$ ancillas.
\end{theorem}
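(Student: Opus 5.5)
The statement is Theorem~\ref{thm:cnot-shrink}, which the paper cites from Jiang et al.; here is how I would reconstruct their clean-ancilla argument.

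An $n$-qubit $\CNOT$ circuit computes an invertible matrix $A\in\mathrm{GL}_n(\mathbb F_2)$. The plan is to build the map $\ket{x}\mapsto\ket{Ax}$ through additions of the form $\ket{x}\ket{y}\mapsto\ket{x}\ket{y+Mx}$ into a clean output register. Concretely, I would first compute $\ket{x}\ket{0}\mapsto\ket{x}\ket{Ax}$. Then I would compute $\ket{x}\ket{Ax}\mapsto\ket{x+A^{-1}(Ax)}\ket{Ax}=\ket{0}\ket{Ax}$. Finally I would swap the two registers. This reduces the theorem to implementing a single addition by an arbitrary $n\times n$ matrix $M$ in depth $O(n/(s\log n))$ with $O(sn)$ clean ancillas. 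The two additions and the swap cost constant overhead, and the output register accounts for the extra $n$ in $(3s+1)n$.

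For the addition I would use a Four-Russians style decomposition. Split the $n$ input bits into $n/\ell$ blocks of $\ell=\Theta(\log n)$ bits, with $\ell$ chosen so that $2^\ell\le n/\log n$ or similar.

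\textbf{Step one.} For each block, compute into clean ancillas all $2^\ell$ parities of subsets of that block. This is a Gray-code or prefix construction of depth $O(\ell)$ after fanning out block bits. Doing this for $s\ell$ blocks at a time uses about $s\ell\cdot 2^\ell\le sn$ ancillas when $2^\ell\approx n/\ell$.

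\textbf{Step two.} Each row of $M$ restricted to a block is one of these precomputed parities. So for each output bit $i$ we must XOR in $s\ell$ selected ancilla values, one per block in the current batch.

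\textbf{Step three.} Copy the needed ancilla values using fan-out into $O(sn)$ further ancillas. Then add them into the targets with parity trees of depth $O(\log(s\ell))$.

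\textbf{Step four.} Uncompute steps one through three.

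With $n/(s\ell)$ batches processed sequentially, the total depth is $O\bigl(\tfrac{n}{s\ell}\cdot \log n\bigr)$. To reach $O(n/(s\log n))$, the per-batch logarithmic costs must be pipelined or amortized. I would do this by overlapping the fan-out of batch $j+1$ with the accumulation of batch $j$, keeping amortized constant depth per batch. The bound $s\le n/\log^2 n$ keeps the batch count $\Omega(\log n)$ so that pipeline fill time is dominated.

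\textbf{Main obstacle.} I expect the hard part to be exactly this amortization. Each batch naively costs $\Theta(\log n)$ depth for fan-out and fan-in, so getting $O(1)$ amortized depth per batch within the $3sn$ ancilla budget requires careful scheduling of copies. My first attempt would be a systolic pipeline in which the copying trees of consecutive batches occupy disjoint ancilla slabs. The ancillas would be recycled after the $O(\log n)$ latency, giving about $\log n$ slabs live at once. I would then reduce $\ell$ by a constant factor so that $\log n$ slabs still fit in $3sn$ qubits.
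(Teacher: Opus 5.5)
\textbf{There is a genuine gap: Step three does not fit in the ancilla budget.} The rest of your outline is sound. Your outer reduction (compute $\ket{x}\ket{0}\mapsto\ket{x}\ket{Ax}$, clear the input with $A^{-1}$, swap) is the one the paper describes. Your batches of $s\ell$ blocks, i.e.\ $s\ell^2=\Theta(s\log^2 n)$ columns, are exactly the groups the paper attributes to Jiang et al. However, you miscount the batches. There are $(n/\ell)/(s\ell)=n/(s\ell^2)$ of them, not $n/(s\ell)$, so depth $O(\log n)$ per batch already gives $O(n/(s\log n))$. The cross-batch pipelining you call the main obstacle is an artifact of this slip and is not needed. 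The hypothesis $s\le n/\log^2 n$ only guarantees that one batch fits into the $n$ columns.

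Step three is where the real difficulty lies; it is the role of Jiang et al.'s Corollary~4.5, which the paper cites rather than reproves. In one batch each of the $n$ rows needs one table entry from each of the $s\ell$ blocks. Fanning out a private copy for every (row, block) pair therefore takes up to $sn\ell=\Theta(sn\log n)$ qubits, a $\log n$ factor over budget. Your proposed repair does not fix this. Having $\log n$ slabs of size $\Theta(sn)$ live at once is again $\Theta(sn\log n)$, and shrinking $\ell$ by a constant changes these counts only by constant factors.

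The missing idea is to reuse copies \emph{inside} a batch. One way to do it:
\begin{enumerate}
\item Write $R_{j,p}$ for the set of rows whose restriction to block $j$ is $p$. Make only $\lceil |R_{j,p}|/\ell\rceil$ copies of the parity $v_{j,p}$. This is $O(n/\ell+2^\ell)=O(n/\ell)$ copies per block and $O(sn)$ per batch.
\item Use $s$ clean $n$-qubit accumulators, each serving $\ell$ of the blocks. XOR each copy into at most $\ell$ of its rows, one per layer. For each accumulator the copy--row incidences form a bipartite graph of maximum degree $\ell$. By K\"onig's edge-colouring theorem they split into $\ell$ matchings, i.e.\ $\ell$ conflict-free $\CNOT$ layers.
\item Add the accumulators into the output register with a depth-$O(\log s)$ parity tree, then uncompute the batch in reverse.
\end{enumerate}
This gives depth $O(\log n)$ with $O(sn)$ ancillas per batch. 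The constant $3$ then follows by taking $2^\ell$ to be a small constant fraction of $n/\ell$.
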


They also provide a matching lower bound to proof that their result is optimal:

\begin{theorem}[{\cite[Theorem~1.3]{doi:10.1137/1.9781611975994.13}}, with $m=sn$]
\label{thm:jiang-lowerbound}
    Let $0\leq\epsilon<\sqrt2/2$ be a constant and let $0<s<n/\log^2 n$. For a $1-o(1)$
    fraction of $n$-qubit CNOT circuits, every $\epsilon$-approximate $n$-qubit quantum
    circuit with $sn$ ancillas has depth
    \[
        \Omega\!\left(\frac{n}{(1+s)\log n}\right).
    \]
\end{theorem}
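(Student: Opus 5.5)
The plan is a counting argument: compare the number of distinct circuits of small depth with the number of distinct $n$-qubit $\CNOT$ unitaries, and use the fact that distinct $\CNOT$ unitaries are far apart, so that no single circuit can approximate two of them. The statement is quoted from Jiang et al., and I would reproduce their argument in this form, for circuits over the fixed finite gate set $G$ of Definition~\ref{def:circuits} acting on $N=(s+1)n$ qubits.

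\textbf{Step 1 (target count).} $n$-qubit $\CNOT$ circuits implement exactly the invertible linear maps $A\in GL_n(\mathbb F_2)$, each as a permutation unitary $\ket{x}\mapsto\ket{Ax}$. Since $|GL_n(\mathbb F_2)|=\prod_{i=0}^{n-1}(2^n-2^i)\geq c\cdot 2^{n^2}$ for an absolute constant $c>0$, there are $2^{n^2-O(1)}$ distinct targets.

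\textbf{Step 2 (separation).} Suppose $A\neq B$. Choose $x$ with $Ax\neq Bx$; then the outputs $\ket{Ax}$ and $\ket{Bx}$ are orthogonal, so their distance is $\sqrt2$. An $\epsilon$-approximate implementation with clean ancillas sends $\ket{x}\ket{0^{sn}}$ to within $\epsilon$ of $\ket{Ax}\ket{0^{sn}}$. If one circuit approximated both $A$ and $B$, the triangle inequality would give $\sqrt2\leq2\epsilon$, contradicting $\epsilon<\sqrt2/2$. Hence each circuit $\epsilon$-approximates at most one target.

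\textbf{Step 3 (circuit count).} A single layer on $N$ qubits is specified by a partition of the qubits into gate supports of bounded size together with a gate label from $G$ on each part. This gives at most $N^{O(N)}$ choices per layer, so there are at most $2^{O(Nd\log N)}$ circuits of depth $d$. With $N=(1+s)n$ and $s\leq n$, we have $\log N=O(\log n)$, so the count is $2^{O((1+s)nd\log n)}$.

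\textbf{Step 4 (conclude).} By Step 2, depth-$d$ circuits cover at most $2^{O((1+s)nd\log n)}$ of the $2^{n^2-O(1)}$ targets. If $d\leq \delta n/((1+s)\log n)$ for a sufficiently small constant $\delta$, the covered fraction is $2^{-\Omega(n^2)}=o(1)$. Therefore a $1-o(1)$ fraction of $\CNOT$ circuits require depth $\Omega(n/((1+s)\log n))$.

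The main obstacle is Step 3. If $G$ contained continuously parametrized gates, the finite count would fail and would have to be replaced by a net argument: discretize the gate parameters to precision $\mathrm{poly}(1/(Nd))$ and absorb the resulting error into the slack between $\epsilon$ and $\sqrt2/2$. Under the paper's finite-gate-set convention this issue does not arise, and the remaining care is only in the bookkeeping of the per-layer count $N^{O(N)}$.
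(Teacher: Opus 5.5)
Your proof is correct. It is essentially the counting argument behind the cited result: the paper gives no proof of its own and simply imports Jiang et al.'s theorem, whose threshold $\epsilon<\sqrt2/2$ comes from exactly your triangle-inequality separation of the orthogonal outputs $\ket{Ax}$ and $\ket{Bx}$. Note that Jiang et al.\ state the bound for circuits of arbitrary two-qubit gates, so for the statement as cited your closing net argument is needed rather than optional. It goes through as you sketch: with per-gate precision $(\sqrt2/2-\epsilon)/(2Nd)$ the count becomes $2^{O(Nd\log(Nd))}$, which is still $2^{O((1+s)nd\log n)}$ because only $d\leq n$ is relevant and $N\leq n^2$.
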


Applying this result to the constant number of $\CNOT$ circuits in
Theorem~\ref{thm:aaronson-gottesman} gives the same bound for Clifford
circuits.

\begin{corollary}[of {\cite[Theorem~4.1 and Corollary~1.2]{doi:10.1137/1.9781611975994.13}}]
\label{cor:jiang-clifford-s}
    Let $1\leq s\leq O(n/\log^2 n)$. Every $n$-qubit Clifford circuit can be implemented,
    without measurements, by a circuit of depth
    \[
        O\!\left(\frac{n}{s\log n}\right)
    \]
    that uses $(3s+1)n$ ancillas initialized to $\ket{0}$ and returns them to $\ket{0}$.
    This is optimal in the worst case, since CNOT circuits are Clifford circuits and
    Corollary~\ref{thm:jiang-lowerbound} applies to them.
\end{corollary}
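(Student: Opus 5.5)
The statement is Corollary~\ref{cor:jiang-clifford-s}. I would derive it directly from Theorem~\ref{thm:aaronson-gottesman} and Theorem~\ref{thm:cnot-shrink}, and take optimality from Theorem~\ref{thm:jiang-lowerbound}.

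\textbf{Step 1: reduce to CNOT circuits.} Given an $n$-qubit Clifford circuit $\mathcal C$, I first rewrite it, up to global phase, in the eleven-round form
\[
H-C-S-C-S-C-H-S-C-S-C
\]
of Theorem~\ref{thm:aaronson-gottesman}. There are five $\CNOT$ rounds and six single-qubit layers. Each single-qubit layer ($H$ or $S$ gates on distinct qubits) has depth $1$. It acts only on the $n$ data qubits and uses no ancillas.

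\textbf{Step 2: compress each CNOT round.} For each of the five $\CNOT$ rounds $C_i$, I apply Theorem~\ref{thm:cnot-shrink} with the given $s$. This yields an equivalent $\CNOT$ circuit $C_i'$ of depth $O(n/(s\log n))$ using $(3s+1)n$ ancillas. Here I need to check one point about the cited construction: \emph{equivalent} means that on input $\ket{x}\ket{0^{(3s+1)n}}$ it outputs $\ket{A_ix}\ket{0^{(3s+1)n}}$, so the ancillas are returned to $\ket{0}$. This holds because Jiang et al.'s circuit computes intermediate values into the ancillas and then uncomputes them; it uses no measurements. Given this, the five compressed blocks can reuse the same $(3s+1)n$ ancillas one after another. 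So the total ancilla count stays $(3s+1)n$ rather than five times that.

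\textbf{Step 3: compose and count.} I concatenate the single-qubit layers with $C_1',\dots,C_5'$ in the normal-form order. Correctness follows block by block: after each $C_i'$ the ancillas are back in $\ket{0}$, so the next block sees its required input. The total depth is
\[
5\cdot O\!\left(\frac{n}{s\log n}\right)+6=O\!\left(\frac{n}{s\log n}\right),
\]
where the constant $6$ is absorbed because $s\le n/\log^2 n$ forces $n/(s\log n)\ge\log n\ge 1$. The resulting unitary equals $\mathcal C\otimes I$ on states with the ancillas in $\ket{0}$, up to the global phase allowed in this section.

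\textbf{Step 4: optimality.} $\CNOT$ circuits are Clifford circuits, so any worst-case upper bound for Cliffords is also an upper bound for $\CNOT$ circuits. By Theorem~\ref{thm:jiang-lowerbound} with $sn$ ancillas, some $\CNOT$ circuits require depth $\Omega(n/((1+s)\log n))=\Omega(n/(s\log n))$ for $s\ge1$. The gap between $sn$ and $(3s+1)n$ ancillas only changes constants, since replacing $s$ by $4s$ leaves the bound unchanged up to constants.

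\textbf{Main obstacle.} The only real subtlety is the ancilla-reuse check in Step~2. Everything else is bookkeeping.
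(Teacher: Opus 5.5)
Your proposal is correct and matches the argument the paper gives in the sentence before the corollary: put the Clifford circuit in the Aaronson--Gottesman eleven-round form, replace each of the five $\CNOT$ rounds by the circuit of Theorem~\ref{thm:cnot-shrink} while reusing one set of $(3s+1)n$ clean ancillas, absorb the six constant-depth single-qubit layers, and inherit worst-case optimality from Theorem~\ref{thm:jiang-lowerbound} since $\CNOT$ circuits are Clifford circuits. Your explicit checks of ancilla reuse and of the $sn$ versus $(3s+1)n$ gap spell out what the paper leaves implicit; the one loose end, which the paper shares, is that for $s$ near the top of its range $4s$ can exceed the lower bound's hypothesis $s<n/\log^2 n$, and there you would instead use the standard $\Omega(\log n)$ light-cone bound.
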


Our construction achieves logarithmic depth using
$O(n^2/\log^2 n)$ catalytic qubits, showing that this degree of parallelization in fact
does not require clean workspace. To explain the adaptation, we first
describe $\CNOT$ circuits as binary linear maps.

For bits $x,y$, a $\CNOT$ gate acts as
\[
    \CNOT\ket{x}\ket{y}=\ket{x}\ket{y+x},
\]
which is multiplication of the column vector $(x,y)^{\mathsf T}$ by
\[
    \begin{bmatrix}
        1&0\\
        1&1
    \end{bmatrix}.
\]
We write $\operatorname{GL}(n,2)$ for the group of invertible $n\times n$
binary matrices. Every $n$-qubit $\CNOT$ circuit defines a matrix in this
group, and every such matrix can be reduced to the identity by row
elimination. Since a row addition is a $\CNOT$ gate and a row swap is a
product of three row additions, each $M\in\operatorname{GL}(n,2)$ has a
$\CNOT$ circuit $\mathcal C$ satisfying
\[
    \mathcal C\ket{x}=\ket{Mx}
\]
for every $x\in\mathbb F_2^n$~\cite{PatelMarkovHayes2008}. Matrix and
circuit products act from right to left.

The usual implementation with a clean output register uses reversible
addition. Let $I$ and $O$ be $n$-qubit input and output registers, and let
$\mathcal C_1$ implement the block matrix
\[
    \begin{bmatrix}
        I&0\\
        M&I
    \end{bmatrix},
\]
so that
\[
    \mathcal C_1\ket{x}_I\ket{y}_O
    =\ket{x}_I\ket{y+Mx}_O,
\]
where the identity blocks act on the corresponding registers. A second
circuit $\mathcal C_2$ implements
\[
    \begin{bmatrix}
        I&0\\
        M^{-1}&I
    \end{bmatrix}.
\]
Starting with $O$ in $\ket{0}$, the sequence of $\mathcal C_1$, a swap of
the two registers, and $\mathcal C_2$ computes $Mx$ in $I$ and restores
$O$ to $\ket{0}$:
\begin{align*}
    \mathcal C_1\ket{x}_I\ket{0}_O
        &=\ket{x}_I\ket{Mx}_O,\\
    \SWAP\ket{x}_I\ket{Mx}_O
        &=\ket{Mx}_I\ket{x}_O,\\
    \mathcal C_2\ket{Mx}_I\ket{x}_O
        &=\ket{Mx}_I\ket{x+M^{-1}Mx}_O\\
        &=\ket{Mx}_I\ket{0}_O.
\end{align*}

Additional work qubits allow parts of these additions to run in parallel.
Our next lemma describes the resulting action when the work register is
not assumed to start in $\ket{0}$.

\subsection{Catalytic implementation of \autoref{thm:cnot-shrink}}
\label{sec:catalytic-linear}

Lemma~4.2 of Jiang et al.~\cite{doi:10.1137/1.9781611975994.13} gives
the required parallel addition circuit. Although stated for an invertible
off-diagonal block, its proof also applies to arbitrary binary matrices.
We use the numbering in the full version of their paper.

\begin{lemma}[Adapted from Lemma~4.2 of~\cite{doi:10.1137/1.9781611975994.13}]
\label{lem:clean-cnot-parallelisation}
Let $n\geq 2$ and let $s$ be an integer with $1\leq s\leq n/\log^2 n$.
For every $M\in\mathbb F_2^{n\times n}$, there exists a $\CNOT$ circuit
$\mathcal C_M$ of depth
\[
    O\!\left(\frac{n}{s\log n}\right)
\]
using a work register $W$ of $m=O(sn)$ qubits. Its input and output
registers $I,O$ each contain $n$ qubits, and for some
$B\in\mathbb F_2^{n\times m}$ it implements the matrix
\[
    C_M=
    \begin{bmatrix}
        I_I&0&0\\
        M&I_O&B\\
        0&0&I_W
    \end{bmatrix}.
\]
Equivalently,
\[
    \mathcal C_M\ket{x}_I\ket{y}_O\ket{z}_W
    =\ket{x}_I\ket{y+Mx+Bz}_O\ket{z}_W
\]
for $x,y\in\mathbb F_2^n$ and $z\in\mathbb F_2^m$. Subscripts on
identity blocks indicate their registers.
\end{lemma}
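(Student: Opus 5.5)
The plan is to reconstruct the Jiang et al.\ parallel addition circuit and track its action on an arbitrary initial work string $z$. Since every gate is a $\CNOT$, the whole circuit is a binary linear map on $(x,y,z)$. The goal is to show that its matrix has exactly the stated block form. Concretely, I will recall their construction: split $M$ into $n/\log n$ column blocks of width $\ell=\log n$ (the precise block size is tied to $s$). For each block, the work register is used to copy the $\ell$ input bits about $s$ times via fan-out trees of depth $O(\log s)$. Then all $2^\ell$ parities of those bits are precomputed into work qubits by a doubling (Gray-code style) procedure. Each output row then picks up the correct parity $M_{i,\text{block}}\,x_{\text{block}}$ with a single $\CNOT$ from the appropriate work qubit. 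Finally the copying and parity-generation stages are uncomputed by running them in reverse. Parallelising over $s$ copies gives the claimed depth $O(n/(s\log n))$. I take the depth and width counting directly from their proof and do not redo it. The only thing to check is that invertibility of the off-diagonal block is never used in this part. Their Lemma~4.2 only computes $y\mapsto y+Mx$, and invertibility enters only later, when the swap trick uses $M^{-1}$.

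The key structural step is to split the circuit as $\mathcal C_M=\mathcal U^{-1}\,\mathcal T\,\mathcal U$. Here $\mathcal U$ acts only on $I\cup W$, with $I$ used only as controls. It maps $\ket{x}\ket{z}\mapsto\ket{x}\ket{Lz+Kx}$ for some invertible $L$ on $W$ and some $K$. The middle stage $\mathcal T$ consists only of $\CNOT$s with controls in $W$ and targets in $O$, so it implements $y\mapsto y+Pw$ for a $0/1$ selection matrix $P$, where $w=Lz+Kx$ is the current work content. Since $\mathcal T$ does not touch $I$ or $W$, applying $\mathcal U^{-1}$ afterwards restores $W$ to $z$ exactly and leaves $I$ untouched. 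The output register therefore ends in $y+P(Lz+Kx)$. Setting $B:=PL$ and noting that the clean-ancilla correctness of Jiang et al.\ forces $PK=M$ (take $z=0$), we obtain exactly $C_M$.

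If some layers of their circuit interleave the computing and targeting stages, I will instead argue purely linearly. The circuit is a product of elementary transvections. No gate ever targets $I$, so the top block row is $[I_I\ 0\ 0]$. No gate has a control in $O$ while targeting $I$ or $W$, since $O$ is only ever a target, so the $O$ column is $(0,I_O,0)^{\mathsf T}$. The $W$-rows restore to $[0\ 0\ I_W]$ because, with clean ancillas, $W$ returns to $0$ for every $x$, and by the reverse-computation structure the $W$ block acting on $z$ is $LL^{-1}=I$. What remains is the claim that the $W$ row has no $x$-dependence for arbitrary $z$. This follows from linearity: the map on $(x,z)$ restricted to $W$ is $Rx+Qz$, the clean case gives $R=0$, and uncomputation gives $Q=I$.

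The main obstacle is verifying $Q=I$, i.e.\ that the work register is restored for every initial $z$ and not just for $z=0$. This is where the ``compute, use, uncompute'' form of the construction is essential. I will check it by making sure the $O$-register is never used as a control in their construction, which is what allows the $\mathcal U^{-1}\mathcal T\mathcal U$ factorisation.
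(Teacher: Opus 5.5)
Your proposal is correct and follows essentially the paper's argument. Invertibility of $M$ is never used in the blockwise construction, and $I$ is only a control, so its row is $[I_I\ 0\ 0]$. Clean-ancilla correctness fixes the $M$ and $I_O$ blocks, and the compute--use--uncompute structure forces the $(W,W)$ block to be $I_W$; your $B=PL$ just makes the paper's terse argument explicit. One refinement: Jiang et al.'s circuit is a sequence of such stages (one per group of $s\log^2 n$ columns, via their Corollary~4.5) rather than a single $\mathcal U^{-1}\mathcal T\mathcal U$, which is harmless because matrices of the stated block form are closed under multiplication, with the $M$ and $B$ blocks adding.
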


\begin{proof}
Although Jiang et al. state the lemma for invertible $M$, their proof
constructs $M$ one block at a time by partitioning it into groups of
$s\log^2 n$ columns and applying their Corollary~4.5 to each group.
That corollary allows an arbitrary binary block, so the construction and
its resource bounds apply unchanged to any $M\in\mathbb F_2^{n\times n}$. Jiang et al. do not specify the blocks acting on $W$, since their workspace starts in $\ket{0}$. Their circuit uses $I$ only as a control, so the $(I,W)$ block is $0$. It restores $W$ by applying the inverse of the circuit that computed into it, so the $(W,W)$ block is $I_W$.
\end{proof}

\begin{remark}
\label{rem:fanout}
The construction of Jiang et al.~\cite{doi:10.1137/1.9781611975994.13} can be improved in depth
if one has access to $O(n^2/\log(n))$ auxiliary qubits and the fan-out gate
(one control, many NOT targets) at unit cost. Then the remaining $O(\log(n))$ depth
reduces to depth $O(1)$. The implementation then consists of copying the data $x$
to many registers, or of computing the parity gate, and each of these happens only
a constant number of times. Utilizing fan-out and parity natively comes at the cost
of an additional multiplicative factor $O(\log(n))$ in the number of auxiliary
qubits. The improvement in depth therefore only holds when one has access to
$O(n^2/\log(n))$ auxiliary qubits. All our constructions incur depth scaling in $n$
either through the Jiang et al.\ construction or through fan-out used to copy
information. Therefore, this improvement in depth also holds for our constructions. This might be especially interesting for hardware implementations with native
multi-target interactions. For example, the proposed implementations of fan-out using collective interactions in trapped ions~\cite{MaslovNam2018}, cavity systems~\cite{YangLiuNori2010}, and Rydberg atoms~\cite{MullerEtAl2009}. These implementations use a number of collective operations independent of the number of targets.

\end{remark}

With both the output and work registers initially zero, this circuit
computes $Mx$ in the output register:
\[
    \mathcal C_M\ket{x}_I\ket{0}_O\ket{0}_W
    =\ket{x}_I\ket{Mx}_O\ket{0}_W.
\]
For an arbitrary work string $z$, the output instead contains the
additional term $Bz$. We eliminate this term by cycling the work register
through three binary linear maps whose sum is zero.

The construction uses
\[
    P:=
    \begin{pmatrix}
        0&1\\
        1&1
    \end{pmatrix}.
\]
\begin{samepage}
This matrix maps $(z_1,z_2)^{\mathsf T}$ to
$(z_2,z_1+z_2)^{\mathsf T}$ and has a two-$\CNOT$ implementation:
\begin{center}
    \begin{quantikz}
        \lstick{$z_1$} & \ctrl{1} & \targ{} & \rstick{$z_2$} \qw \\
        \lstick{$z_2$} & \targ{} & \ctrl{-1} & \rstick{$z_1+z_2$} \qw
    \end{quantikz}
\end{center}
\end{samepage}

Its powers satisfy
\[
    P^3=I
    \qquad\text{and}\qquad
    I+P+P^2=0
\]
over $\mathbb F_2$. Applying these maps to pairs of work qubits lets us
cancel $Bz$ while restoring the entire work register.

\begin{lemma}[Catalytic binary addition]
\label{lem:roots-of-unity-trick}
Let $n\geq 2$ and let $s$ be an integer with $1\leq s\leq n/\log^2 n$.
For every $M\in\mathbb F_2^{n\times n}$, there exists a $\CNOT$ circuit
$\widetilde{\mathcal C}_M$ of depth
\[
    O\!\left(\frac{n}{s\log n}\right)
\]
using $O(sn)$ catalytic work qubits that implements
\[
    \widetilde C_M=
    \begin{bmatrix}
        I_I&0&0\\
        M&I_O&0\\
        0&0&I_W
    \end{bmatrix}.
\]
Equivalently, for every $x,y\in\mathbb F_2^n$ and every work state
$\ket{\tau}_W$,
\[
    \widetilde{\mathcal C}_M
    \ket{x}_I\ket{y}_O\ket{\tau}_W
    =\ket{x}_I\ket{y+Mx}_O\ket{\tau}_W.
\]
\end{lemma}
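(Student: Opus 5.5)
We will build $\widetilde{\mathcal C}_M$ from three copies of the circuit $\mathcal C_M$ of Lemma~\ref{lem:clean-cnot-parallelisation}, interleaved with changes of basis on the work register. First pad the work register so that $m$ is even; padding adds at most one qubit, and $\mathcal C_M$ does not touch the extra qubit, so $B$ gains a zero column. Pair the work qubits as $(z_{2j-1},z_{2j})$ and let $Q:=P^{\oplus m/2}$ act on $W$. The matrix $Q$ is block diagonal with copies of $P$, so it is implemented by a depth-$2$ $\CNOT$ layer, and it satisfies $Q^3=I_W$ and $I_W+Q+Q^2=0$ over $\mathbb F_2$.

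Define
\[
    \widetilde{\mathcal C}_M:=Q\,\mathcal C_M\,Q\,\mathcal C_M\,Q\,\mathcal C_M,
\]
where $Q$ acts on $W$ only. We track a computational basis state $\ket{x}_I\ket{y}_O\ket{z}_W$ through the circuit.
\begin{itemize}
    \item The first $\mathcal C_M$ gives $\ket{x}\ket{y+Mx+Bz}\ket{z}$.
    \item Applying $Q$ and then $\mathcal C_M$ gives $\ket{x}\ket{y+2Mx+Bz+BQz}\ket{Qz}$.
    \item Applying $Q$ and then $\mathcal C_M$ again gives $\ket{x}\ket{y+3Mx+B(I+Q+Q^2)z}\ket{Q^2z}$.
    \item The final $Q$ gives $\ket{x}\ket{y+Mx}\ket{Q^3z}=\ket{x}\ket{y+Mx}\ket{z}$.
\end{itemize}
The last step uses $3Mx=Mx$ over $\mathbb F_2$, together with $I+Q+Q^2=0$ and $Q^3=I$. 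Because only $\CNOT$ gates are used, no phases appear. The overall matrix is therefore exactly $\widetilde C_M$, with the $(O,W)$ block cancelled. By Lemma~\ref{lem:globalphase_basis_state} with $\theta=0$, the identity extends to arbitrary work states $\ket{\tau}_W$, including states entangled with external registers.

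For the resources, the depth is $3\cdot O\!\left(\frac{n}{s\log n}\right)+O(1)$, which is $O\!\left(\frac{n}{s\log n}\right)$. The circuit uses $m+1=O(sn)$ work qubits.

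The main obstacle is not this algebra. It is justifying that $\mathcal C_M$ acts on arbitrary, rather than zero, work strings by the affine formula of Lemma~\ref{lem:clean-cnot-parallelisation}. In particular, its $(W,W)$ block must be $I_W$ and its $(I,W)$ block must be $0$ for every $z$, not just for $z=0$. We take this block structure as given by that lemma. Once it holds, linearity of $\CNOT$ circuits makes the cancellation go through.
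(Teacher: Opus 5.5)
Your proof is correct and is essentially the paper's argument: three copies of $\mathcal C_M$ with the work register cycled by $P$ on pairs of qubits, so that the $(O,W)$ block becomes $B(I+\mathbf P+\mathbf P^2)=0$ while $M+M+M=M$, with the same zero-column padding for odd $m$. Your telescoped circuit $Q\,\mathcal C_M\,Q\,\mathcal C_M\,Q\,\mathcal C_M$ is the paper's product of conjugates $D^{-j}C_MD^{j}$ rewritten using $Q^3=I$ (these factors commute). The block structure you flag as the main obstacle is exactly what Lemma~\ref{lem:clean-cnot-parallelisation} provides for every work string $z$.
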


\begin{proof}
Let $m$ be the number of work qubits in the circuit of
Lemma~\ref{lem:clean-cnot-parallelisation}. We first treat even $m$ and
combine three conjugated copies of its matrix $C_M$. Define
\[
\mathbf P
:=
\underbrace{
\begin{pmatrix}
P & 0 & \cdots & 0\\
0 & P & \ddots & \vdots\\
\vdots & \ddots & \ddots & 0\\
0 & \cdots & 0 & P
\end{pmatrix}
}_{m/2\text{ copies}}
\qquad\text{and}\qquad
D:=
\begin{pmatrix}
I_I & 0 & 0\\
0 & I_O & 0\\
0 & 0 & \mathbf P
\end{pmatrix}.
\]
Thus, $\mathbf P$ applies $P$ to each consecutive pair of work qubits. The
matrix $D$ can be implemented in depth two using two $\CNOT$ gates per pair.

For $j\in\{0,1,2\}$, define
\[
    C_{M,j}:=D^{-j}C_MD^j
    =
    \begin{bmatrix}
        I&0&0\\
        M&I&B\mathbf P^j\\
        0&0&I
    \end{bmatrix}.
\]
Because $P^3=I$, we have $P^{-1}=P^2$. Each $C_{M,j}$ therefore consists of
one application of $\mathcal C_M$ and constant-depth $\CNOT$ circuits on
$W$ before and after it. Finally,
\begin{align*}
    C_{M,0}C_{M,1}C_{M,2}
    &=
    \begin{bmatrix}
        I&0&0\\
        M+M+M&I&B+B\mathbf P+B\mathbf P^2\\
        0&0&I
    \end{bmatrix}\\
    &=
    \begin{bmatrix}
        I_I&0&0\\
        M&I_O&0\\
        0&0&I_W
    \end{bmatrix}
    =\widetilde C_M.
\end{align*}
Here we used $M+M+M=M$ and
$I+\mathbf P+\mathbf P^2=0$ over $\mathbb F_2$.

If $m$ is odd, append one additional catalytic qubit and extend $B$ by
one zero column. The enlarged work register has even size, and the same
argument applies. The additional qubit does not change the asymptotic space
bound.
\end{proof}

The matrix identity holds for every computational basis state of the work
register. By linearity, it holds for every quantum state, including states
entangled with an external register. In particular, the circuit acts exactly
as the identity on the catalytic register.

\subsection{Removing the clean output register}
\label{sec:in-place-cnot}

The preceding lemma implements addition into an arbitrary output
register. To use it for a transformation of the input itself, we apply
the following special case of Urschel's
factorisation~\cite{urschel2023}.

\begin{theorem}[Urschel~{\cite[Theorem~1]{urschel2023}}]
\label{thm:urschel}
For every $M\in\operatorname{GL}(2n,2)$, there exist binary matrices
$A_1,\ldots,A_6\in\mathbb F_2^{n\times n}$ such that
\[
    M=
    \begin{pmatrix}I&0\\A_1&I\end{pmatrix}
    \begin{pmatrix}I&A_2\\0&I\end{pmatrix}
    \begin{pmatrix}I&0\\A_3&I\end{pmatrix}
    \begin{pmatrix}I&A_4\\0&I\end{pmatrix}
    \begin{pmatrix}I&0\\A_5&I\end{pmatrix}
    \begin{pmatrix}I&A_6\\0&I\end{pmatrix}.
\]
\end{theorem}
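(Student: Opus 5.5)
The plan is a block Gaussian elimination, done entirely with block unitriangular factors. Write $L(Y)=\begin{pmatrix}I&0\\Y&I\end{pmatrix}$ and $U(X)=\begin{pmatrix}I&X\\0&I\end{pmatrix}$, and $M=\begin{pmatrix}A&B\\C&D\end{pmatrix}$ with $n\times n$ blocks. These factors satisfy $L(Y)^{-1}=L(Y)$ and $U(X)^{-1}=U(X)$ over $\mathbb F_2$, and $L(Y)L(Y')=L(Y+Y')$. So it suffices to multiply $M$ on the left and right by a few such factors until the result is a single factor, and then move everything back to one side. I will aim for the shape $L\,U\,L\,U\,L\,U$ in the statement, merging adjacent factors of the same type where they occur.

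\textbf{Step 1: make the top-left block invertible.} The block column $\binom{A}{C}$ has full column rank $n$, because $M$ is invertible. Hence there is $X$ with $A+XC$ invertible. One way to see this: pick $n$ linearly independent rows of $\binom{A}{C}$, and let $X$ add to each row of $A$ that is not among the chosen ones a suitable row of $C$. A cleaner route is a rank-completion argument on the row space. After this step, $U(X)M=\begin{pmatrix}A'&B'\\C&D\end{pmatrix}$ with $A'$ invertible.

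\textbf{Step 2: Schur factorisation.} Since $A'$ is invertible,
\[
U(X)M=L(CA'^{-1})\begin{pmatrix}A'&0\\0&S\end{pmatrix}U(A'^{-1}B'),\qquad S=D+CA'^{-1}B',
\]
and $S$ is invertible. It remains to express the block-diagonal middle factor using the few unitriangular factors we have not yet spent.

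\textbf{Step 3: the main obstacle.} The difficulty is $\operatorname{diag}(A',S)$. A general block-diagonal matrix is not a short product of unitriangulars. For $\operatorname{diag}(Q^{-1},Q)$ there are classical identities of Whitehead type, e.g.\ $U(Q^{-1})L(Q)U(Q^{-1})=\begin{pmatrix}0&Q^{-1}\\Q&0\end{pmatrix}$. Applied naively, however, these give more than six factors. So Step 1 must not be done with an arbitrary $X$. I would instead use the two free matrices that will become $A_5$ and $A_6$ (after inverting) to force the top-left block to be exactly $I$, not merely invertible. Concretely, I would choose $Y$ and $X$ so that the top-left block of $U(X)L(Y)M$, namely
\[
A+X(C+YA),
\]
equals $I$. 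Then
\[
\begin{pmatrix}I&B''\\C''&D''\end{pmatrix}=L(C'')\,U(B'')\begin{pmatrix}I&0\\0&S\end{pmatrix}.
\]
The remaining task is to absorb $\operatorname{diag}(I,S)$. I expect to do this by pushing it through the outer factors, using $\operatorname{diag}(I,S)\,L(Y)=L(SY)\,\operatorname{diag}(I,S)$, and by spending the residual freedom in $X,Y$ to make $S=I$.

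Equivalently, the goal is to find $Y_1,X_1,Y_2,X_2$ such that $L(Y_2)U(X_2)L(Y_1)U(X_1)M$ is itself of the form $L(\cdot)U(\cdot)$. This is a system of matrix equations over $\mathbb F_2$. I would solve it by first reducing $M$ to a normal form via $\operatorname{GL}(n,2)\times\operatorname{GL}(n,2)$-conjugation by block-diagonal matrices, noting that $\operatorname{diag}(P,Q)L(Y)\operatorname{diag}(P,Q)^{-1}=L(QYP^{-1})$ and similarly for $U$. The reduction uses the rank of $C$ and puts blocks into the form $\begin{pmatrix}I_r&0\\0&0\end{pmatrix}$, after which explicit solutions can be written down case by case according to $r=\operatorname{rank}C$.

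This last case analysis, in particular handling small $r$ where one must also exploit the field $\mathbb F_2$, is the step I expect to be hardest. It is where Urschel's argument does its real work.
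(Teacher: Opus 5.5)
The paper does not prove this statement. It imports Urschel's Theorem~1, which holds for $\mathrm{SL}_{2n}(F)$ over any field, and only observes that every matrix in $\operatorname{GL}(2n,2)$ has determinant one.

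Your Steps~1 and~2 are correct but routine. They only reduce $M$ to a block-diagonal factor $\operatorname{diag}(A',S)$ sandwiched between unitriangular ones, and absorbing that factor is the entire content of the theorem. Step~3 does not do this: it ends by deferring to ``where Urschel's argument does its real work.'' As a self-contained proof, the gap is therefore at precisely the one nontrivial step. (Incidentally, your identity should read $\begin{pmatrix}I&B''\\C''&D''\end{pmatrix}=L(C'')\operatorname{diag}(I,S)\,U(B'')$ with $S=D''+C''B''$.)

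More seriously, the concrete targets you set cannot be met in general. Forcing $S=I$ would give $M=L(Y)U(X)L(C'')U(B'')$, which is four factors. Your ``equivalent'' goal $L(Y_2)U(X_2)L(Y_1)U(X_1)M=L(\cdot)U(\cdot)$ gives $M=ULULU$, which is five. Both fail for the following matrix. Take $n=2$, $P=\begin{pmatrix}0&1\\1&1\end{pmatrix}$ and $M=\operatorname{diag}(P,I)$. Since $P^2+P+I=0$, the matrix $P+I$ is invertible.

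Suppose $U(X_1)MU(X_3)=L(a)U(b)L(c)$. The left side has the form $\begin{pmatrix}P&*\\0&I\end{pmatrix}$, and the right side is $\begin{pmatrix}I+bc&b\\a+(ab+I)c&ab+I\end{pmatrix}$. Comparing blocks forces $ab=0$, $a=c$ and $bc=P+I$. Then $b$ and $c$ are invertible while $cb=ab=0$, a contradiction. The same block comparison for $L(Y_1)ML(Y_3)=U(a)L(b)U(c)$ rules out $LULUL$, and hence also $LULU$.

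So this $M$ is not a product of fewer than six alternating factors. No use of ``residual freedom'' in a four- or five-factor ansatz can succeed. A correct proof must spend the outermost factor $A_1$ nontrivially; for this $M$, $A_1=P^{-1}$ does lead to a six-factor factorisation. It must then control the leftover block-diagonal part so that it merges into the neighbouring unitriangular factors. That mechanism is what Urschel supplies, and it is the missing idea.
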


Urschel proves the factorisation for matrices of determinant one over
an arbitrary field. Since every invertible binary matrix has determinant
one, it applies in the form above. Each factor is an addition from one
half of the input to the other, which we implement using
Lemma~\ref{lem:roots-of-unity-trick}.

\begin{theorem}
\label{thm:catalytic-cnot}
Let $n\geq 2$ and let $s$ be an integer with $1\leq s\leq n/\log^2 n$.
For every $M\in\operatorname{GL}(n,2)$, there exists a catalytic
$\CNOT$ circuit $\mathcal C_c$ of depth
\[
    O\!\left(\frac{n}{s\log n}\right)
\]
using $O(sn)$ catalytic qubits. For every $x\in\mathbb F_2^n$ and every
work state $\ket{\tau}_W$,
\[
    \mathcal C_c
    \bigl(\ket{x}_I\ket{\tau}_W\bigr)
    =
    \ket{Mx}_I\ket{\tau}_W.
\]
\end{theorem}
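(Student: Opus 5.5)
Split the input register $I$ into two halves and treat $M$ as an element of $\operatorname{GL}(2n',2)$, so that Theorem~\ref{thm:urschel} applies. Each factor is then an addition from one half into the other, which Lemma~\ref{lem:roots-of-unity-trick} implements catalytically with no clean output register.

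\emph{Padding.} If $n$ is odd, adjoin one catalytic qubit $a$ to the input. We then implement $M'=M\oplus(1)$ on the $n+1$ qubits $I\cup\{a\}$. Since $M'$ fixes the $a$-coordinate of every basis state and the circuit is linear, the resulting circuit equals $\mathcal C_{M}\otimes I_a$. Hence $a$ is restored in every state, including when it is entangled with other registers. So we may assume $n=2n'$ with $n'=\lceil n/2\rceil$.

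\emph{Factorization.} Write $I=I_1I_2$ with $|I_1|=|I_2|=n'$. Theorem~\ref{thm:urschel} gives $A_1,\dots,A_6\in\mathbb F_2^{n'\times n'}$ expressing $M$ as a product of six block-unitriangular matrices. Consider a lower factor $\begin{pmatrix}I&0\\A&I\end{pmatrix}$. It maps $\ket{x_1}_{I_1}\ket{x_2}_{I_2}\mapsto\ket{x_1}\ket{x_2+Ax_1}$. This is exactly $\widetilde{\mathcal C}_A$ from Lemma~\ref{lem:roots-of-unity-trick}, with $I_1$ playing the role of the input register and $I_2$ the output register. An upper factor is the same construction with the roles of $I_1$ and $I_2$ exchanged; this is just a relabelling of wires, so no physical $\SWAP$ is needed. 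Applying the six circuits right to left in the order of the factorization gives a circuit that maps $\ket{x}_I\ket{z}_W\mapsto\ket{Mx}_I\ket{z}_W$ on every computational basis state, with no phase. By Lemma~\ref{lem:globalphase_basis_state} it therefore equals $M\otimes I_W$ as an operator, so $W$ is restored in every state, including entangled ones.

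\emph{Resources.} Lemma~\ref{lem:roots-of-unity-trick} requires $1\le s\le n'/\log^2 n'$ and $n'\ge2$. The small cases ($n\le 3$) have constant size and can be handled trivially. For the range condition, note that $n'/\log^2 n'\ge n/(2\log^2 n)$. So for $s\le n/\log^2 n$ we invoke the lemma with $s'=\max(1,\lfloor s/2\rfloor)$, which lies in the allowed range. Its depth is $O(n'/(s'\log n'))=O(n/(s\log n))$ and it uses $O(s'n')=O(sn)$ catalytic qubits. All six additions reuse the same work register $W$; this is legitimate because each restores $W$ exactly. The total is therefore six times the depth of one addition plus $O(1)$, with the $O(sn)$ qubits of $W$ and at most one extra padding qubit.

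\emph{Main obstacle.} The only delicate step is checking that Lemma~\ref{lem:roots-of-unity-trick} is truly in-place when its "output" register is part of the input. Its matrix has $(O,O)$ block $I_O$ and adds only $Mx$. Its construction never assumes the output starts in $\ket0$; this follows directly from the form of $C_M$ in Lemma~\ref{lem:clean-cnot-parallelisation}. Granting that, the remaining work is only the parameter bookkeeping above.
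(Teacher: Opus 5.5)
Your proposal is correct and follows the paper's proof: pad with $M\oplus(1)$, factor by Theorem~\ref{thm:urschel}, implement each block-unitriangular factor in place via Lemma~\ref{lem:roots-of-unity-trick} with the two halves alternating as control and target, reuse the restored work register, and conclude with Lemma~\ref{lem:globalphase_basis_state}. The one bookkeeping slip is that for $s=1$ your choice $s'=1$ exceeds $n'/\log^2 n'$ when $5\le n'\le 15$ (for example $n=16$, $n'=8$), so the exceptional set must be ``sufficiently small $n$'' rather than $n\le 3$, as in the paper's ``sufficiently large $n$'' with $s'=\min\{s,\lfloor r/\log^2 r\rfloor\}$; this is harmless, since finitely many $n$ are absorbed into the constants.
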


\begin{proof}
We construct the circuit on computational basis states and then use
linearity. It suffices to treat sufficiently large $n$, since bounded
values can be handled by Gaussian elimination without additional qubits.
We may also take $n=2r$ to be even: for odd $n$, append one catalytic
qubit and apply the even-dimensional construction to $\begin{pmatrix}
M & 0\\
0 & 1
\end{pmatrix}.
$
The extended map leaves this qubit unchanged and has the same asymptotic
resource bounds.
By Theorem~\ref{thm:urschel}, there exist
$A_1,\ldots,A_6\in\mathbb F_2^{r\times r}$ such that
\[
    M=M_1M_2M_3M_4M_5M_6,
\]
where
\[
    M_i=
    \begin{cases}
        \begin{pmatrix}I&0\\A_i&I\end{pmatrix},&i\text{ odd},\\[1ex]
        \begin{pmatrix}I&A_i\\0&I\end{pmatrix},&i\text{ even}.
    \end{cases}
\]

Split the input into two $r$-qubit halves and use the first half as the
control register for odd $i$, reversing the roles for even $i$. To apply
Lemma~\ref{lem:roots-of-unity-trick} in dimension $r$, choose
$s':=\min\{s,\lfloor r/\log^2 r\rfloor\}$. For sufficiently large $n$,
$s'\geq 1$ and $s'=\Theta(s)$, so this choice preserves both asymptotic
bounds. The lemma gives a circuit $\mathcal C_i$ for each $M_i$.
For odd $i$, call the two halves $I$ and $O$ and write their basis
labels as $x_{\mathrm{in}}$ and $x_{\mathrm{out}}$. The circuit acts as
\[
    \mathcal C_i
    \ket{x_{\mathrm{in}}}_I
    \ket{x_{\mathrm{out}}}_O
    \ket{z}_W
    =
    \ket{x_{\mathrm{in}}}_I
    \ket{x_{\mathrm{out}}+A_ix_{\mathrm{in}}}_O
    \ket{z}_W.
\]
The analogous identity holds for even $i$ with the two data registers
interchanged.

The circuit for $M$ is their product
\[
    \mathcal C_c
    :=\mathcal C_1\mathcal C_2\mathcal C_3
      \mathcal C_4\mathcal C_5\mathcal C_6,
\]
with the rightmost factor acting first. For every computational basis
state $\ket{x}$ and work basis state $\ket{z}$, it satisfies
\[
    \mathcal C_c\ket{x}_I\ket{z}_W
    =\ket{Mx}_I\ket{z}_W.
\]
No phase depends on $x$ or $z$, since $\mathcal C_c$ consists only of
$\CNOT$ gates. Lemma~\ref{lem:globalphase_basis_state} extends the
identity to arbitrary states of the input and work registers.

Each $\mathcal C_i$ uses $O(s'r)=O(sn)$ catalytic qubits and has depth
$O(r/(s'\log r))=O(n/(s\log n))$. Since every factor restores the work
register, all six circuits can reuse it, giving total workspace $O(sn)$
and depth $O(n/(s\log n))$.
\end{proof}

This result is in fact optimal. The lower bound given in Theorem~\ref{thm:jiang-lowerbound} by Jiang et. al. \cite{doi:10.1137/1.9781611975994.13} holds for clean qubits, therefore in particular, it also holds for catalytic qubits.
\subsection{From \texorpdfstring{$\CNOT$}{CNOT} circuits to Clifford circuits}

The Clifford normal form contains only a constant number of $\CNOT$
circuits. Applying Theorem~\ref{thm:catalytic-cnot} to each of them
therefore gives the same asymptotic depth and workspace for Clifford
circuits, which will be used to prove \autoref{thm:catalytic-clifford}.

\begin{lemma}
\label{cor:catalytic-clifford}
Let $n\geq 2$ and let $s$ be an integer with $1\leq s\leq n/\log^2 n$.
For every $n$-qubit Clifford circuit $\mathcal C$, there exists a
Clifford circuit $\mathcal C_c$ of depth
\[
    O\!\left(\frac{n}{s\log n}\right)
\]
using $O(sn)$ catalytic qubits, such that, for every input state $\ket{\psi}$ and every
work state $\ket{\tau}_W$,
\[
    \mathcal C_c
    \bigl(\ket{\psi}_I\ket{\tau}_W\bigr)
    =
    (\mathcal C\otimes I_W)
    \bigl(\ket{\psi}_I\ket{\tau}_W\bigr),
\]
up to the global phase convention for Clifford circuits.
\end{lemma}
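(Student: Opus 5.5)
The plan is to combine the Aaronson--Gottesman normal form with the catalytic $\CNOT$ construction of Theorem~\ref{thm:catalytic-cnot}. By Theorem~\ref{thm:aaronson-gottesman}, the given Clifford circuit $\mathcal C$ is equal, up to a global phase, to a product of eleven rounds $H\!-\!C\!-\!S\!-\!C\!-\!S\!-\!C\!-\!H\!-\!S\!-\!C\!-\!S\!-\!C$. The single-qubit rounds have depth one and act only on $I$. There are five $\CNOT$ rounds, and each defines some $M_j\in\operatorname{GL}(n,2)$. We replace the $j$th $\CNOT$ round by the catalytic circuit $\mathcal C_c^{(j)}$ of Theorem~\ref{thm:catalytic-cnot}. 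All five replacements share one work register $W$ of $O(sn)$ qubits. The single-qubit layers are left unchanged and act as $L\otimes I_W$.

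To verify correctness, we work at the operator level rather than on basis states, since the $H$ layers destroy the computational-basis description. Theorem~\ref{thm:catalytic-cnot}, combined with Lemma~\ref{lem:globalphase_basis_state}, gives the operator identity $\mathcal C_c^{(j)}=U_{M_j}\otimes I_W$ exactly, with no phase. Here $U_{M_j}$ is the permutation unitary $\ket{x}\mapsto\ket{M_jx}$, which is the unitary realised by the original $\CNOT$ round. Each single-qubit layer is trivially of the form $L\otimes I_W$. The product of operators of the form $A_k\otimes I_W$ is $(\prod_k A_k)\otimes I_W$. The resulting operator is therefore $(\text{normal form of }\mathcal C)\otimes I_W=e^{i\theta}(\mathcal C\otimes I_W)$, where the phase $\theta$ comes only from Theorem~\ref{thm:aaronson-gottesman}. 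In particular, $\theta$ is independent of the input and of the work state, as the global phase convention requires. Because this is an operator identity, it holds for arbitrary, possibly entangled, states $\ket{\psi}_I\ket{\tau}_W$, and also after tensoring with an external register.

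For the resources, the depth is $5\cdot O(n/(s\log n))+6=O(n/(s\log n))$. The additive constant is absorbed because $s\le n/\log^2 n$ implies $n/(s\log n)\ge\log n\ge 1$. Since each catalytic block restores $W$ exactly, the register can be reused, and the workspace stays $O(sn)$. The only gates used are $H$, $S$ and $\CNOT$, so $\mathcal C_c$ is a Clifford circuit.

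The main point needing care is the phase and entanglement bookkeeping across the non-$\CNOT$ layers. We cannot simply apply the basis-state criterion to the whole circuit, since Hadamards mix basis states. The fix is the one above: promote each catalytic $\CNOT$ block to an exact operator identity on $I\otimes W$ before composing. A small-$n$ case also needs attention, namely when the hypotheses of Theorem~\ref{thm:catalytic-cnot} are vacuous. There we use the direct circuit from Gaussian elimination, which has constant depth for bounded $n$ and needs no work qubits.
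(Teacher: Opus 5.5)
Your proposal is correct and follows the paper's proof. You apply the Aaronson--Gottesman normal form, replace each of the five $\CNOT$ rounds by the catalytic circuit of Theorem~\ref{thm:catalytic-cnot}, and reuse one work register, since each block restores it exactly; the operator-level composition of factors $A_k\otimes I_W$ and the depth count $5\cdot O(n/(s\log n))+6$ just spell out details the paper leaves implicit.
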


\begin{proof}
Use the eleven rounds of Theorem~\ref{thm:aaronson-gottesman} and replace
each $\CNOT$ circuit by its catalytic implementation from
Theorem~\ref{thm:catalytic-cnot}. These circuits restore the work
register after every round, so they may all use the same catalytic
qubits. The intervening single-qubit gates add only constant depth,
which gives the claimed bounds.
\end{proof}

\subsection{Application to constructing low-T count approximate Toffoli}
\label{sec:Toffoli_construction}
An important subroutine in many quantum algorithms is the multi-controlled Toffoli gate,
which computes $\mathsf{AND}_n$ of $n$ control qubits into a target qubit. For instance,
up to Clifford gates, this gate forms the diffusion operator used in Grover's algorithm.
It has been shown that an exact implementation of this gate requires at least $n$
T gates, or magic states, even when measurements and classical feed-forward are
allowed~\cite{Beverland_2020}. 

Gosset et al.~\cite[Theorem~1]{GKZ25} showed that this $\mathsf{T}$-count can be
exponentially improved when allowing some error on the implementation of the Toffoli gate:
$O(\log(1/\epsilon))$ $\mathsf{T}$ gates suffice for every $n$ and $\epsilon\in(0,1/2]$. This
result holds specifically in the mixed model of unitary implementation. Mixed here means that
they generate an ensemble of unitary circuits, of which one is picked at random and then
implemented. For every computational basis input, the chosen circuit acts correctly with
probability at least $1-\epsilon$, and the averaged channel is $\epsilon$-close to the Toffoli
gate in diamond norm~\cite[Definition~10]{GKZ25}. The $\mathsf{T}$-count bound holds for all
circuits in the ensemble.

In this section we show that their construction can be improved in depth when given access to
catalytic qubits, and be made optimal when enough are available. This could be of use in
general compilers and reduce the overall overhead in quantum circuits. The depth reduction
scales with the number of catalytic qubits available. The direct implementation of the
construction of~\cite{GKZ25} has depth $O(\log(n)\log(1/\epsilon))$. An algorithm that uses
this gate $\mathrm{poly}(n)$ times needs per-gate error $\epsilon=1/\mathrm{poly}(n)$ for a
constant total error, by the union bound. The depth per gate then becomes $O(\log^2(n))$. We
reduce this to $O\bigl(\log(n)\frac{\log(1/\epsilon)}{s}+\log\log(1/\epsilon)\bigr)$ when
having access to $sn$ catalytic qubits. This reduces to $O(\log(n))$ depth when
$s=\Theta(\log(n))$, which is optimal for the $n$-qubit Toffoli gate among circuits of
two-qubit gates.

\begin{lemma}[{Random-parity $\mathsf{OR}$, based on~\cite[Algorithm~1 and Theorem~12]{GKZ25}}]
\label{lem:random-or}
For any positive integer $n$ and $\epsilon < 1/2$,
there exists an ensemble of circuits $\{C_j\}_j$ of depth $O(\log(n)\log(1/\epsilon))$ using
$O(\log(1/\epsilon))$ T gates and $O(\log(1/\epsilon))$ clean work space, such that every
$C_j$ returns $W$ to $\ket{0}$ and, for every $x \in \{0,1\}^n$ and $b\in\{0,1\}$, with
probability at least $1-\epsilon$ over a uniformly random $j$,
\[
    C_j\ket{x}_{I}\ket{b}_{O}\ket{0}_W = \ket{x}_I\ket{b\oplus\mathsf{OR}_n(x)}_O\ket{0}_W.
\]
Moreover, the channel
\[
    \mathcal{E}(\rho)=\mathbb{E}_j\bigl[C_j\,(\rho\otimes\ket{0}\!\bra{0}_W)\,C_j^\dagger\bigr]
\]
equals $\mathcal{E}'(\rho)\otimes\ket{0}\!\bra{0}_W$ with
$D_\diamond\bigl(\mathcal{E}',U_{\mathsf{OR}_n}\bigr)\le\epsilon$.
\end{lemma}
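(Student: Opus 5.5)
The lemma is essentially a restatement of the construction of Gosset et al.~\cite[Algorithm~1 and Theorem~12]{GKZ25}. My plan is to recall that construction, read off its resources, and check the depth bound.

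The construction samples $t=O(\log(1/\epsilon))$ independent uniformly random subsets $S_1,\ldots,S_t\subseteq[n]$; the index $j$ ranges over these tuples. For each $S_\ell$ we compute the parity $p_\ell=\bigoplus_{i\in S_\ell}x_i$ into a clean work qubit $w_\ell$. The key facts are:
\begin{itemize}
\item If $x=0$, every $p_\ell=0$.
\item If $x\neq 0$, each $p_\ell$ is an independent uniform bit.
\end{itemize}
Hence $\mathsf{OR}_n(x)=\mathsf{OR}_t(p_1,\ldots,p_t)$ except with probability $2^{-t}\le\epsilon$, taken over the choice of $j$ and failing only for $x\neq0$. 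We then XOR $\mathsf{OR}_t(p)$ into $O$ using an exact $t$-controlled Toffoli on the $t$ parity qubits, conjugated by $X$ gates. This step costs $O(t)$ $\mathsf{T}$ gates and $O(t)$ further clean ancillas, for instance via the standard compute/uncompute AND-tree with relative-phase Toffolis. Finally we uncompute the parities by repeating the same $\CNOT$ pattern. This returns $W$ exactly to $\ket{0}$ for every $x$ and every $j$, because the parity computation is a reversible linear map that is undone exactly, independent of whether the $\mathsf{OR}$ step was correct.

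Resource accounting goes as follows.
\begin{itemize}
\item \textbf{$\mathsf{T}$-count.} Only the $t$-fold Toffoli uses non-Clifford gates, giving $O(\log(1/\epsilon))$.
\item \textbf{Clean space.} The parity qubits plus the Toffoli ancillas give $O(\log(1/\epsilon))$.
\item \textbf{Depth.} Each parity is a fan-in-$n$ XOR, computable in depth $O(\log n)$ by a $\CNOT$ tree. The $t$ parities share the input qubits as controls, so computing them one after another gives depth $O(\log(n)\log(1/\epsilon))$. The Toffoli on $t$ qubits has depth at most $O(t)$, which is absorbed.
\end{itemize}
One caveat: a single parity tree computed in place on the input must be undone before the next one runs. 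I would therefore compute each parity as a $\log n$-depth cascade into $w_\ell$ using only $\CNOT$s from $I$. Doing it in place on $I$ and then reversing also works, and adds only a constant factor.

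The channel statement follows by linearity and averaging. Every $C_j$ maps $\ket{x}\ket{b}\ket{0}_W$ to $\ket{x}\ket{b'}\ket{0}_W$ with no phase depending on $W$, so $C_j$ restricted to $W=\ket{0}$ is a unitary $U_j$ on $IO$ tensored with $\ket{0}_W$. This gives $\mathcal E=\mathcal E'\otimes\ket0\!\bra0_W$ with $\mathcal E'(\rho)=\mathbb E_j U_j\rho U_j^\dagger$.

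The diamond-norm bound $D_\diamond(\mathcal E',U_{\mathsf{OR}_n})\le\epsilon$ is the main obstacle. Per-basis-state correctness alone does not immediately give a diamond bound for superposed inputs. Here I would invoke~\cite[Theorem~12]{GKZ25} directly, which establishes exactly this for the random-parity ensemble.

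If a self-contained argument is wanted, I would use the fact that each $U_j$ is a classical reversible permutation with fixed phases, so $U_j=U_{\mathsf{OR}}$ except on the set of bad $x$. I would then bound the distance via $\mathbb E_j\|(U_j-U_{\mathsf{OR}})\Pi\|$-type estimates, using that $U_j^\dagger U_{\mathsf{OR}}$ is diagonal-like in $x$, and possibly take $t$ larger by a constant to absorb square-root losses. The $O(\cdot)$ bounds tolerate this.
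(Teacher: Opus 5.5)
Your proposal follows the paper's proof essentially step for step: random parities computed into clean qubits, an exact $\mathsf{OR}_t$ (a Toffoli conjugated by $X$ gates) on the parity register, uncomputation of the parities, factorisation of the channel by linearity, and an appeal to \cite[Theorem~12]{GKZ25} for the diamond bound. Two small points need fixing. First, that theorem gives $4\cdot 2^{-t}$, after relating the $\mathsf{OR}_n$ channel to the Toffoli channel of \cite{GKZ25} by fixed $X$ gates, so the paper takes $t=\lceil\log(1/\epsilon)\rceil+2$ rather than $2^{-t}\le\epsilon$. Second, of your two parity options only the in-place tree on $I$ (then reversed) achieves depth $O(\log n)$, because $\CNOT$s that all target the single qubit $w_\ell$ must run sequentially.
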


\begin{proof}
For simplicity, start by setting $k = \lceil \log(1/\epsilon) \rceil + 2$, so that $2^{-k} \leq \epsilon/4$; the extra 2 will be needed only for the diamond-distance bound at the end. 
Instead of implementing an exact $\mathsf{OR}_n$ gate, the circuit implements $k$
$\mathsf{Parity}_S$ gates on uniformly random subsets $S$.  Start by picking a collection of
subsets $\{S_i\}_{i=1}^{k}$, each $S_i \subseteq [n]$ chosen uniformly at random and
independently. If $\mathsf{OR}_n(x) = 0$, then $\mathsf{Parity}_{S_i}(x) = 0$ for every
$i \in [k]$, but if $\mathsf{OR}_n(x) = 1$, then for every $i$, $\mathsf{Parity}_{S_i}(x) = 1$
with probability $1/2$, independently. Let $C_j'$ be the circuit that sequentially applies
$\mathsf{Parity}_{S_i}$ to $x$, with a separate output qubit in $W$ for every $i$:
\[
    C_j'\ket{x}_I\ket{b}_O\ket{0}_W = \ket{x}_I\ket{b}_O\bigotimes_{i=1}^{k}\ket{\mathsf{Parity}_{S_i}(x)}_W .
\]
Let $C_j$ now be the circuit that, after $C_j'$, applies the exact $\mathsf{OR}_{k}$ gate with
the work register as input and the output register as output. Then apply $C_j'^{\dagger}$ to
clean the work register. Every $C_j$ therefore acts as
$C_j\ket{x}_I\ket{b}_O\ket{0}_W=\ket{x}_I\ket{b\oplus g_j(x)}_O\ket{0}_W$ with
$g_j=\mathsf{OR}_k(\mathsf{Parity}_{S_1},\dots,\mathsf{Parity}_{S_k})$. If $x = 0$, thus
$\mathsf{OR}_n(x) = 0$, then $g_j(x)=0$ for every $j$, as required. If
$\mathsf{OR}_n(x) = 1$, then the probability that all $\mathsf{Parity}_{S_i}(x) = 0$ is
$\left(\tfrac{1}{2}\right)^{k} \le \epsilon/4$. Therefore, with probability at least
$1-\epsilon$, one of the $\mathsf{Parity}_{S_i}(x) = 1$ and $g_j(x)=1=\mathsf{OR}_n(x)$.

$C_j$ is the circuit for one choice of the collection $\{S_i\}$. Every $j$ corresponds to
another collection $\{S_i\}$, and the same analysis holds for all of them. We can count the
complexity of $C_j$. There are $k$ $\mathsf{Parity}_{S_i}$ gates, each a CNOT tree of depth
$O(\log n)$, and one exact $\mathsf{OR}_{k}$ gate, a tree of Toffoli gates of depth
$O(\log k)$ with $O(k)$ T gates. Remeber that $k = \lceil \log(1/\epsilon) \rceil + 2$, therefore, the depth is
$O(\log(n)\log(1/\epsilon) + \log\log(1/\epsilon)) = O(\log(n)\log(1/\epsilon))$, it uses
$O(\log(1/\epsilon))$ $T$ gates, and it uses $O(\log(1/\epsilon))$ additional clean qubits.

For the diamond distance, let $U_j\ket{x}\ket{b}=\ket{x}\ket{b\oplus g_j(x)}$. By linearity,
$C_j(\ket{\psi}\otimes\ket{0}_W)=(U_j\ket{\psi})\otimes\ket{0}_W$ for every state
$\ket{\psi}$ on $I$, $O$ and a reference system, so $\mathcal{E}$ factors with
$\mathcal{E}'(\rho)=\mathbb{E}_j[U_j\rho\,U_j^\dagger]$. This is the channel
of~\cite[Algorithm~1]{GKZ25} up to fixed $X$ gates, since
$U_{\mathsf{OR}_n}=(X^{\otimes n}\otimes I)\,\mathrm{Toff}_{n+1}\,X^{\otimes n+1}$ and the same
gates relate $U_j$ to their $W_g$. Fixed unitaries before and after preserve the diamond
distance, so~\cite[Theorem~12]{GKZ25} gives
$D_\diamond(\mathcal{E}',U_{\mathsf{OR}_n})\le 4\cdot 2^{-k}\le\epsilon$.
\end{proof}

The depth of this ensemble of circuits is dominated by the $k$ $\mathsf{Parity}_{S_i}$ gates.
The $\mathsf{OR}_k$ gate adds only depth $O(\log k)=O(\log\log(1/\epsilon))$, so
\[
    \operatorname{depth}(C_j)=2\operatorname{depth}(C'_j)+O(\log\log(1/\epsilon)).
\]

The circuit $C'_j$ is a CNOT circuit: it computes the linear map
$\ket{x}_I\ket{y}_W\mapsto\ket{x}_I\ket{y\oplus Mx}_W$, where the rows of
$M\in\F_2^{k\times n}$ are the indicator vectors of $S_1,\dots,S_k$. Linear maps of this form
are exactly what the catalytic compiler parallelizes. We use it to give a depth-improved
implementation of the approximate $\mathsf{OR}_n$ gate.
\begin{theorem}\label{thm:catalytic-or}
For any integer $n\ge 2$, any $\epsilon$ with $2^{-n}\le\epsilon<1/2$ and any integer $s$ with
$1\le s\le\lceil\log(1/\epsilon)\rceil$, there exists an ensemble of catalytic circuits $\{C_j\}_j$ of
depth $O\bigl(\log(n)\frac{\log(1/\epsilon)}{s}+\log\log(1/\epsilon)\bigr)$ using $sn$
catalytic qubits, $O(\log(1/\epsilon))$ T gates and $O(\log(1/\epsilon))$ clean qubits, such
that the channel
\[
    \mathcal{E}(\rho\otimes\ket{0}\!\bra{0}_{W_c}\otimes\ket{\tau}\bra{\tau}_{W_{cat}})
    =\mathbb{E}_j\bigl[C_j\,(\rho\otimes\ket{0}\!\bra{0}_{W_c}\otimes\ket{\tau}\bra{\tau}_{W_{cat}})\,C_j^\dagger\bigr]
\]
returns the registers $W_c$ and $W_{cat}$ exactly for every catalyst state $\ket{\tau}$,
\[
    \mathcal{E}(\rho\otimes\ket{0}\!\bra{0}_{W_c}\otimes\tau_{W_{cat}})
    =\mathcal{E}'(\rho)\otimes\ket{0}\!\bra{0}_{W_c}\otimes\tau_{W_{cat}},
\]
and is close to the $\mathsf{OR}_n$ gate in diamond norm,
$D_\diamond\bigl(\mathcal{E}',U_{\mathsf{OR}_n}\bigr)\le\epsilon$.
\end{theorem}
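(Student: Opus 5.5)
Starting from the ensemble $\{C_j\}_j$ of Lemma~\ref{lem:random-or}, I keep its structure $C_j = C_j'^{\dagger}\,\mathsf{OR}_k\,C_j'$ and replace only the two linear parts $C_j'$ and $C_j'^{\dagger}$ by catalytic circuits. Set $k=\lceil\log(1/\epsilon)\rceil+2$ as before, and let $W_c$ be the $k$ clean parity qubits together with the $O(k)$ clean ancillas of the $\mathsf{OR}_k$ tree. Note that $C_j'$ is the addition $\ket{x}_I\ket{y}_{W_c}\mapsto\ket{x}_I\ket{y+Mx}_{W_c}$, where $M\in\F_2^{k\times n}$ has rows given by the indicators of $S_1,\dots,S_k$. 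Since the output register is a register of clean qubits, I only need catalytic addition, not the in-place version. Thus Lemma~\ref{lem:roots-of-unity-trick} (or rather its underlying construction) is the right tool, and Urschel's factorisation is not needed.

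The main step, and the main obstacle, is that Lemma~\ref{lem:roots-of-unity-trick} is stated for square $n\times n$ matrices, with the parameter in $[1,n/\log^2 n]$ and depth $O(n/(s\log n))$. Here I need a rectangular $k\times n$ block with depth $O(\log(n)\,k/s)$ using $sn$ catalytic qubits. My plan is to split the $k$ rows into $\lceil k/s\rceil$ groups of at most $s$ rows each and handle the groups sequentially.

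For a single group of $s$ rows, I compute the $s$ parities in parallel. Each parity is a fan-in over $n$ bits, so computing them directly would need $s$ simultaneous copies of the input. Those copies live in a catalytic block of $sn$ qubits. The copying step XORs $x$ into each of the $s$ blocks by a fan-out tree of depth $O(\log n)$. Next, a CNOT tree inside each block XORs the parity of the coordinates in $S_i$ into the target qubit. Finally, the copying and the in-block trees are uncomputed. Without the trick, the target picks up an extra term $Bz$ that depends on the catalyst state $z$. This circuit has exactly the form $C_M$ of Lemma~\ref{lem:clean-cnot-parallelisation}: the $(W,W)$ block is $I_W$, since the circuit is uncomputed, and the $(I,W)$ block is $0$.

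I then apply the three-fold conjugation by $D=\mathrm{diag}(I,I,\mathbf P)$ from the proof of Lemma~\ref{lem:roots-of-unity-trick} verbatim. This uses $P^3=I$ and $I+\mathbf P+\mathbf P^2=0$, and it cancels $Bz$ exactly for every basis string $z$. The resulting group circuit has depth $O(\log n)$. Doing the $\lceil k/s\rceil$ groups in sequence, reusing the same catalytic block, gives depth $O(\log(n)\,k/s)$. This matches the claimed bound because $s\le\lceil\log(1/\epsilon)\rceil$, so $k/s=\Theta(\log(1/\epsilon)/s)$. The block uses $sn$ catalytic qubits, up to the parity-of-$m$ padding qubit, which I absorb into $O(\cdot)$ or avoid by pairing within blocks. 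The hypothesis $\epsilon\ge 2^{-n}$ keeps $k=O(n)$, so that the clean space $O(k)$ and the constants stay sensible.

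Correctness then follows from Lemma~\ref{lem:globalphase_basis_state}. On each basis state $\ket{x}\ket{b}\ket{0}_{W_c}\ket{z}_{W_{cat}}$, the catalytic $C_j$ acts exactly as the original $C_j$ tensored with the identity on $W_{cat}$. No phase arises, because every gate is a CNOT or the exact Toffoli tree, and the Toffoli tree touches only $W_c$ and $O$. Hence $C_j^{\mathrm{cat}}=C_j\otimes I_{W_{cat}}$ as operators, so for any $\tau$ (possibly entangled with a reference) the averaged channel equals $\mathbb{E}_j[C_j(\rho\otimes\ket0\!\bra0)C_j^\dagger]\otimes\tau$. By Lemma~\ref{lem:random-or}, this is $\mathcal{E}'(\rho)\otimes\ket0\!\bra0_{W_c}\otimes\tau_{W_{cat}}$ with $D_\diamond(\mathcal{E}',U_{\mathsf{OR}_n})\le\epsilon$. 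Finally, the T-count $O(\log(1/\epsilon))$ and the clean space $O(\log(1/\epsilon))$ carry over unchanged, and the $\mathsf{OR}_k$ tree contributes the additive $O(\log\log(1/\epsilon))$ term in the depth.
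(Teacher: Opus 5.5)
Your proof is correct, but it cancels the dependence on the catalyst by a different mechanism than the paper.

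\textbf{How the paper does it.} The paper uses the same skeleton as you: $s$ catalyst rows of $n$ qubits, and the $k$ parities processed in $\lceil k/s\rceil$ batches. Within a batch it uses the two-term linearity trick.
It fans $x$ into the rows with Lemma~\ref{lem:dirty-target-fanout}, computes $\mathsf{Parity}_{S_l}(x\oplus z_l)$ into $a_l$, and fans $x$ out again. It then computes $\mathsf{Parity}_{S_l}(z_l)$ into $a_l$ a second time. The $z$-terms cancel because $\mathsf{Parity}_S(x\oplus z)\oplus\mathsf{Parity}_S(z)=\mathsf{Parity}_S(x)$.

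\textbf{How yours differs.} You treat each batch as a clean-workspace addition $C_M$ with a residual term $Bz$. You remove that term with the three-fold $\mathbf P$-conjugation from Lemma~\ref{lem:roots-of-unity-trick}, using $I+\mathbf P+\mathbf P^2=0$ and $M+M+M=M$.

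\textbf{What your route buys.} It is more black-box: any linear circuit that restores its workspace and uses the input only as controls becomes catalytic. In particular, you do not need the fan-out to behave correctly on dirty targets. A naive fan-out tree routed through the targets still contributes exactly $x$ to each row and only changes $B$, which the conjugation removes anyway. The paper's route, by contrast, needs the rows to hold exactly $z_l\oplus x$, which is why it relies on Lemma~\ref{lem:dirty-target-fanout}.

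\textbf{What the paper's route buys.} It is cheaper by a constant factor: two parity passes and two fan-outs per batch, versus your three full copies plus the $D$ layers. It also needs no pairing of catalyst qubits, so it uses exactly $sn$ of them.

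\textbf{Catalyst count in your version.} When $s$ and $n$ are both odd, your conjugation needs a padding qubit, giving $sn+1$ rather than the stated $sn$. Your suggestion to pair within blocks does not help here, since each block has odd size $n$. This is harmless asymptotically and easy to repair. One option is to adopt the paper's two-pass cancellation. Another is to compute one parity per batch in place on the input register, as in Lemma~\ref{lem:random-or}, so that only the $(s-1)n$ catalyst qubits, an even number, need to be paired.
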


\begin{proof}
In principle, we use the same construction as in Lemma~\ref{lem:random-or}, with the
adjustment that the circuit $C'_j$ consisting of the $\mathsf{Parity}_{S_i}$ gates is
implemented using the catalytic compiler. We prove this first for the catalyst in a
computational basis state $\ket{z}$ and then invoke Lemma~\ref{lem:globalphase_basis_state}
to prove it for general $\tau$. Let $k=\lceil\log(1/\epsilon)\rceil+2$, and split the catalyst
into $s$ rows $W_{cat}=Z_1\cdots Z_s$ of $n$ qubits each, holding $z_1,\dots,z_s$. Batch the
$k$ parity gates into $\lceil k/s\rceil$ batches of at most $s$ gates. For a batch with
subsets $S_1,\dots,S_s$ (after relabelling) and target qubits $a_1,\dots,a_s$ in $W_c$,
perform the following circuit. Use fanout from Lemma~\ref{lem:dirty-target-fanout} to copy
$x$ into the catalyst:
\[
    F\ket{x}\bigotimes_{l=1}^{s}\ket{0}_{a_l}\ket{z_l}_{Z_l}
    \longmapsto\ket{x}\bigotimes_{l=1}^{s}\ket{0}_{a_l}\ket{z_l\oplus x}_{Z_l}.
\]
Then, apply the batch of $\mathsf{Parity}_{S_l}$ gates in parallel, the $l$-th one reading
row $Z_l$ and writing into $a_l$:
\[
    \bigotimes_{l=1}^{s}U_{\mathsf{Parity}_{S_l}}\,
    \ket{x}\bigotimes_{l=1}^{s}\ket{0}_{a_l}\ket{z_l\oplus x}_{Z_l}
    \longmapsto\ket{x}\bigotimes_{l=1}^{s}\ket{\mathsf{Parity}_{S_l}(x\oplus z_l)}_{a_l}\ket{z_l\oplus x}_{Z_l}.
\]
Each $U_{\mathsf{Parity}_{S_l}}$ is an in-place CNOT tree on the qubits of $Z_l$ in $S_l$, one
CNOT into $a_l$, and the inverse tree. It has depth $O(\log n)$ and leaves $Z_l$ unchanged,
and the rows are disjoint, so the batch runs in parallel. Remove $x$ from the catalytic
register using fanout again:
\[
    F\ket{x}\bigotimes_{l=1}^{s}\ket{\mathsf{Parity}_{S_l}(x\oplus z_l)}_{a_l}\ket{z_l\oplus x}_{Z_l}
    \longmapsto\ket{x}\bigotimes_{l=1}^{s}\ket{\mathsf{Parity}_{S_l}(x\oplus z_l)}_{a_l}\ket{z_l}_{Z_l}.
\]
Finally, apply the batch of $\mathsf{Parity}_{S_l}$ gates one more time:
\[
    \bigotimes_{l=1}^{s}U_{\mathsf{Parity}_{S_l}}\,
    \ket{x}\bigotimes_{l=1}^{s}\ket{\mathsf{Parity}_{S_l}(x\oplus z_l)}_{a_l}\ket{z_l}_{Z_l}
    \longmapsto\ket{x}\bigotimes_{l=1}^{s}\ket{\mathsf{Parity}_{S_l}(x)}_{a_l}\ket{z_l}_{Z_l},
\]
where we use the fact that
$\mathsf{Parity}_{S}(x\oplus z)\oplus\mathsf{Parity}_{S}(z)=\mathsf{Parity}_{S}(x)$. Repeat
this for every batch, reusing the same rows $Z_l$; this implements $C'_j$ and returns the
catalyst to $\ket{z}$. The rest of the circuit is the same as in Lemma~\ref{lem:random-or}.

Because the catalytic implementation of $C'_j$ has exactly the same action as the sequential
one on every basis state, and fixes $z$, each $C_j$ acts as the circuit of
Lemma~\ref{lem:random-or} on $I$, $O$ and $W_c$, and as the identity on $W_{cat}$. By
Lemma~\ref{lem:globalphase_basis_state} this holds for every catalyst state $\tau$, also when
$W_{cat}$ is entangled with a reference system. Hence $\mathcal{E}$ factors as stated, and
the diamond-norm bound of Lemma~\ref{lem:random-or} carries over, with $W_{cat}$ taken as part
of the reference system.

Accounting for complexity: one batch has depth $O(\log n+\log s)$, from two fanouts of depth
$O(\log s)$ and two parity layers of depth $O(\log n)$. Since $\epsilon\ge 2^{-n}$, we have
$s\le\log(1/\epsilon)\le n$, so $\log s\le\log n$. There are
$\lceil k/s\rceil\le 2k/s=O\bigl(\frac{\log(1/\epsilon)}{s}\bigr)$ batches, so the catalytic
$C'_j$ has depth $O\bigl(\log(n)\frac{\log(1/\epsilon)}{s}\bigr)$ and uses $sn$ catalytic
qubits. The $\mathsf{OR}_k$ gate has depth $O(\log k)=O(1+\log\log(1/\epsilon))$ and uses
$O(\log(1/\epsilon))$ T gates. The final depth is therefore
$O\bigl(\log(n)\frac{\log(1/\epsilon)}{s}+\log\log(1/\epsilon)\bigr)$, where the constant is
absorbed by the first term.
\end{proof}

\section{Generalizations}
\subsection{Diagonal elements of the Clifford hierarchy}
\label{sec:diagonal-hierarchy}

The cancellation used for binary addition has an analogue for diagonal
phases. We develop it for $n$-qubit diagonal unitaries in a fixed level
$C_k$, with $k\geq 2$, where the level bounds both the degree and the
precision of the phase polynomial. As before, $W$ denotes the catalytic
work register and $I_W$ its identity operator.

For $x\in\{0,1\}^n$ and a nonempty set $S\subseteq\{1,\ldots,n\}$, write
$x_S:=\prod_{i\in S}x_i$.

\begin{theorem}[Diagonal normal form~{\cite[Theorem~3]{CuiGottesmanKrishna2017}}]
\label{thm:diagonal-normal-form}
Let $S$ range over the nonempty subsets of $\{1,\ldots,n\}$ with
$|S|\leq k$, and choose integers $0\leq a_S<2^{k-|S|+1}$. Up to a global
phase, the $n$-qubit diagonal elements of $C_k$ are precisely the
unitaries defined by
\begin{equation}
\label{eq:diagonal-normal-form}
    p(x)=\sum_S 2^{|S|-1}a_Sx_S,
    \qquad
    D\ket{x}=e^{2\pi i p(x)/2^k}\ket{x}.
\end{equation}
\end{theorem}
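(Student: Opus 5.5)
The plan is to prove both inclusions by induction on $k$. The engine is the identity that relates conjugation of a diagonal unitary by a Pauli to a discrete derivative of its phase function. Write a diagonal $D$ as $D\ket{x}=e^{2\pi i f(x)}\ket{x}$ with $f:\{0,1\}^n\to\mathbb R/\mathbb Z$. For a Pauli $P=X^aZ^b$ (up to phase), $Z^b$ commutes with $D$, and
\[
    DX^aD^\dagger = X^a\,\Delta_a D,\qquad \Delta_a D\ket{x}=e^{2\pi i(f(x)-f(x\oplus a))}\ket{x}.
\]
Base case $k=1$: a diagonal Pauli is $Z^b$ up to phase, i.e.\ $p(x)=\sum_i a_ix_i$ with $a_i\in\{0,1\}$ and phase $e^{2\pi i p/2}$. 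This is exactly \eqref{eq:diagonal-normal-form} for $k=1$, where $|S|\le1$ and $a_S<2$.

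For the inductive step I first need a small closure fact, proved by a separate induction on the level: Pauli conjugation maps $C_j$ to itself, and $QU\in C_j$ whenever $U\in C_j$ and $Q$ is a Pauli, up to phase. Indeed, $(QU)P(QU)^\dagger=Q(UPU^\dagger)Q^\dagger$. With this, $D\in C_k$ if and only if $\Delta_aD\in C_{k-1}$ up to phase for every $a\in\mathbb F_2^n$. Since $\Delta_aD$ is diagonal, the inductive hypothesis then reduces the whole question to a statement about phase functions: $f$ has the level-$k$ form (up to an additive constant) iff every derivative $f(x\oplus a)-f(x)$ has the level-$(k-1)$ form up to a constant.

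For sufficiency, I expand a single monomial term $2^{|S|-1}a_Sx_S/2^k$ at the shifted point. Using the bitwise identity $(x\oplus a)_i=x_i+a_i-2a_ix_i$, the term $x_S$ evaluated at $x\oplus a$ becomes $\prod_{i\in S}(x_i+a_i-2a_ix_i)$. Expanding this, every surviving monomial $x_T$ with $T\subseteq S$ comes with a coefficient divisible by $2^{|S\setminus T|}$ or involving only the constant $a$-part. Subtracting $x_S$ removes the top-degree term. Each remaining monomial $x_T$ with $|T|<|S|$ then carries a factor $2^{|S|-1}\cdot(\text{integer})$, so the coefficient of $x_T$ has the form $2^{|T|-1}a'_T$ over $2^{k-1}$ after one factor of $2$ is absorbed into the denominator. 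I will also have to check the constant $2^{-k}$ term separately, since it becomes a global phase. The bound $a'_T<2^{(k-1)-|T|+1}$ holds modulo the period, so this direction is bookkeeping.

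Necessity is where I expect the main obstacle. My plan is to write an arbitrary $f$ uniquely as a multilinear polynomial $f=\sum_S c_Sx_S$ with real coefficients taken modulo the integer lattice spanned by the monomials. I then use only the single-bit directions $a=e_j$, which suffice for this direction. Here $f(x\oplus e_j)-f(x)=(1-2x_j)\partial_jf$ with $\partial_jf=\sum_{S\ni j}c_Sx_{S\setminus j}$. Rewriting $(1-2x_j)x_{S\setminus j}=x_{S\setminus j}-2x_S$ and imposing the level-$(k-1)$ normal form on the result gives, by comparing coefficients top-down in $|S|$, the following. The degree of $f$ is at most $k$, and $c_S\in 2^{|S|-1}2^{-k}\mathbb Z$ modulo the appropriate lattice. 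The delicate part is the order of the argument. I would proceed by downward induction on $|S|$, first pinning the coefficient $-2c_S$ of the maximal monomials and then peeling off the lower terms. The carries between levels, and the non-uniqueness of the integer-lattice representative, are what I expect to require the most care.
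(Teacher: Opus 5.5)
The paper does not prove this statement. It quotes the result of Cui, Gottesman and Krishna and adds only the short uniqueness check on indicator strings. Your proposal is therefore an independent, self-contained proof: induction on $k$, with Pauli conjugation of a diagonal gate turned into a discrete derivative of its phase function.

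Uniqueness of multilinear coefficients makes this work. Each $c_S$ is well defined modulo $1$, because integer-valued functions on $\{0,1\}^n$ have integer M\"obius coefficients. So there is no lattice non-uniqueness to worry about, and this also subsumes the paper's uniqueness remark. Your route gives a proof from first principles; the citation gives brevity. The skeleton is correct, including using all directions $a$ for sufficiency and only $a=e_j$ for necessity, but several steps are misstated.

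\emph{Sign.} With your convention, $DX^aD^\dagger=(\Delta_aD)X^a$ (equivalently $X^a(\Delta_aD)^\dagger$), not $X^a\Delta_aD$. This is harmless, since the normal-form class is closed under $p\mapsto -p$. For $k-1=1$, the phase of $\Delta_aD$ is forced into $\{\pm1,\pm i\}$ automatically, because $DX^aD^\dagger$ is Hermitian.

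\emph{Sufficiency.} Subtracting $x_S$ does not remove the top-degree term. The $x_S$-coefficient of $x_S(x\oplus a)-x_S(x)$ is $(-1)^{|S\cap\operatorname{supp}a|}-1$, which equals $-2$ when $|S\cap\operatorname{supp}a|$ is odd; already $(x\oplus1)-x=1-2x$ for $n=1$. Also, the coefficient of $x_T$ in $\prod_{i\in S}(x_i+a_i-2a_ix_i)$ is $\prod_{i\in S\setminus T}a_i\prod_{i\in T}(1-2a_i)\in\{0,\pm1\}$, not a multiple of $2^{|S\setminus T|}$. The conclusion still holds:
\begin{itemize}
\item The leftover top term contributes $-2^{|S|-1}a_Sx_S/2^{k-1}$. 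This has level-$(k-1)$ form if $|S|\le k-1$ and is an integer if $|S|=k$.
\item For $T\subsetneq S$, the needed factor comes from $2^{|S|-1}=2^{|T|-1}2^{|S|-|T|}$ alone.
\end{itemize}

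\emph{Necessity.} This is not an obstacle, and the order you propose looks at the wrong coefficient. In $(1-2x_j)\partial_jf=\sum_{S\ni j}c_S\bigl(x_{S\setminus\{j\}}-2x_S\bigr)$, monomials without $j$ come only from the first terms and monomials containing $j$ only from the second. So there are no carries, and the coefficient of $x_{S\setminus\{j\}}$ is exactly $c_S$.
\begin{itemize}
\item For $|S|\ge2$ and $j\in S$, the level-$(k-1)$ constraint on this coefficient gives $c_S\in 2^{|S|-1-k}\mathbb Z$ modulo $1$. This is precisely the level-$k$ condition, and for $|S|>k$ it forces $c_S\equiv0$, which gives the degree bound.
\item For $S=\{j\}$, the $x_j$-coefficient gives $-2c_j\in2^{1-k}\mathbb Z$, hence $c_j\in2^{-k}\mathbb Z+\tfrac12\mathbb Z=2^{-k}\mathbb Z$.
\end{itemize}
By contrast, pinning the top coefficient $-2c_S$ first only determines $c_S$ modulo $\tfrac12$. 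It cannot exclude, for example, $c_{\{1,2,3\}}=\tfrac12$ at $k=2$, even though $\mathrm{CCZ}\notin C_2$. The information sits in the lower coefficient.
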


A constant term in $p$ contributes only an overall phase and is omitted.
To see that the remaining coefficients are unique, evaluate $p(x)$
modulo $2^k$ on strings whose entries are $1$ exactly at the positions
in $S$, considering these sets in increasing order of size.
Once the coefficients for proper subsets of $S$ are known, subtracting
their contributions leaves $2^{|S|-1}a_S$ modulo $2^k$.
This determines $a_S$ uniquely in the specified range.

For $k=2$, the normal form expresses $D$ as a product of powers of
$S$ and $\mathrm{CZ}$ gates. At $k=3$, the corresponding gates are
$T$, controlled-$S$, and $\mathrm{CCZ}$, where
$T=\operatorname{diag}(1,e^{i\pi/4})$.
More generally, for a monomial of degree $d=|S|$ and integers $a$
and $r\geq1$, the phase $\exp(2\pi i a x_S/2^r)$ defines a diagonal
gate in $C_k$ whenever $r+d-1\leq k$.
The hierarchy level thus bounds both the degree of each monomial
and the denominator needed to express its phase.

The gate set consists of $G_c=\{H,S,\CNOT\}$ together with powers of
\[
    R_k:=\begin{pmatrix}1&0\\ 0&e^{2\pi i/2^k}\end{pmatrix}.
\]
For this fixed $k$, write
\[
    G:=G_c\cup\{R_k^a:a=0,\ldots,2^k-1\}.
\]
All circuits in this subsection use the finite gate set $G$, with each
power of $R_k$ counted as a single gate. As in the Clifford construction,
two-qubit gates may act on any pair of qubits, and gates in the same
layer act on disjoint qubits. We add bit strings over $\mathbb F_2$
and compare integer phase numerators modulo $2^r$ when the denominator
is $2^r$.

\subsubsection{Implementing the normal form I: one monomial}
For a single monomial, the controlled-unitary construction of
Barenco et al.~\cite[Lemma~7.5 and Corollary~7.6]{BarencoEtAl1995}
gives a circuit whose size is quadratic in the monomial degree.
We give the construction explicitly to verify that every required
rotation belongs to $G$.

\begin{lemma}[Local phase synthesis]
\label{lem:local-phase-synthesis}
Let $d,r$ be positive integers satisfying $r+d-1\leq k$, and let $a$ be
an integer. For bits $u_1,\ldots,u_d$, define
\[
    Q\ket{u_1,\ldots,u_d}
    =\exp\!\left(\frac{2\pi i a}{2^r}\prod_{\ell=1}^d u_\ell\right)
      \ket{u_1,\ldots,u_d}.
\]
Then $Q$ has an exact circuit over $G$ of size and depth $O(d^2)$, with
no additional qubits.
\end{lemma}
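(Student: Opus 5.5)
The plan is to prove the lemma by induction on $d$, following the recursion of Barenco et al.~\cite[Lemma~7.5 and Corollary~7.6]{BarencoEtAl1995} specialised to diagonal phases. Throughout, the invariant to track is that every single-qubit rotation used has the form $\exp(2\pi i b/2^{r'})$ with $r'\leq k$. Such a rotation is a power of $R_k$ and so belongs to $G$.

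For the base cases, take $d=1$. Then $Q=\operatorname{diag}(1,e^{2\pi i a/2^r})=R_k^{a2^{k-r}}$, which is a single gate because $r\leq k$. For $d=2$, use the identity
\[
  2\,u_1u_2 = u_1+u_2-(u_1\oplus u_2).
\]
This writes $Q$ as phases $a/2^{r+1}$ on $u_1$ and on $u_2$, followed by a $\CNOT$, a phase $-a/2^{r+1}$ on the target, and a second $\CNOT$ to undo the first. The precondition $r+1\leq k$ is exactly $r+d-1\leq k$.

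For the inductive step, set $y:=u_1\cdots u_{d-1}$ and $b:=u_d$, and write
\[
  a\,y\,b=\tfrac{a}{2}\bigl(y\,b + y - y(b\oplus 1)\bigr)\quad\text{(suitably arranged)}.
\]
Concretely, I would implement the Barenco pattern as follows:
\begin{itemize}
  \item a controlled phase $a/2^{r+1}$ from $u_{d-1}$ onto $u_d$;
  \item a $(d-2)$-controlled NOT onto $u_{d-1}$;
  \item a controlled phase $-a/2^{r+1}$ from $u_{d-1}$ onto $u_d$;
  \item the same multi-controlled NOT again;
  \item the $(d-1)$-qubit monomial phase on $u_1,\ldots,u_{d-2},u_d$ with angle $a/2^{r+1}$.
\end{itemize}
Checking this on basis states reproduces $a\,u_1\cdots u_d/2^r$. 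The recursive call has degree $d-1$ and denominator $r+1$, and since $(r+1)+(d-1)-1=r+d-1\leq k$, the inductive hypothesis applies. The two-qubit controlled phases are handled by the $d=2$ case with denominator $r+1$, which again needs $r+2\leq k$. This holds whenever $d\geq3$.

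The main obstacle is the multi-controlled NOT. It must be built from $G$ with size $O(d)$ and no clean ancillas, or else the total size will not be $O(d^2)$. I would conjugate by $H$ on the target, which turns it into a multi-controlled $Z$, i.e.\ a phase $1/2$ on a monomial of degree $d-1$. I would then use Barenco's Lemma~7.2 construction. That construction uses the idle qubits among $u_1,\ldots,u_d$ (here at least $u_d$) as \emph{dirty} borrowed qubits, which fits the catalytic philosophy, and it reduces the gate to $O(d)$ Toffoli gates. Each Toffoli is $H$-conjugated $\mathrm{CCZ}$, a degree-$3$ phase with denominator $2$. By the $d=3$ instance of the recursion above, this needs only $R_3$-type rotations, which lie in $G$ when $k\geq3$. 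When $k=2$ we have $d\leq2$, so no Toffoli ever arises.

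The part I expect to check most carefully is that every rotation generated by these subroutines has denominator at most $2^k$. I also need to verify that the borrowed-qubit Toffoli ladder is valid for small $d$, where few spare qubits exist; if necessary I will treat $d\leq 4$ by explicit circuits. Granting this, the recurrence is $\mathrm{size}(d)\leq \mathrm{size}(d-1)+O(d)$, which gives size $O(d^2)$, and depth is bounded by size.
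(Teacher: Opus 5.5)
Your recursion is the paper's. The five steps you list implement $2fuv=uv-(u\oplus f)v+fv$ with $f=u_1\cdots u_{d-2}$, $u=u_{d-1}$, $v=u_d$. This identity, not the displayed one for $a\,y\,b$ (which is a tautology), is what your bullets compute. Your base cases, the denominator bookkeeping ($r+d-1\le k$ is preserved by the recursive call, and $r+2\le k$ holds for the two-qubit phases), and the exact Toffoli via $\mathrm{CCZ}$ for $k\geq 3$ all match the paper's proof.

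The step that fails as written is the multi-controlled NOT. Lemma~7.2 of \cite{BarencoEtAl1995} builds an $m$-controlled NOT from $O(m)$ Toffolis only when $m-2$ borrowed qubits are idle. Here $m=d-2$, so you would need $d-4$ idle qubits. The only idle qubit is $u_d$, and this holds for every $d$, not just small $d$. What you need, and what the paper cites, is Lemma~7.3 and Corollary~7.4 of the same reference.

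That construction uses the single borrowed qubit $u_d$ to split the controls into two halves. Let $A$ be a NOT on $u_d$ controlled by the first half, and $B$ a NOT on $u_{d-1}$ controlled by $u_d$ and the second half. Then the sequence $ABAB$ adds the product of all $d-2$ controls to $u_{d-1}$ and restores $u_d$. Each of $A$ and $B$ now has enough idle qubits (the other half of the register) to be built by Lemma~7.2.

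This still costs $O(d)$ Toffolis, and the finitely many small $d$ can be written out by hand. With that change, your recurrence $\mathrm{size}(d)\le\mathrm{size}(d-1)+O(d)$ goes through. The $H$-conjugation to a multi-controlled $Z$ is unnecessary, since Barenco's constructions act on the multi-controlled NOT directly.
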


\begin{proof}
Writing the phase angle as $\theta=2\pi a/2^r$, we implement the case
$d=1$ with a single rotation. For $d=2$, set $u=u_1$ and $v=u_2$
and use the identity
\[
    2uv=u+v-(u\oplus v).
\]
The first two terms give phase rotations of angle $\theta/2$ on
$u$ and $v$. To implement the last term, compute $u\oplus v$ into
$v$ with a $\CNOT$ gate, apply a phase rotation of angle $-\theta/2$,
and undo the $\CNOT$. This gives three rotations and two $\CNOT$ gates.

The construction for $d\geq3$ also uses Toffoli gates. In this case,
$r+d-1\leq k$ implies $k\geq3$, so these gates can be implemented
exactly over $G$. To see this, for bits $u,v,w$, the identity
\[
    4uvw=u+v+w-(u\oplus v)-(u\oplus w)-(v\oplus w)
          +(u\oplus v\oplus w)
\]
expresses the phase of $\mathrm{CCZ}$ as a sum of parity terms with
angles $\pm\pi/4$. Each parity is computed into one of the three
qubits using $\CNOT$ gates and uncomputed after the phase is applied.
This requires no additional qubits, and all the rotations are powers
of $R_k$. Applying $H$ to the third qubit before and after
$\mathrm{CCZ}$ gives an exact Toffoli gate.

To reduce the monomial degree from $d$ to $d-1$, set
\[
    f=u_1\cdots u_{d-2},\qquad u=u_{d-1},\qquad v=u_d,
\]
and use
\begin{equation}
\label{eq:recursive-monomial-phase}
    2fuv=fv+uv-(u\oplus f)v.
\end{equation}
The transformation $u\mapsto u\oplus f$ is a NOT gate controlled by
the first $d-2$ qubits. It can be implemented using $O(d)$ Toffoli
and $\CNOT$ gates, with $v$ as a dirty auxiliary
qubit~\cite[Corollary~7.4]{BarencoEtAl1995}.
The circuit leaves the controls unchanged and restores $v$ exactly.
Replacing each Toffoli gate by the construction above gives a
circuit over $G$.

Apply this operation, then the phase $\exp(-i\theta uv/2)$, and then
apply the operation again. Next apply the phases $\exp(i\theta uv/2)$
and $\exp(i\theta fv/2)$. Equation~\eqref{eq:recursive-monomial-phase}
shows that the total phase is $\exp(i\theta fuv)$. The use of $v$ as a
dirty qubit does not affect this calculation, since each controlled
operation restores $v$ before a phase is applied and the second operation
also restores $u$.

The two-qubit phases use the $d=2$ construction, and the phase involving
$fv$ is implemented recursively on $d-1$ qubits. At each step, the degree
falls by one and the angle is halved. The one-qubit rotation angles are
integer multiples of $2\pi/2^{r+d-1}$, and hence of $2\pi/2^k$. Thus all
rotations are powers of $R_k$. Each step adds $O(d)$ gates and uses only
the original $d$ qubits, giving size and depth $O(d^2)$ after summing over
the degrees.
\end{proof}

We next apply the local circuit to a work register and compare its phase
before and after adding the input. Repeating this comparison over a cycle
of binary linear maps cancels the dependence on the initial work state.

\begin{lemma}
\label{lem:catalytic-phase-gadget}
Let $d,r$ be positive integers satisfying $r+d-1\leq k$, and let $a$ be
an integer. For $x\in\{0,1\}^d$, define
\[
    D\ket{x}
    =\exp\!\left(\frac{2\pi i a}{2^r}\prod_{\ell=1}^d x_\ell\right)
      \ket{x}.
\]
For every integer $t\geq d$ satisfying $2^r\mid a2^t$, there is a circuit
$\mathcal C$ over $G$ using $t$ catalytic qubits and no clean qubits
such that
\[
    \mathcal C=D\otimes I_W.
\]
Its size and depth are $O(t^2 2^t)$. In particular,
$t\geq\max\{d,r\}$ always suffices.
\end{lemma}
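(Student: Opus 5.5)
The plan is to cancel the dependence on the work string by summing a phase difference over an orbit that covers every nonzero vector of $\mathbb F_2^t$. For $w\in\mathbb F_2^t$, write $f(w):=\prod_{\ell=1}^d w_\ell$ for the monomial on the first $d$ work qubits; this needs $t\geq d$. For $x\in\{0,1\}^d$, let $\tilde x:=(x,0,\ldots,0)\in\mathbb F_2^t$, so that $f(\tilde x)=\prod_\ell x_\ell$ and $f(0)=0$. Set $\theta:=2\pi a'/2^r$, where the integer $a'$ will be fixed at the end.

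\textbf{One step.} The step $\Sigma$ acts on $\ket{x}_I\ket{w}_W$ as follows.
\begin{enumerate}
\item Add $\tilde x$ into $W$ with $d$ parallel $\CNOT$ gates.
\item Apply the gate $Q$ of Lemma~\ref{lem:local-phase-synthesis} with angle $\theta$ to the first $d$ work qubits.
\item Remove $\tilde x$ with the same $\CNOT$ gates.
\item Apply $Q^{-1}$, which is $Q$ with $a'$ replaced by $-a'$.
\end{enumerate}
The hypothesis $r+d-1\leq k$ makes all rotations powers of $R_k$, and $Q$ has size and depth $O(d^2)$. On basis states,
\[
\Sigma\ket{x}\ket{w}=e^{i\theta(f(w+\tilde x)-f(w))}\ket{x}\ket{w}.
\]

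\textbf{Cycling.} Let $A\in\operatorname{GL}(t,2)$ be a Singer cycle, namely the companion matrix of a primitive polynomial of degree $t$. This is multiplication by a primitive element of $\mathbb F_{2^t}$, so $A$ has order $2^t-1$ and permutes the nonzero vectors in a single cycle. By Gaussian elimination it has a $\CNOT$ circuit of size and depth $O(t^2)$. The circuit is
\[
\mathcal C:=(A\Sigma)^{2^t-1}.
\]
Since $A^{2^t-1}=I$, it returns every work basis string $z$ to itself, and the phase it accumulates is
\[
\theta\sum_{j=0}^{2^t-2}\bigl(f(A^jz+\tilde x)-f(A^jz)\bigr).
\]
For $z\neq0$ the orbit $\{A^jz\}$ is exactly $\mathbb F_2^t\setminus\{0\}$. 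The full sum over all $y\in\mathbb F_2^t$ vanishes because $y\mapsto y+\tilde x$ is a bijection, so the phase equals minus the missing $y=0$ term, namely $-\theta\prod_\ell x_\ell$. For $z=0$ every term equals $f(\tilde x)-f(0)$, so the phase is $(2^t-1)\theta\prod_\ell x_\ell$. The two values agree modulo $2\pi$ precisely when $2^t\theta\prod_\ell x_\ell\in2\pi\mathbb Z$, which holds because $2^r\mid a2^t$. Taking $a':=-a$, every basis state $\ket{x}\ket{z}$ therefore acquires exactly the phase $e^{2\pi i a\prod_\ell x_\ell/2^r}$, with no extra phase depending on $z$. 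Lemma~\ref{lem:globalphase_basis_state}, with trivial global phase, then gives $\mathcal C=D\otimes I_W$. The circuit uses $t$ catalytic qubits and no clean qubits. Its size and depth are $(2^t-1)\cdot O(t^2+d^2)=O(t^22^t)$. If $t\geq\max\{d,r\}$, then $2^r\mid2^t$, so the divisibility condition holds for every $a$.

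\textbf{Main obstacle.} The step I expect to need the most care is the single-orbit covering. It needs the existence of a primitive polynomial of every degree $t$, a standard fact about finite fields, to get a linear map whose cycle covers all nonzero strings. It also needs the bookkeeping for $z=0$, which is the unique fixed point. That exceptional case is exactly what forces the divisibility hypothesis $2^r\mid a2^t$. Everything else is a routine check: the recorded phases must be exactly the same for every work string, not merely equal up to a global phase. Otherwise conditional use would turn a discrepancy into a relative phase.
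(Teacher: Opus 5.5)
Your proposal is correct and takes the same approach as the paper. The shared ingredients are a single-cycle matrix $A$ given by multiplication by a primitive element of $\mathbb F_{2^t}$, rounds that compare the monomial phase before and after adding $\bar x$ to the work block, and the orbit-sum computation in which the fixed point $z=0$ is exactly what forces $2^r\mid a2^t$. The only difference is cosmetic: you apply $Q$ on the shifted string and $Q^{-1}$ on the unshifted one and compensate with $a'=-a$, whereas the paper uses the opposite order, so no sign change is needed.
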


\begin{proof}
Let $I$ be the $d$-qubit input register and identify the basis labels of
$W$ with $\mathbb F_2^t$. Choose an invertible $t\times t$ binary matrix
$A$ whose powers cycle through all nonzero vectors. Such a matrix
represents multiplication by a generator of the multiplicative group of
a field with $2^t$ elements, written in a binary basis. It satisfies
$A^{2^t-1}=I$ and,
for every $z\neq 0$,
\[
    \{A^jz:0\leq j\leq 2^t-2\}=\mathbb F_2^t\setminus\{0\},
\]
with each nonzero vector occurring once.

For $t=2$, we may take $A=P$, where $P$ is the matrix used in
Lemma~\ref{lem:roots-of-unity-trick}. Its powers cycle through the
three nonzero vectors of $\mathbb F_2^2$. The choice of $A$ extends
this property to work blocks of arbitrary size.

For $z\in\mathbb F_2^t$, set $f(z):=\prod_{\ell=1}^d z_\ell$, and write
$\bar x:=(x_1,\ldots,x_d,0,\ldots,0)\in\mathbb F_2^t$.
Let $F$ consist of $d$ disjoint $\CNOT$ gates with action
\[
    F\ket{x}_I\ket{z}_W=\ket{x}_I\ket{z+\bar x}_W.
\]
The zero coordinates of $\bar x$ specify targets that are unchanged by
$F$, so they impose no initialisation condition on the work qubits.
The two remaining operations act on $W$:
\[
    Q\ket{z}=\exp\!\left(\frac{2\pi i a}{2^r}f(z)\right)\ket{z},
    \qquad
    \mathcal A\ket{z}=\ket{Az}.
\]
Extending these operators by the identity on $I$, define
\[
    \mathcal C=(\mathcal A QFQ^{-1}F)^{2^t-1},
\]
where the rightmost factor acts first. One round has action
\[
    \mathcal A QFQ^{-1}F\ket{x}_I\ket{z}_W
    =\exp\!\left(\frac{2\pi i a}{2^r}
        \bigl(f(z)-f(z+\bar x)\bigr)\right)\ket{x}_I\ket{Az}_W.
\]
After $j$ rounds the work label is $A^jz$, and after all $2^t-1$ rounds
it returns to $z$.

We evaluate the sum of differences in the exponent separately for zero
and nonzero work strings. If $z=0$, each difference equals $-f(\bar x)$
because $f(0)=0$. For $z\neq 0$, the powers of $A$ visit every nonzero work
string, and translation by $\bar x$ permutes $\mathbb F_2^t$, giving
\[
    \sum_{y\in\mathbb F_2^t}\bigl(f(y)-f(y+\bar x)\bigr)=0.
\]
Removing the term $y=0$ leaves $f(\bar x)$, so the two cases give
\begin{equation}
\label{eq:phase-cycle-sum}
    \sum_{j=0}^{2^t-2}\bigl(f(A^jz)-f(A^jz+\bar x)\bigr)
    =
    \begin{cases}
        (1-2^t)f(\bar x),& z=0,\\[1ex]
        f(\bar x),& z\neq 0.
    \end{cases}
\end{equation}
After multiplication by $a$, both expressions are congruent to
$af(\bar x)$ modulo $2^r$, because $2^r\mid a2^t$. Thus the circuit has
exactly the phase of $D$ on every joint basis state. By linearity,
$\mathcal C=D\otimes I_W$.

Binary row elimination implements $\mathcal A$ with $O(t^2)$ $\CNOT$
gates, while each of $Q$ and $Q^{-1}$ uses $O(d^2)$ gates by
Lemma~\ref{lem:local-phase-synthesis}, without further workspace.
Including the two applications of $F$, one round has size and depth
$O(t^2)$, and the $2^t-1$ rounds give the claimed bounds.
\end{proof}

The operator identity also preserves entanglement with an external
register. In choosing $t$, one may first cancel common powers of two
from $a/2^r$ and omit coefficients that vanish modulo $2^r$.

\subsubsection{Implementing the normal form II: full expression}
The full diagonal unitary is the product of its monomial phase gates.
We implement these gates in parallel using separate work blocks.
For one monomial, the input addition $F$ in
Lemma~\ref{lem:catalytic-phase-gadget} consists of disjoint $\CNOT$
gates. Across several monomials, the same input qubit may control
targets in different work blocks. The following lemma implements
all gates sharing one control in logarithmic depth, without requiring
the targets to be initialised. We use it below to implement the
combined input addition, also denoted by $F$.

\begin{lemma}[Parallel $\CNOT$ gates with a common control]
\label{lem:dirty-target-fanout}
Let $b\geq 1$ be an integer and let $x,z_1,\ldots,z_b$ be bits. The
transformation
\[
    \ket{x}\ket{z_1,\ldots,z_b}
    \longmapsto\ket{x}\ket{z_1+x,\ldots,z_b+x}
\]
has a $\CNOT$ implementation of depth $2\lceil\log_2 b\rceil+1$ and
size $2b-1$, with no additional qubits. The targets may be in an arbitrary
state.
\end{lemma}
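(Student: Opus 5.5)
The construction doubles the number of copies of $x$ at each level, using the targets themselves as scratch, and then removes the extra copies in reverse order. The basic fact is linear. If a target holds $z_i+x$ and we apply $\CNOT$ from it onto another target $z_j$, the result is $z_j+z_i+x$, which contains a spurious $z_i$. That spurious term has to be cancelled, and the natural way is to apply the same $\CNOT$ a second time after target $i$ has been restored to $z_i$. So I will work on computational basis states, view the whole circuit as a binary linear map, and let linearity extend the result to arbitrary, possibly entangled, target states, as in Lemma~\ref{lem:globalphase_basis_state}. No phases arise, since only $\CNOT$ gates are used.

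Concretely, arrange the targets as a binary tree rooted at target $1$. Let $L=\lceil\log_2 b\rceil$. I treat the structure in reverse, as a backwards cascade. First, for levels $\ell=L,\dots,1$, apply in parallel the $\CNOT$s from each node at depth $\ell-1$ (its ``parent'') into its children. This uses $b-1$ gates in total and $L$ layers, since each node has at most two children and these can be served in two sub-rounds. Doing it in the order that makes children accumulate their ancestors' values gives the state $z_j+\sum_{a\in\mathrm{anc}(j)} z_a$, i.e.\ a prefix-sum-like map $T$ on the targets alone. Next, one $\CNOT$ from $x$ into the root gives root $z_1+x$. Finally, apply the cascade again from the top down: each node $j$ receives its parent's current value. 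An induction down the tree shows that node $j$ ends with value $z_j+x$. The reason is that the parent already holds $z_{p}+x$, and $x$ propagates while the $z$-contributions from the first pass cancel those of the second pass over $\mathbb F_2$.

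The ordering is the delicate point. I would choose the first pass to be the exact inverse of the $z$-part of the second pass, i.e.\ $T^{-1}$ followed by the root injection followed by $T$. Since $T$ acts on $(x,z)$ as a linear map fixing $x$, we get $T\,(e_{\mathrm{root}}x)$ plus $T T^{-1}z=z$. It then suffices that $T$ maps $e_{\mathrm{root}}$ to the all-ones vector, which holds when $T$ is the top-down copying cascade, where each node gets $\CNOT$ from its parent, processed root first. $T^{-1}$ is the same gates in reverse order (bottom up), which has the same depth.

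For the counts, each of $T$ and $T^{-1}$ uses $b-1$ gates, which gives $2b-1$ in total together with the single gate from $x$. That matches the size claimed in Lemma~\ref{lem:dirty-target-fanout} only if I realize that the cascade is used once with $x$ as root, i.e.\ with $x$ itself acting as the tree's root. So I will restructure accordingly: use a doubling tree rooted at $x$ in which each informed qubit feeds one new qubit per layer. This gives $\lceil\log_2(b+1)\rceil$ layers to inform everyone, but with dirty targets the inverse correction pass is needed. I therefore expect the main obstacle to be reconciling the exact constants $2\lceil\log_2 b\rceil+1$ and $2b-1$. My plan is: a bottom-up pass of $b-1$ gates among the targets in $\lceil\log_2 b\rceil$ layers (the doubling tree on $b$ nodes, reversed), one $\CNOT$ from $x$ to the root, then the top-down pass of $b-1$ gates in $\lceil\log_2 b\rceil$ layers. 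Correctness follows from $T T^{-1}=I$ and $T e_{\mathrm{root}}=\mathbf 1$.
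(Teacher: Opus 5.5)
Your final plan is correct and is exactly the paper's argument: conjugate a single $\CNOT$ from $x$ into the root by the doubling-tree circuit $T$ on the targets (with $Te_{\mathrm{root}}=\mathbf 1$), so that $T(T^{-1}z+xe_{\mathrm{root}})=z+x\mathbf 1$, with depth $2\lceil\log_2 b\rceil+1$ and size $2(b-1)+1=2b-1$. The earlier detours are unnecessary and partly inaccurate: the bottom-up pass gives $z_j+z_{p(j)}$, not ancestor prefix sums; a binary tree served in two sub-rounds per level would not fit in $\lceil\log_2 b\rceil$ layers; and the count $2b-1$ already matched without restructuring. The doubling tree you settle on avoids all of these issues.
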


\begin{proof}
Let $e_1=(1,0,\ldots,0)^{\mathsf T}$ and
$\mathbf 1=(1,\ldots,1)^{\mathsf T}$. On the $b$ target qubits, construct
a binary tree of $\CNOT$ gates whose matrix $L$ satisfies
$Le_1=\mathbf 1$. Starting from the first target, each layer uses the
qubits already reached as controls for distinct new targets. This requires
$b-1$ gates in $\lceil\log_2 b\rceil$ layers and defines an invertible
linear map on all target strings.

Write $z=(z_1,\ldots,z_b)^{\mathsf T}$. Apply the inverse target circuit,
then one $\CNOT$ from $x$ to the first target, and then the target circuit
itself. The resulting string is
\[
    L(L^{-1}z+xe_1)=z+x\mathbf 1.
\]
The gate count and depth follow. For $b=1$, the target circuit is empty.
\end{proof}

\begin{lemma}[Parallel implementation of monomial phases]
\label{lem:catalytic-diagonal-monomials}
Let $D$ be given by~\eqref{eq:diagonal-normal-form}, with its nonzero
coefficients specified. Let $s\geq 1$ be an integer, and let $N$ be the
number of nonzero monomials.

There is a circuit $\mathcal C_D$ over $G$ using at most
$k\min\{N,sn\}$ catalytic qubits and no clean qubits such that
\[
    \mathcal C_D=D\otimes I_W.
\]
Its depth and size are, respectively,
\[
    O\!\left(2^k\left\lceil\frac{N}{sn}\right\rceil
    \bigl(k^2+\log(\min\{N,sn\}+1)\bigr)\right)
    \qquad\text{and}\qquad O(k^2 2^kN).
\]
In particular, for fixed $k$ its depth is
\[
    O\!\left(\left(1+\frac{N}{sn}\right)\log(n+1)\right),
\]
its size is $O(N)$, and it uses $O(sn)$ catalytic qubits.
If $sn\geq N$, its depth is $O(\log(n+1))$, and its size and
workspace are $O(n^k)$.
\end{lemma}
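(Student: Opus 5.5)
Write $D=\prod_S D_S$, with $D_S\ket{x}=\exp(2\pi i\,2^{|S|-1}a_Sx_S/2^k)\ket{x}$ over the $N$ nonzero monomials. The $D_S$ are diagonal and therefore commute, so we may apply them in any order and in any grouping.

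For each monomial, reduce $2^{|S|-1}a_S/2^k$ to $a/2^r$ with $r=k-|S|+1$ and $d=|S|$. Then $r+d-1=k$, and Lemma~\ref{lem:catalytic-phase-gadget} applies with $t=k\geq\max\{d,r\}$. Each monomial therefore gets its own work block of $k$ catalytic qubits and a circuit equal to $D_S\otimes I_{W_S}$.

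We partition the $N$ monomials into $\lceil N/(sn)\rceil$ batches of at most $\min\{N,sn\}$ monomials. Each monomial in a batch gets a disjoint $k$-qubit block, so at most $k\min\{N,sn\}$ catalytic qubits are needed, and every batch reuses the same blocks.

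Within a batch, all gadgets run in lockstep. Round $j$ of each gadget is $\mathcal A QFQ^{-1}F$, and we interleave them as follows:
\begin{itemize}
\item the operations $Q$, $Q^{-1}$ and $\mathcal A$ act only inside their own blocks, so they run in parallel across the batch with depth $O(k^2)$;
\item the $F$ steps of all gadgets in the batch are applied together as one combined addition $F$.
\end{itemize}
The combined $F$ adds each input bit $x_i$ into every block whose monomial contains $i$. It consists of $n$ fan-outs, one per input qubit, acting on disjoint target sets, since a target qubit in a block corresponds to a single position of a single monomial. Each fan-out has at most $\min\{N,sn\}$ targets, so Lemma~\ref{lem:dirty-target-fanout} implements all of them in parallel in depth $O(\log(\min\{N,sn\}+1))$ on dirty targets.

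One round therefore has depth $O(k^2+\log(\min\{N,sn\}+1))$. There are $2^k-1$ rounds per batch and $\lceil N/(sn)\rceil$ batches, which gives the stated depth.

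The main thing to verify is correctness of the interleaving. On a basis state $\ket{x}\ket{z_1,\dots,z_m}$, the combined $F$ acts on block $S$ exactly as that gadget's own $F$ would, and the other gates of block $S$ never touch other blocks. So the phase accumulated on block $S$ is precisely the sum computed in \eqref{eq:phase-cycle-sum}, and the block returns to $z_S$.

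The total phase is then $\sum_S$ of the monomial phases. It depends only on $x$ and equals the phase of $D$ exactly, with no stray $z$-dependent phase. Lemma~\ref{lem:globalphase_basis_state} then gives $\mathcal C_D=D\otimes I_W$.

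For the size, each gadget costs $O(k^22^k)$ gates. The fan-outs cost $O(1)$ gates per target per round by Lemma~\ref{lem:dirty-target-fanout}, and there are $O(k)$ targets per monomial. Summing gives $O(k^22^kN)$.

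For fixed $k$, we use $N\le\sum_{d\le k}\binom nd=O(n^k)$ and $\log(\min\{N,sn\}+1)=O(\log(n+1))$, since $N\le 2^n$ and $sn$ can be taken polynomial without loss. This gives the simplified bounds. The step I expect to require the most care is the fan-out accounting: making sure the dirty-target fan-outs for different input bits touch disjoint qubits, so they really run in parallel, and that the logarithmic term is bounded by $\log(n+1)$ rather than by $\log s$ when $s$ is huge. For the latter I would simply cap $s\le N/n$, which is harmless.
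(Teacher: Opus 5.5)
Your proposal is correct and follows essentially the same route as the paper: one $k$-qubit block per monomial via Lemma~\ref{lem:catalytic-phase-gadget} with $t=k$, $r=k-|S|+1$, batches of at most $sn$ monomials run in lockstep, the combined input addition $F$ realised by dirty-target fan-outs from Lemma~\ref{lem:dirty-target-fanout} on disjoint target sets, and per-block correctness from \eqref{eq:phase-cycle-sum}. One small fix: $\log(\min\{N,sn\}+1)=O(\log(n+1))$ follows directly from $\min\{N,sn\}\le N=O(n^k)$ for fixed $k$ (the observation $N\le 2^n$ alone would only give $O(n)$), so no cap on $s$ is needed.
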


\begin{proof}
The case $N=0$ is the identity and requires no gates. Otherwise,
partition the nonzero monomials into $\lceil N/(sn)\rceil$ batches,
each containing at most $sn$ monomials. Apply the construction to
each batch in turn, reusing the same work register.

Consider a batch containing $b$ monomials, and assign a separate
block of $k$ work qubits to each monomial. For a support
$S=\{i_1,\ldots,i_d\}$, the coefficient in $p(x)/2^k$ is
$a_S/2^{k-d+1}$. Use Lemma~\ref{lem:catalytic-phase-gadget} with
$t=k$, $r=k-d+1$, and $a=a_S$. Order each support once and use this
order throughout the circuit. The input $\CNOT$ gates add
$x_{i_1},\ldots,x_{i_d}$ to the first $d$ targets in its work block.
All blocks use the same matrix $A$ and the same $2^k-1$ rounds.

Let $F$ denote the product of these input $\CNOT$ gates over all
blocks in the batch. Each input bit controls at most $b$ targets.
Implement these gates using
Lemma~\ref{lem:dirty-target-fanout}. Since different input bits have
disjoint target sets and distinct controls, their circuits run in
parallel. A complete application of $F$ has depth
$O(\log(b+1))$ and size $O(kb)$.

Each round applies $F$, all $Q^{-1}$ circuits, $F$ again, all $Q$
circuits, and finally all $\mathcal A$ circuits, in this order.
The phase operations and binary linear transformations act on
disjoint work blocks and begin only after the preceding application
of $F$ is complete. Although the fanout circuit uses gates between
different blocks, it implements $F$ exactly.

After all rounds, every work label is restored. By
\eqref{eq:phase-cycle-sum}, each block contributes its monomial
phase independently of the initial work string. The circuit for
the batch thus applies the required phases and acts as the identity
on the work register. This register can be reused for the next
batch. The product of the phases over all batches is
$\exp(2\pi i p(x)/2^k)$, giving $\mathcal C_D=D\otimes I_W$.

In each round, the phase operations and binary linear transformations
have depth $O(k^2)$ and size $O(k^2b)$. Including the two applications
of $F$, the round has depth
$O(k^2+\log(b+1))$ and size $O(k^2b)$.
Multiplying by $2^k-1$ rounds and $\lceil N/(sn)\rceil$ batches
gives the depth bound. The batch sizes sum to $N$, so the total
size is $O(k^2 2^kN)$. Reusing the work register requires only
$k$ qubits for each monomial in the largest batch, giving at most
$k\min\{N,sn\}$ catalytic qubits.

Finally,
\[
    N\leq\sum_{d=1}^{\min(k,n)}\binom nd=O(n^k)
\]
for fixed $k$. Together with $\lceil N/(sn)\rceil\leq 1+N/(sn)$,
this gives the remaining bounds.
\end{proof}

The work register can be smaller for a given list of coefficients.
Reduce each nonzero fraction $a_S/2^{k-|S|+1}$ to an odd numerator
and denominator $2^{r_S}$, and choose $t_S=\max\{|S|,r_S\}$.
Within each batch, round $j=0,1,\ldots$ acts only on blocks with
$j<2^{t_S}-1$, excluding completed blocks from the fanout target
sets. The workspace for a batch is the sum of its block sizes.
Since the register is reused, the largest of these sums suffices
for the full circuit. Each $t_S$ is at most $k$, so the same
worst-case bounds hold.

Terms with $|S|=1$ can instead be applied directly to the input
in one layer, without any work qubits. The matrix $A$ for each
block size can be chosen once for fixed $k$, and the circuit is
then explicit from the coefficients and supports.

If the operation in Lemma~\ref{lem:dirty-target-fanout} is available
as a single native gate, including on targets in an arbitrary state,
each application of $F$ has constant depth. For fixed $k$, the
construction above then has depth $O(\lceil N/(sn)\rceil)$.
In particular, implementing all $N$ monomials in parallel gives
constant depth using at most $kN$ catalytic qubits. The construction
below uses less workspace but also requires catalytic circuits for
binary linear maps, whose depth must be accounted for separately.

\subsubsection{Reducing the workspace}
\label{sec:diagonal-batching}

The workspace count is dominated by the monomials with the highest degree, i.e. those that apply a phase $-1$ depending on $k$ input qubits. We can construct an optimal compiler for these phases using the $\CNOT$ compiler developed above. We first specify the specific diagonal terms:
\begin{definition}\label{def:many_constrol_phase}
Let $k\geq 2$, let $S$ range over subsets of $\{1,\dots, n\}$ with $|S| = k$, and choose bits $a_S\in\{0,1\}$. Define the diagonal unitary $D_k$ by
\[
    p(x) = \sum_{S} a_S x_S, \qquad D_k\ket{x} = (-1)^{p(x)}\ket{x},
\]
where $p(x)$ is evaluated over $\mathbb F_2$.

\end{definition}

To optimize these terms we take inspiration from~\cite{heyfron2018efficientquantumcompilerreduces}, who group degree-three terms by a shared control to obtain controlled quadratic phases. We extend this grouping to degree $k$ and reduce each quadratic with Dickson's theorem~\cite{Dic01}.

The main idea is to batch different terms together, splitting the full polynomial into a sum of degree-$(k-2)$ monomials, each multiplied by a quadratic form. We can then use Dickson's theorem to optimize each quadratic form separately. This reduces the number of terms in each quadratic form from $O(n^2)$ to $O(n)$ monomials, so the total number of monomials after the basis changes is $O(n^{k-1})$, at the cost of applying these basis changes. Each basis change consists of a $\CNOT$ circuit followed by local $X$ gates. The number of basis changes equals the number of degree-$(k-2)$ monomials, which is at most $\binom{n}{k-2} = O(n^{k-2})$.

All basis changes can be run in parallel, and one basis change costs $O(n^2/\log^2 n)$ catalytic qubits at depth $O(\log n)$. The total cost is therefore $O(n^k/\log^2 n)$ catalytic qubits, which is the dominant cost in the new circuit. Compared with applying Lemma~\ref{lem:catalytic-diagonal-monomials} directly to all $O(n^k)$ monomials, the efficient catalytic $\CNOT$ circuits therefore reduce the workspace for the high-degree phases by a factor $\log^2 n$.

We first introduce, and prove, Dickson's theorem:
\begin{lemma}[Dickson's Theorem \cite{Dic01}]\label{lem:dicksons}
Let $q: \mathbb F^{n}_2 \rightarrow \mathbb F_2$ be a polynomial of degree at most $2$ with $q(0)=0$, and write it as:
\[
    q(x) = \sum_{0\leq i< j\leq n-1}B_{ij}x_ix_j + \sum_{i=0}^{n-1}A_i x_i.
\]
Then there exists an affine transformation, given by a matrix $M \in GL(n,2)$ and a vector $c \in \mathbb F^{n}_2$ such that $y = Mx + c$, and an integer $d$ with $2d \leq n$, such that
\[
    q(x) = \sum_{i=0}^{d-1}y_{2i}y_{2i+1} + \delta y_{2d} + \epsilon,
\]
where $\delta, \epsilon \in \{0,1\}$, at most one of them is $1$, and the linear term is omitted if $2d = n$.
\end{lemma}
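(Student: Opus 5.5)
I will prove Dickson's theorem by induction on the number of variables $n$. At each step I peel off either one hyperbolic pair $y_{2i}y_{2i+1}$ or stop with only a linear and constant part. Throughout I keep track of the substitution as an affine map $y=Mx+c$ with $M\in GL(n,2)$. The composition of invertible affine changes of variables is again invertible affine, so all the substitutions can be collected into a single pair $(M,c)$ at the end.

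\textbf{Base case.} Suppose the quadratic part of $q$ is zero, that is, all $B_{ij}=0$. Then $q(x)=\ell(x)=\sum_i A_ix_i$ is linear.
\begin{itemize}
\item If $\ell=0$, take $d=0$ and $\delta=\epsilon=0$.
\item If $\ell\neq 0$, choose $M$ with first row equal to $A$ and complete it to an invertible matrix. Then $q=y_0$, so $d=0$, $\delta=1$, $\epsilon=0$.
\end{itemize}

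\textbf{Inductive step.} Suppose some $B_{ij}=1$; after relabelling variables, $B_{01}=1$. Write
\[
    q=x_0x_1+x_0L_1(x')+x_1L_0(x')+\alpha x_0+\beta x_1+r(x'),
\]
where $x'=(x_2,\dots,x_{n-1})$, the forms $L_0,L_1$ are linear in $x'$, and $r$ has degree at most $2$ in $x'$ with $r(0)=0$. Complete the product:
\[
    \bigl(x_0+L_0(x')+\beta\bigr)\bigl(x_1+L_1(x')+\alpha\bigr)
    =x_0x_1+x_0L_1+x_1L_0+\alpha x_0+\beta x_1+(L_0+\beta)(L_1+\alpha).
\]
Set $u=x_0+L_0+\beta$ and $v=x_1+L_1+\alpha$. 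Then
\[
    q=uv+q'(x'),\qquad q'=r+(L_0+\beta)(L_1+\alpha),
\]
where $q'$ is a polynomial of degree at most $2$ in the $n-2$ remaining variables. The map $(x_0,x_1,x')\mapsto(u,v,x')$ is affine and invertible: it is unitriangular in its linear part.

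The induction hypothesis needs constant term zero, but $q'(0)=\alpha\beta$ may be $1$. To handle this, I strengthen the inductive statement to allow an arbitrary constant term. The constant then simply propagates into $\epsilon$ in the base case. Applying induction to $q'$ on $x'$ gives
\[
    q'=\sum_{i=0}^{d'-1}y_{2i}y_{2i+1}+\delta y_{2d'}+\epsilon.
\]
Renaming $u,v$ as $y_0,y_1$ and shifting the remaining indices by two gives the required form with $d=d'+1$, and $2d\le n$ is automatic.

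\textbf{Main obstacle: at most one of $\delta,\epsilon$ equals $1$.} If both are $1$, the tail is $y_{2d}+1$, and I absorb the constant by the affine substitution $y_{2d}\mapsto y_{2d}+1$. This is allowed because the shift vector $c$ is part of the transformation. After it, the tail is $y_{2d}$, so $\delta=1$ and $\epsilon=0$.

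If $2d=n$ there is no variable $y_{2d}$ left, so the linear term is omitted and only $\epsilon$ remains. This matches the statement's convention. Under the original hypothesis $q(0)=0$ the constant need not vanish after the shift by $c$, which is why $\epsilon$ appears in the statement; the strengthened induction above accounts for this.
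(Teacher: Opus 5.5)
Your proof is correct and follows essentially the same route as the paper. You relabel so that $x_0x_1$ appears, complete the product with $y_0=x_0+L_0+\beta$ and $y_1=x_1+L_1+\alpha$, recurse on the remaining variables while carrying the constant along (the paper's accumulated $\epsilon'$ is exactly your strengthened hypothesis), and finally collect the leftover linear part into $y_{2d}$. The only cosmetic difference is that the paper puts the constant directly into $y_{2d}=\sum_i A''_ix_i+\epsilon'$, whereas you first get $y_{2d}+1$ and then shift by $c$, which is the same affine substitution.
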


\begin{proof}

The goal is to construct a basis transformation that constructs the right $y$ values. If $q(x)$ contains a quadratic term, pick one such term $x_ux_v$ and relabel the variables such that it becomes $x_0x_1$, so $B_{01} = 1$. A relabelling is a permutation of the coordinates, which is also an element of $GL(n,2)$. Now look at all terms involving $x_0$ and $x_1$. These can be written as:
\[
    q_{x_0,x_1}(x) = x_0x_1 + x_0 l_{\geq2}(x) + x_1 k_{\geq 2}(x) + ax_0 + bx_1,
\]
where $l_{\geq 2}(x)$ is the sum of the $x_i$ such that $B_{0i} = 1$ for $i\geq 2$, $k_{\geq 2}(x)$ the same such that $B_{1i} = 1$ for $i \geq 2$, and $a = A_0$, $b = A_1$. Now let
\[
    y_0 = x_0 + k_{\geq 2}(x) + b, \qquad y_1 = x_1 + l_{\geq 2}(x) + a,
\]
then it follows that,
\[
    y_0 y_1 = x_0 x_1 + x_0 l_{\geq2}(x) + x_1 k_{\geq2}(x) + ax_0 + bx_1 + l_{\geq2}(x)k_{\geq2}(x) + a k_{\geq2}(x) + b l_{\geq2}(x) + ab.
\]
Importantly the additional terms $l_{\geq2}(x)k_{\geq2}(x) + a k_{\geq2}(x) + b l_{\geq2}(x)$ contain at most quadratic terms in $x_i$ such that $i \geq 2$. Additionally notice that it might happen that in the product $l_{\geq2}(x)k_{\geq2}(x)$ some term $x_i^2$ appears, but because this is over $\mathbb F_2$ it holds that $x_i^2 = x_i$, which means that it adds at most a linear term. The term $ab$ is an additive constant, which in the end will possibly produce the term $\epsilon$. The new variables $y_0$ and $y_1$ only add terms in $x_{i}$ with $i\geq 2$ and constants to $x_0$ and $x_1$, so this transformation is invertible. After this step we can now rewrite $q(x)$:
\[
    q(x) = y_0y_1 + \sum_{2\leq i < j \leq n-1} B'_{ij} x_i x_j + \sum_{i = 2}^{n-1} A'_i x_i + \epsilon',
\]
where $B'$ contains the extra quadratic terms from $l_{\geq2}(x)k_{\geq2}(x)$, $A'$ the extra linear terms produced by the additional terms, and $\epsilon' = ab$.

This step can now be repeated on the remaining variables using the updated $B'$ and $A'$. Sequentially repeat this, each time picking a remaining quadratic term, until after $d$ steps no quadratic term is left. Relabelling the paired variables as $y_0,\dots,y_{2d-1}$ gives
\[
    q(x) = \sum_{i=0}^{d-1}y_{2i}y_{2i+1} + \sum_{i = 2d}^{n-1} A''_i x_i + \epsilon',
\]
where $\epsilon'$ is the sum of all terms $ab$ modulo $2$. Notice that the linear terms on the paired variables are already absorbed by the constants $a$ and $b$ in the definition of the $y$'s. The remaining linear terms $\sum_{i = 2d}^{n-1} A''_i x_i$ only involve variables that do not appear in any product, so they cannot be absorbed in the same way.

If all $A''_i = 0$, we are done with $\delta = 0$ and $\epsilon = \epsilon'$. Otherwise pick an index $j \geq 2d$ with $A''_j = 1$ and relabel such that $j = 2d$. Now let
\[
    y_{2d} = \sum_{i = 2d}^{n-1} A''_i x_i + \epsilon',
\]
and keep $y_i = x_i$ for the other unpaired variables. This is invertible because $y_{2d}$ contains $x_{2d}$. All remaining linear terms and the constant are now collected in the single variable $y_{2d}$, which gives $\delta = 1$ and $\epsilon = 0$. This last linear term cannot be removed. If it is present, $q$ equals $1$ on exactly $2^{n-1}$ inputs, while a sum of products plus a constant never does, and an affine transformation does not change this number.

All steps are invertible affine transformations, so their composition is of the form $y = Mx + c$ with $M \in GL(n,2)$, which proves the lemma.
\end{proof}

We will now show how this can be used to implement the phase circuits $D_k$.

\begin{lemma}
\label{lem:batched-high-degree-phases}
Fix $k\geq 2$, and let $s\geq 1$ be an integer.
Every $D_k$ in Definition~\ref{def:many_constrol_phase} has an exact
catalytic implementation over $G$ using $O(sn)$ catalytic qubits
and no clean qubits, with depth
\[
    O\!\left(\log(n+1)+\frac{n^{k-1}}{s\log(n+1)}\right)
\]
and size
\[
    O\!\left(\frac{n^k}{\log(n+1)}\right).
\]
In particular, depth $O(\log(n+1))$ requires at most
$O(n^k/\log^2(n+1))$ catalytic qubits.
\end{lemma}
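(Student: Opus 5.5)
Group the monomials of $p$ by a shared degree-$(k-2)$ prefix and diagonalise the resulting quadratic forms with Lemma~\ref{lem:dicksons}. Order each support $S$ increasingly and write $S=T\cup\{i,j\}$, where $T$ is the set of the $k-2$ smallest elements. Then
\[
    p(x)=\sum_{|T|=k-2} x_T\, q_T(x),
\]
where $q_T$ is a quadratic form over $\mathbb F_2$ in the variables not in $T$, so $(-1)^{p(x)}=\prod_T (-1)^{x_T q_T(x)}$. For each $T$, Lemma~\ref{lem:dicksons} gives an affine map $y=M_Tx+c_T$ with $M_T\in\operatorname{GL}(n,2)$, after which
\[
    q_T=\sum_{i<d_T}y_{2i}y_{2i+1}+\delta_T y_{2d_T}+\epsilon_T.
\]
Thus $(-1)^{x_T q_T}$ becomes $O(n)$ monomials of degree at most $k$, in the variables $x_T$ and $y$. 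To avoid disturbing the controls $x_T$, I extend $M_T$ to act trivially on the coordinates in $T$. This is possible since $q_T$ does not involve them: take $M_T\in\operatorname{GL}(n-k+2,2)$ on the complement of $T$.

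The circuit for one $T$ works on its own copy of the input. The catch is that the input cannot be changed in place for every $T$ at once, since different $T$ need different bases. So for each $T$ I allocate a catalytic block $Z_T$ of $n$ qubits and proceed as follows.
\begin{enumerate}
\item Add $x$ into $Z_T$ by fanout (Lemma~\ref{lem:dirty-target-fanout}, shared over all $T$).
\item Apply the catalytic CNOT circuit of Theorem~\ref{thm:catalytic-cnot} for $M_T$ and $X$ gates for $c_T$ to $Z_T$. This maps $z+x\mapsto M_T(z+x)+c_T$.
\item Apply the phase monomials on this block.
\item Undo step 2, and finally undo step 1.
\end{enumerate}
The dependence of the phase on the dirty content $z$ must be cancelled. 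I would handle this as in Theorem~\ref{thm:catalytic-or} and Lemma~\ref{lem:catalytic-phase-gadget}: apply the phase once on $M_T(z+x)+c_T$ and once, inverted, on $M_Tz+c_T$, or more generally run the $2^t-1$ cycle of Lemma~\ref{lem:catalytic-phase-gadget} with the block playing the role of the work string. I expect this step to be the main obstacle. The phase monomials $y_{2i}y_{2i+1}x_T$ are not linear, so the difference $f(y+w)-f(y)$ does not reduce to $f(w)$.

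The cleaner route I would try first is to avoid copying altogether. Apply $M_T$ catalytically to the input itself (restricted to the complement of $T$), then apply the $O(n)$ monomials $x_T y_{2i}y_{2i+1}$ (and $x_Ty_{2d_T}$, $x_T$) using Lemma~\ref{lem:catalytic-diagonal-monomials}, then apply $M_T^{-1}$. This handles the $T$'s sequentially, and the total is $O(n^{k-2})$ basis changes. To get parallelism, I would process $T$'s in groups of size $g$, giving each its own data copy obtained via $x\mapsto x+z$ fanout followed by the gadget trick. Because every step restores its workspace, the resource accounting should then read as follows. Each group uses $O(s'n)$ qubits per basis change with $s'$ chosen so that the total is $O(sn)$. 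The depth per group is $O(n/(s'\log n))$ for the CNOT parts plus $O(\log n)$ for the monomials.

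Balancing gives the stated depth $O\bigl(\log(n+1)+\tfrac{n^{k-1}}{s\log(n+1)}\bigr)$. The count is as follows. There are $O(n^{k-2})$ basis changes, each costing depth times width $O(n^2/\log n)$ in the Jiang et al.\ trade-off. The reduced polynomial has $N=O(n^{k-1})$ monomials, fed to Lemma~\ref{lem:catalytic-diagonal-monomials} with the same $s$, which contributes $O\bigl((1+N/(sn))\log n\bigr)$ depth and $O(N)$ size. The size $O(n^k/\log n)$ comes from $O(n^{k-2})$ CNOT circuits of size $O(n^2/\log n)$. Setting $s=n^{k-1}/\log^2 n$ yields depth $O(\log n)$ with $O(n^k/\log^2 n)$ catalytic qubits. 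Correctness on basis states, with a phase independent of the work string, follows by Lemma~\ref{lem:globalphase_basis_state}.
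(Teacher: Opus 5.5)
\textbf{Verdict: there is a genuine gap.} Your in-place route is correct, but it does not reach the claimed depth once $s$ exceeds $n/\log^2 n$ for $k\ge 3$, and your parallel route is left unresolved.

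\textbf{What works.} Apply $M_T$ (extended by the identity on $T$) to the input with Theorem~\ref{thm:catalytic-cnot}, add $c_T$ with $X$ gates, apply the $O(n)$ monomials of $x_T\tilde q_T$ with Lemma~\ref{lem:catalytic-diagonal-monomials}, then undo. This is exact and catalytic.

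\textbf{Where it falls short.} Every group needs the \emph{same} input register in its own basis, so in your scheme the $g=\Theta(n^{k-2})$ groups run one after another. The depth is $O\bigl(g\,n/(s'\log n)\bigr)$ with $s'=\min\{s,\lfloor n/\log^2 n\rfloor\}$. This matches the target only when $s\le n/\log^2 n$, or when $k=2$ (then $g=1$). For $k\ge 3$ and larger $s$ it cannot go below order $n^{k-2}\log n$, so it does not give the depth-$O(\log(n+1))$ statement.

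Your final accounting also feeds all $O(n^{k-1})$ reduced monomials to Lemma~\ref{lem:catalytic-diagonal-monomials} as a single polynomial. That is not available: they are monomials in $g$ different coordinate systems $y_T$, and that lemma's map $F$ only adds bits that currently sit in the input register.

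\textbf{Why your parallel fixes fail.} You flag the parallel step as the obstacle yourself, and neither suggestion works. Applying the phase to $M_T(z+x)+c_T$ and, inverted, to $M_Tz+c_T$ does not cancel $z$, because the phase is nonlinear. Running the cycle of Lemma~\ref{lem:catalytic-phase-gadget} with the $n$-qubit block as the work string costs $2^n-1$ rounds.

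\textbf{The missing idea.} The paper never applies a phase to a register holding transformed data. It observes that the proof of Lemma~\ref{lem:catalytic-phase-gadget}, via equation~\eqref{eq:phase-cycle-sum}, only needs the \emph{same} label $v$ to be added to the $k$-qubit block in every round. So $v$ may be any affine function of $x$, for example $v=(x_{i_1},\dots,x_{i_{k-2}},y_{2i,T},y_{2i+1,T})$.

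The block update is then $F\colon\tau\mapsto\tau+L_Tx+c'_T$. Here each row of $L_T$ is a unit vector for an index in $T$ or a row of $M_T$, and $c'_T$ collects the matching entries of $c_T$. This update is \emph{linear} in $x$, so the cancellation of Theorem~\ref{thm:catalytic-or} applies to the addition, not to the phase:
\begin{enumerate}
\item add $x$ to a dirty copy $u_T$;
\item add $L_T(u_T+x)$ into the blocks with the catalytic CNOT compiler;
\item remove $x$ from the copy;
\item add $L_Tu_T$ into the blocks again;
\item apply $X$ gates for $c'_T$.
\end{enumerate}
The net effect is $\tau+L_Tx+c'_T$, with $u_T$ restored.

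\textbf{Resulting bound.} Using this as $F$ in every round of Lemma~\ref{lem:catalytic-diagonal-monomials} lets $b=\min\{g,\lfloor s/s'\rfloor\}$ groups run simultaneously. The depth is then $O\bigl(\lceil g/b\rceil\, n/(s'\log n)\bigr)=O\bigl(\log n+n^{k-1}/(s\log n)\bigr)$, which is the claimed bound.
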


\begin{proof}

We start by decomposing the polynomial $p(x)$, where we write $x_T = \prod_{i \in T} x_i$. Assign each nonzero monomial to one of its subsets of size $k-2$, so that every monomial occurs in exactly one group. This gives
\[  
    p(x) = \sum_{T\in \mathcal T}x_Tq_T(x),\qquad g:= |\mathcal T|\leq \binom{n}{k-2} = O(n^{k-2}),
\]
where the $q_T(x)$ are homogeneous quadratic polynomials involving only variables outside $T$. When $k=2$, take $T=\emptyset$ and $x_T=1$. If $g=0$, the circuit is the identity. Any $q_T(x)$ consists naively of $O(n^2)$ monomials, so a naive implementation of $p(x)$ requires evaluating $O(n^k)$ different monomials. However, because $q_T(x)$ is a quadratic form, Lemma~\ref{lem:dicksons} gives the construction of a basis transformation that reduces the number of monomials in $q_T(x)$ to at most $\lfloor n/2 \rfloor + 1$. The additional term is either linear or constant. Let $y_T = M_Tx + c_T$, for $M_T \in GL(n,2)$ and $c_T \in \mathbb F^n_2$ and,
\[  
    \tilde{q}_T(y_T) = \sum_{i=0}^{r_T-1} y_{2i,T}y_{2i+1,T} + \delta_T y_{2r_T,T} + \epsilon_T,    
\]
with $r_T \leq n/2$, such that $q_T(x) = \tilde{q}_T(y_T)$ as in Lemma~\ref{lem:dicksons}. The linear term is omitted when $2r_T=n$.

Then if one has access to $y_T$ instead of $x$, the term $x_T\tilde{q}_T(y_T)$ consists of at most $r_T + 1$ monomials instead of $O(n^2)$.

We cannot write $y_T$ into a clean register, since all work qubits are catalytic. Writing it into a catalytic register gives $z + y_T$ for an unknown $z$, and applying the phase to that register would give $x_T\tilde{q}_T(z + y_T)$, which depends on $z$. Instead we add $y_T$ directly into the work blocks of the phase construction of Lemma~\ref{lem:catalytic-diagonal-monomials}. That construction gives every monomial its own block of $k$ catalytic qubits. It repeatedly applies a map $F$ that adds the variables of the monomial to its block, $\ket{\tau} \mapsto \ket{\tau + v}$. The proof of Lemma~\ref{lem:catalytic-phase-gadget} only uses that the label $v$ is the same in every round, not that it consists of input bits. For $T=\{i_1,\ldots,i_{k-2}\}$ and the monomial $x_T y_{2i,T} y_{2i+1,T}$, use the label
\[
    v=(x_{i_1},\ldots,x_{i_{k-2}},y_{2i,T},y_{2i+1,T}),
\]
whose last two entries are affine functions of $x$. Each factor of $x_T$ occupies a separate coordinate. The linear term $\delta_T x_T y_{2r_T,T}$ and the constant term $\epsilon_T x_T$ use the corresponding shorter lists, padded with zeros to length $k$.

For a term with $d\geq 1$ factors, use Lemma~\ref{lem:catalytic-phase-gadget} with $t=k$, $r=1$, and $a=1$. The local phase $Q$ acts on the first $d$ coordinates. For each fixed $x$, the same vector $v$ is added in every round, so~\eqref{eq:phase-cycle-sum} gives the required product independently of the initial work string. No independence assumption on the entries of $v$ is needed. If $k=2$ and $\epsilon_T=1$, the resulting constant phase $-1$ is applied directly using $(XZ)^2=-I$ on one input qubit.

For this addition, write $O$ for the phase blocks of one group and $W$ for the additional catalytic qubits. Let $C_T$ be the circuit that adds all labels of group $T$ to their blocks:
\[
    C_T\ket{x}_I\ket{\tau}_O\ket{w}_W = \ket{x}_I\ket{\tau + L_T x + c'_T}_O\ket{w}_W,
\]
where every row of $L_T$ is either a unit vector for an index in $T$ or a row of $M_T$, with zero rows for the padding, and $c'_T$ collects the corresponding entries of $c_T$. The matrix $L_T$ has $n$ columns and $O(n)$ rows. Its linear addition is represented by the invertible matrix
\[
    \begin{pmatrix}I&0\\ L_T&I\end{pmatrix}
\]
on $O(n)$ qubits. For sufficiently large $n$ and an integer $s'$ with $1\leq s'\leq n/\log^2 n$, Theorem~\ref{thm:catalytic-cnot} implements this addition in depth $O(n/(s'\log n))$ using $O(s'n)$ additional catalytic qubits.

All $C_T$ read the same input. To run them in parallel, give each active group an additional $n$-qubit register with arbitrary basis label $u_T$. First add $x$ to every $u_T$, apply $L_T$ from each copy to its phase blocks, add $x$ to every $u_T$ again, and apply $L_T$ once more. The copies return to $u_T$, and the phase labels become
\[
    \tau+L_T(u_T+x)+L_Tu_T=\tau+L_Tx.
\]
The unknown content of each copy enters twice and cancels because the map is linear. Now add $c'_T$ by local $X$ gates. The affine constant is added once, after the two linear additions. Each linear addition restores its additional catalytic qubits, so this implements $C_T$ exactly and restores the copies and all other work qubits outside the phase blocks.

For $b$ active groups, each addition of $x$ to the copies has depth $O(\log(b+1))$ and size $O(nb)$ by Lemma~\ref{lem:dirty-target-fanout}. The linear circuits act on disjoint registers and run in parallel.

We now allocate the available workspace between these groups. For sufficiently large $n$, choose
\[
    s':=\min\{s,\lfloor n/\log^2 n\rfloor\},
    \qquad
    b:=\min\{g,\lfloor s/s'\rfloor\}.
\]
Process at most $b$ groups at a time, reusing the work registers between batches. Since $bs'\leq s$, the phase blocks, the copies, and the additional registers for the linear circuits together use $O(sn)$ catalytic qubits.

We run the construction of Lemma~\ref{lem:catalytic-diagonal-monomials} on the polynomial
\[
    p(x) = \sum_{T \in \mathcal T} x_T \tilde{q}_T(y_T),
\]
which has $O(n^{k-1})$ products in the indicated coordinates, with $F$ implemented by the maps $C_T$ for the active groups. The phase blocks use $O(kn)$ catalytic qubits per group. Each round applies $F$, the local $Q^{-1}$ circuits, $F$ again, the local $Q$ circuits, and the binary linear maps $\mathcal A$, in this order. Each application of $F$ is completed before the next local operation begins. There are $2^k-1$ rounds, after which every work register is restored and can be reused for the next batch.

Since $k$ is fixed, $b\leq g=O(n^{k-2})$ and $n/(s'\log n)\geq\log n$, a batch has depth $O(n/(s'\log n))$, including the additions to the copies. The total depth is
\[
    O\!\left(\left\lceil\frac{g}{b}\right\rceil
        \frac{n}{s'\log n}\right)
    =O\!\left(\log n+\frac{gn}{s\log n}\right).
\]
Indeed, when $s\leq\lfloor n/\log^2 n\rfloor$, we have $s'=s$ and $b=1$: the groups are processed one at a time, using the available workspace for each linear circuit. For larger $s$, each group has depth $O(\log n)$, and additional workspace allows more groups to run in parallel. Using $g=O(n^{k-2})$ gives the depth bound.

Each linear addition has total width $O(s'n)$ and depth $O(n/(s'\log n))$, hence size $O(n^2/\log n)$. The additions to the copies and the operations on the phase blocks contribute $O(n)$ gates per group per round. Summing over the groups and the $2^k-1$ rounds gives size $O(gn^2/\log n)=O(n^k/\log n)$.

In particular, choosing $s$ of order $n^{k-1}/\log^2 n$ allows all groups to run in parallel. The workspace is dominated by the maps $C_T$, giving
\[
    \text{depth } O(\log n), \qquad
    \text{workspace } O\!\left(\frac{n^k}{\log^2 n}\right).
\]
The remaining finitely many values of $n$ can be implemented directly. Writing $\log(n+1)$ in the bounds makes them valid for all $n\geq 1$.
\end{proof}

\subsubsection{Final circuit}
The degree-$k$ terms in the diagonal normal form give the phases in Definition~\ref{def:many_constrol_phase}. There are fewer terms of smaller degree, and Lemma~\ref{lem:catalytic-diagonal-monomials} suffices to implement them within the same bounds.

\begin{theorem}[Catalytic implementation of diagonal elements]
\label{thm:catalytic-diagonal}
Fix $k\geq 2$, and let $s\geq 1$ be an integer. Let $D$ be given by~\eqref{eq:diagonal-normal-form}, with its nonzero coefficients specified. There is a circuit $\mathcal C_D$ over $G$ using $O(sn)$ catalytic qubits and no clean qubits such that
\[
    \mathcal C_D=D\otimes I_W.
\]
Its depth and size are, respectively,
\[
    O\!\left(\log(n+1)+\frac{n^{k-1}}{s\log(n+1)}\right)
    \qquad\text{and}\qquad
    O\!\left(\frac{n^k}{\log(n+1)}\right).
\]
In particular, there is an implementation of depth $O(\log(n+1))$ using $O(n^k/\log^2(n+1))$ catalytic qubits, with the same size bound.
\end{theorem}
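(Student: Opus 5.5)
We split the phase polynomial by monomial degree and implement the two parts one after the other, reusing a single catalytic register of size $O(sn)$. Write $p(x)=p_{<k}(x)+p_k(x)$, where $p_k$ collects the monomials with $|S|=k$. By Theorem~\ref{thm:diagonal-normal-form}, a degree-$k$ monomial has coefficient $2^{k-1}a_Sx_S/2^k$ with $a_S\in\{0,1\}$. Hence $e^{2\pi i p_k(x)/2^k}=(-1)^{\sum_S a_Sx_S}$, which is exactly a gate $D_k$ as in Definition~\ref{def:many_constrol_phase}. The remaining factor $D_{<k}$, given by $p_{<k}$, is a diagonal gate whose monomials all have degree at most $k-1$. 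Since diagonal gates commute and the two factors multiply to $D$, it suffices to give catalytic circuits $\mathcal C_1=D_k\otimes I_W$ and $\mathcal C_2=D_{<k}\otimes I_W$ on the same work register $W$. Each restores $W$ exactly, so their composition is $D\otimes I_W$. The equality holds as operators by Lemma~\ref{lem:globalphase_basis_state}, or directly from the operator identities in the lemmas.

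For $\mathcal C_1$ we invoke Lemma~\ref{lem:batched-high-degree-phases} with the given $s$. This already yields $O(sn)$ catalytic qubits, depth $O(\log(n+1)+n^{k-1}/(s\log(n+1)))$, and size $O(n^k/\log(n+1))$.

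For $\mathcal C_2$ we use Lemma~\ref{lem:catalytic-diagonal-monomials}. The number of nonzero monomials is $N\le\sum_{d=1}^{k-1}\binom nd=O(n^{k-1})$. For fixed $k$ the lemma gives depth $O((1+N/(sn))\log(n+1))=O(\log(n+1)+n^{k-2}\log(n+1)/s)$, size $O(N)=O(n^{k-1})$, and $O(sn)$ catalytic qubits.

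The one point needing care is that this depth is dominated by the bound for $\mathcal C_1$. We need $n^{k-2}\log(n+1)/s\lesssim \log(n+1)+n^{k-1}/(s\log(n+1))$. This is equivalent to $n^{k-2}\log^2(n+1)\lesssim n^{k-1}+s\log^2(n+1)$, i.e.\ $\log^2(n+1)\lesssim n$ after dividing out $n^{k-2}$ (the $s$ term only helps). That inequality holds for all $n\ge1$ up to a constant. We expect this comparison to be the main obstacle: the naive monomial construction pays a $\log n$ factor per batch, so the argument really depends on the lower-degree terms being a factor $n$ fewer. If the constant-factor bookkeeping misbehaves for small $n$, those finitely many $n$ can be absorbed into the constants via direct implementations, as in the proof of Lemma~\ref{lem:batched-high-degree-phases}. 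The sizes add to $O(n^k/\log(n+1))$ because $n^{k-1}=O(n^k/\log(n+1))$.

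Finally, for the particular case we choose $s=\Theta(n^{k-1}/\log^2(n+1))$, so that $n^{k-1}/(s\log(n+1))=O(\log(n+1))$. The total depth is then $O(\log(n+1))$ with $O(sn)=O(n^k/\log^2(n+1))$ catalytic qubits, and the size bound is unchanged since it does not depend on $s$. If this $s$ falls below $1$ for small $n$, we take $s=1$ instead, which is again a finite case handled directly.
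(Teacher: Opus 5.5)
Your proposal is correct and follows the paper's proof essentially step for step. The degree-$k$ terms go to Lemma~\ref{lem:batched-high-degree-phases} and the remaining $O(n^{k-1})$ monomials to Lemma~\ref{lem:catalytic-diagonal-monomials} with the same $s$; the extra depth is absorbed using $\log^2(n+1)=O(n)$, and the two circuits share the restored work register. Your explicit check that the degree-$k$ coefficients satisfy $a_S\in\{0,1\}$, so that they form exactly a gate $D_k$ as in Definition~\ref{def:many_constrol_phase}, is a detail the paper leaves implicit.
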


\begin{proof}
Apply Lemma~\ref{lem:batched-high-degree-phases} to the degree-$k$ terms. There are only $O(n^{k-1})$ remaining monomials, so Lemma~\ref{lem:catalytic-diagonal-monomials}, with the same parameter $s$, implements them in depth
\[
    O\!\left(\log(n+1)+\frac{n^{k-2}\log(n+1)}{s}\right)
\]
and size $O(n^{k-1})$, using $O(sn)$ catalytic qubits. Since $\log^2(n+1)=O(n)$, these costs are within the bounds for the degree-$k$ terms. Both circuits act as the identity on the work register, so they can reuse it. Their product is $D\otimes I_W$, and their depths and sizes add. Choosing $s$ of order $n^{k-1}/\log^2(n+1)$ gives the logarithmic-depth implementation.
\end{proof}

\subsubsection{Lower bounds}
The following bounds show that logarithmic depth is necessary and that the workspace in Theorem~\ref{thm:catalytic-diagonal} at this depth is optimal in the worst case, for fixed $k$.

\begin{lemma}
\label{lem:diagonal-lower-bounds}
Fix $k\geq 2$ and let $s$ be an integer with $1\leq s\leq n^{k-1}/\log^2(n+1)$. Suppose every $n$-qubit diagonal unitary in~\eqref{eq:diagonal-normal-form} has an exact catalytic implementation over the finite gate set $G$, of depth at most $h$ and using at most $sn$ catalytic qubits. Then
\[
    h\geq\lceil\log_2 n\rceil,
    \qquad
    h=\Omega\!\left(\frac{n^{k-1}}{s\log(n+1)}\right).
\]
In particular, the depth in Theorem~\ref{thm:catalytic-diagonal} is optimal up to constant factors, and depth $O(\log(n+1))$ requires $\Omega(n^k/\log^2(n+1))$ catalytic qubits in the worst case.
\end{lemma}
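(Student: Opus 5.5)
The two bounds need different arguments. The first is a light-cone argument; the second counts how many distinct circuits of bounded depth and width exist.

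\textbf{The $\lceil\log_2 n\rceil$ bound.} Consider the degree-$k$ diagonal gate whose phase depends on every input. Concretely, take $p(x)=\sum_{S} x_S$ over all $S$ with $|S|=k$. Its phase $(-1)^{p(x)}$ depends nontrivially on every coordinate, so a single NOT on any input changes the phase on some basis state. Then:
\begin{itemize}
\item Because $\mathcal C_D=D\otimes I_W$ and $D$ is diagonal but not a product of one-qubit unitaries, $D$ entangles or correlates all $n$ input qubits.
\item Concretely, conjugating $X_1$ by $D$ gives an operator that acts nontrivially on every input qubit $j$. To see this, $D X_1 D^\dagger = X_1\cdot(\text{phase } (-1)^{p(x)+p(x+e_1)})$, and the derivative $\partial_1 p$ depends on $x_j$ for each $j\ne 1$ once $k\ge2$ and $n\ge k$.
\item Since $X_1\otimes I_W$ is mapped to an operator supported on all inputs, the forward light cone of qubit $1$ must reach all $n$ input qubits.
\item With gates on at most two qubits, the light cone at most doubles per layer, so $2^h\ge n$.
\end{itemize}
If $G$ contains no gates on more than two qubits this is immediate. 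Otherwise I would replace $2$ by the maximum gate arity, which only changes the constant.

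\textbf{The $\Omega(n^{k-1}/(s\log n))$ bound.} This is a counting argument in the style of Theorem~\ref{thm:jiang-lowerbound}.
\begin{enumerate}
\item Count the target unitaries. Distinct choices of the bits $a_S$ for $|S|=k$ give distinct unitaries even up to global phase, by the uniqueness argument after Theorem~\ref{thm:diagonal-normal-form}. So there are at least $2^{\binom nk}=2^{\Omega(n^k)}$ of them.
\item Bound the number of circuits. A depth-$h$ circuit over the finite set $G$ on $w=n+sn\le 2sn$ qubits is determined by choosing, in each layer, a partial matching of qubits into gate slots together with labels from $G$. This gives at most $(c\,w)^{c w}$ choices per layer, hence at most $2^{O(h\, sn\log(sn))}$ circuits.
\item Use exactness. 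Each circuit equals $D\otimes I_W$ for at most one $D$, so distinct targets need distinct circuits. Therefore
\[
h\, sn\log(sn)=\Omega(n^k).
\]
\end{enumerate}

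\textbf{Simplifying the logarithm.} By the hypothesis $s\le n^{k-1}/\log^2(n+1)$, we have $\log(sn)\le k\log(n+1)=O(\log n)$ for fixed $k$. This yields $h=\Omega(n^{k-1}/(s\log n))$. The main obstacle is making sure that this logarithm comes out as $\log n$ and not $\log(sn)$ with an uncontrolled $s$; that is exactly what the stated upper bound on $s$ guarantees.

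\textbf{The "in particular" claims.} Setting $h=O(\log n)$ in the second bound forces $sn=\Omega(n^k/\log^2 n)$. Comparing both bounds with Theorem~\ref{thm:catalytic-diagonal} shows that the depth there is optimal up to constant factors.
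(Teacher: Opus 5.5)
Your proposal is correct and follows the same route as the paper: a support-doubling light-cone argument applied to $D X_1 D^\dagger$ gives the $\lceil\log_2 n\rceil$ bound, and counting depth-$h$, width-$O(sn)$ circuits over $G$ against the $2^{\Theta(n^k)}$ distinct diagonal gates gives the second bound, with $sn\le n^k$ keeping the logarithm at $O(\log n)$. The one difference is your light-cone witness $p=\sum_{|S|=k}x_S$ in place of the paper's $\prod_{j=2}^n\mathrm{CZ}_{1,j}$: yours works only for $n\ge k$, since for $2\le n<k$ your $p$ is identically zero, whereas the $\mathrm{CZ}$ star lies in $C_2\subseteq C_k$ and gives $D X_1 D^\dagger=X_1\prod_{j\ge 2}Z_j$ for every $n$, so you should use it to obtain the exact inequality for all $n$.
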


\begin{proof}
The logarithmic depth is necessary in the worst case, even with an
arbitrarily large catalytic register. Indeed,
\[
    D=\prod_{j=2}^n\mathrm{CZ}_{1,j}\in C_2\subseteq C_k
    \qquad\text{satisfies}\qquad
    DX_1D^\dagger=X_1\prod_{j=2}^n Z_j.
\]
A layer of disjoint two-qubit gates at most doubles the support of an
operator. A circuit implementing $D\otimes I_W$ must therefore have
depth at least $\lceil\log_2 n\rceil$.

A counting argument gives a corresponding comparison for workspace.
For fixed $k$, uniqueness of the coefficients in
\eqref{eq:diagonal-normal-form} gives
\[
    2^{\sum_{d=1}^{\min(k,n)}(k-d+1)\binom nd}
    =2^{\Theta(n^k)}
\]
distinct diagonal unitaries. The constants in this argument may depend
on $k$. A layer on $w$ qubits over the finite gate set $G$ has at most
$\exp(O(w\log(w+1)))$ possible choices. Thus, to implement every
such diagonal element exactly with depth at most $h$ and width at most
$w$, we need
\[
    hw\log(w+1)=\Omega(n^k).
\]
With $sn$ catalytic qubits, the total width is $w=(1+s)n\leq 2sn$, which
gives the stated depth bound. For $s\leq\mathrm{poly}(n)$ we have
$\log(n+sn+1)=O(\log(n+1))$, so this matches the depth in
Theorem~\ref{thm:catalytic-diagonal}. At depth $h=O(\log(n+1))$, it gives the
worst-case lower bound $sn=\Omega(n^k/\log^2(n+1))$. If $w>n^k$ this is
immediate, and otherwise $\log(w+1)=O(\log n)$.

\end{proof}

\subsection{Semi-Clifford circuits}
The diagonal construction also applies to unitaries that become diagonal
after multiplication by Clifford operators, known as semi-Clifford
unitaries.

\begin{definition}[Semi-Clifford unitary]
An $n$-qubit unitary is semi-Clifford if conjugation by it maps some maximal
abelian subgroup of $P_n$ to another maximal abelian subgroup of $P_n$.
\end{definition}

Equivalently, a unitary $U$ is semi-Clifford when there are Clifford unitaries $\mathcal C_1,\mathcal C_2$ and a diagonal unitary $D$
such that
\[
    U=\mathcal C_1D\mathcal C_2,
\]
as shown in~\cite[Proposition~1]{ZengChenChuang2008}. We call the Clifford unitaries $\mathcal C_1$ and $\mathcal C_2$ in
$U=\mathcal C_1D\mathcal C_2$ the Clifford factors of the decomposition.
If $U\in C_k$, then $D=\mathcal C_1^\dagger U\mathcal C_2^\dagger$ also belongs to $C_k$, since multiplication on either side by a Clifford unitary preserves membership in $C_k$~\cite[Proposition~3]{ZengChenChuang2008}. Any overall phase of $D$ can be absorbed into $\mathcal C_1$, so we
may write $D$ in the form~\eqref{eq:diagonal-normal-form}.

\begin{corollary}[Catalytic implementation of semi-Clifford gates]
\label{cor:catalytic-semi-clifford}
Fix $k\geq 2$, and let $s\geq 1$ be an integer.
Let an $n$-qubit semi-Clifford unitary $U\in C_k$ be given by a
decomposition $U=\mathcal C_1D\mathcal C_2$, with the two Clifford
factors and the coefficients of $D$ specified. There is a
catalytic implementation over $G$ using $O(sn)$ catalytic qubits
and no clean qubits, with depth
\[
    O\!\left(\log(n+1)+\frac{n^{k-1}}{s\log(n+1)}\right)
\]
and size
\[
    O\!\left(\frac{n^k}{\log(n+1)}\right).
\]
It acts as $U\otimes I_W$ up to one global phase independent of
the input and catalyst.

In particular, there is an implementation of depth
$O(\log(n+1))$ using $O(n^k/\log^2(n+1))$ catalytic qubits,
with the same size bound.
\end{corollary}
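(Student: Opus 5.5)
The plan is to implement $U=\mathcal C_1D\mathcal C_2$ as the sequential composition of three catalytic circuits that share one work register. First apply a catalytic implementation of $\mathcal C_2$, then one of $D$, then one of $\mathcal C_1$. Each circuit acts as (a global phase times) its target unitary tensored with $I_W$, so the work register is restored after every stage and can be reused. The width is therefore the maximum of the three workspaces, and depth and size add.

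For the diagonal factor, $D\in C_k$ is given in the normal form~\eqref{eq:diagonal-normal-form} with specified coefficients; any leftover overall phase is absorbed into $\mathcal C_1$. Theorem~\ref{thm:catalytic-diagonal} then gives $\mathcal C_D=D\otimes I_W$ exactly, using $O(sn)$ catalytic qubits, depth $O(\log(n+1)+n^{k-1}/(s\log(n+1)))$, and size $O(n^k/\log(n+1))$.

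For the Clifford factors, I would invoke Lemma~\ref{cor:catalytic-clifford} with parameter $s'':=\min\{s,\lfloor n/\log^2 n\rfloor\}$, handling bounded $n$ directly. This gives circuits equal to $e^{i\theta_j}(\mathcal C_j\otimes I_W)$ with depth $O(n/(s''\log n))$ and $O(s''n)\subseteq O(sn)$ catalytic qubits. Two cases settle the depth. If $s\le n/\log^2 n$, the depth is $O(n/(s\log n))$, which is at most $O(n^{k-1}/(s\log n))$ since $k\ge2$. Otherwise $s''=\Theta(n/\log^2 n)$ and the depth is $O(\log n)$. In both cases the Clifford depth fits within the claimed bound.

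For size, each Clifford circuit has width $O(sn)$ but its size is at most width times depth. Using the bounds of Lemma~\ref{cor:catalytic-clifford}, this is $O(s''n\cdot n/(s''\log n))=O(n^2/\log n)$. That is within $O(n^k/\log(n+1))$.

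The step needing the most care is the phase bookkeeping. The Clifford stages are correct only up to a global phase, so I need the combined circuit to equal $e^{i(\theta_1+\theta_2)}(U\otimes I_W)$ with the phase independent of the input and catalyst. This holds because each stage is an operator identity of the form $e^{i\theta}(V\otimes I_W)$. Composing such identities gives $e^{i(\theta_1+\theta_2)}(\mathcal C_1D\mathcal C_2\otimes I_W)$, and by Lemma~\ref{lem:globalphase_basis_state} this preserves entanglement of $W$ with external registers. Since no stage is applied conditionally, the phases remain global. Finally, to get depth $O(\log(n+1))$, choose $s$ of order $n^{k-1}/\log^2(n+1)$, which yields $O(n^k/\log^2(n+1))$ catalytic qubits.
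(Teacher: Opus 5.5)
Your proposal is correct and follows the paper's proof essentially step for step. It uses the same sequential composition $\mathcal C_2$, $D$, $\mathcal C_1$ on a shared work register, and the same parameter $\min\{s,\lfloor n/\log^2 n\rfloor\}$ for the Clifford factors, with bounded $n$ handled directly. It also uses the same width-times-depth size bound $O(n^2/\log n)$ and Theorem~\ref{thm:catalytic-diagonal} for $D$; your two-case depth analysis and phase bookkeeping simply spell out what the paper states more briefly.
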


\begin{proof}
For sufficiently large $n$, Lemma~\ref{cor:catalytic-clifford}, with parameter
\[
    s':=\min\{s,\lfloor n/\log^2 n\rfloor\},
\]
gives a circuit for each Clifford factor of depth
\[
    O\!\left(\log n+\frac{n}{s\log n}\right)
\]
using $O(s'n)$ catalytic qubits. Since gates in each layer act on disjoint qubits, multiplying the depth $O(n/(s'\log n))$ by the total width $O(s'n)$ gives size $O(n^2/\log n)$. For the remaining finitely many values of $n$, the Clifford factors can be implemented directly with constant size and depth. Writing $\log(n+1)$ in place of $\log n$ makes the bounds valid for all $n\geq 1$.

Together with the circuit for $D$ from Theorem~\ref{thm:catalytic-diagonal}, these circuits implement $U$ by applying $\mathcal C_2$, $D$, and $\mathcal C_1$ in this order. Each circuit implements the required unitary and acts as the identity on the work register, up to a global phase independent of the input and work state. The three circuits can thus share a work register large enough for the largest workspace requirement, leaving any unused qubits untouched. Their depths and sizes add. For fixed $k\geq 2$, the bounds for $D$ cover the costs of both Clifford factors, giving the claimed bounds. Choosing $s$ of order $n^{k-1}/\log^2(n+1)$ gives the logarithmic-depth implementation.
\end{proof}

For a diagonal factor with only $N$ nonzero coefficients,
Lemma~\ref{lem:catalytic-diagonal-monomials} gives another implementation.
With $O(sn)$ catalytic qubits, its depth and size are
\[
    O\!\left(
        \frac{n}{s\log(n+1)}
        +\left(1+\frac{N}{sn}\right)\log(n+1)
    \right)
    \qquad\text{and}\qquad
    O\!\left(k^2 2^kN+\frac{n^2}{\log(n+1)}\right).
\]
These bounds can be smaller when the normal form has few
nonzero coefficients.

If all $N$ monomials are implemented in parallel and each Clifford factor has depth $O(\log(n+1))$, the common work register requires only
\[
    O\!\left(\max\left\{kN,\frac{n^2}{\log^2(n+1)}\right\}\right)
\]
catalytic qubits. The maximum suffices because the three circuits use the register in turn.

\section*{Acknowledgements}
Strelchuk acknowledges support from the Wellcome Leap as part of the Q4Bio Program and the Royal Society University Research Fellowship. Subramanian acknowledges support from the Royal Society through a University Research Fellowship. 
Folkertsma acknowledges funding from the Quantum Software Consortium.

\printbibliography
\end{document}